\newcommand{\p}{\partial}
\newcommand{\ord}{\mathop{\rm ord}\nolimits}
\newcommand{\const}{\mathop{\rm const}\nolimits}
\newcommand{\sgn}{\mathop{\rm sgn}\nolimits}
\newcommand{\lsemioplus}{\mathbin{\mbox{$\lefteqn{\hspace{.70ex}\rule{.4pt}{1.2ex}}{\in}$}}}
\newcommand{\todo}[1][\null]{\ensuremath{\clubsuit}}
\newcommand{\noprint}[1]{}
\newtheorem{theorem}{Theorem}
\newtheorem{corollary}[theorem]{Corollary}
\newtheorem{proposition}[theorem]{Proposition}
{\theoremstyle{definition}

}
\begin{document}

\noindent
{\LARGE\bf Enhanced symmetry analysis\\ of two-dimensional Burgers system\par}

\vspace{4mm}

\noindent
Stavros Kontogiorgis$^\dag$, Roman O. Popovych$^\ddag$ and Christodoulos Sophocleous$^\S$

\vspace{4mm}

\noindent{\it
\hbox to 3 mm{$^{\dag\S}$\hfil}Department of Mathematics and Statistics, University of Cyprus, Nicosia CY 1678, Cyprus\\[1mm]
\hbox to 3 mm{$^\ddag$\hfil}Wolfgang Pauli Institut, Oskar-Morgenstern-Platz 1, 1090 Wien, Austria\\
\hbox to 3 mm{\hfil}Institute of Mathematics of NAS of Ukraine, 3 Tereshchenkivs'ka Str., 01004 Kyiv, Ukraine

}

\vspace{4mm}

{\noindent E-mails:  $^\dag$kontogiorgis.stavros@ucy.ac.cy, $^\ddag$rop@imath.kiev.ua, $^\S$christod@ucy.ac.cy}

\vspace{5mm}\par\noindent\hspace*{5mm}\parbox{150mm}{\small
We carry out enhanced symmetry analysis of a two-dimensional Burgers system.
The complete point symmetry group of this system is found using an enhanced version of the algebraic method.
Lie reductions of the Burgers system are comprehensively studied in the optimal way
and new Lie invariant solutions are constructed.
We prove that this system admits no local conservation laws
and then study hidden conservation laws, including potential ones.
Various kinds of hidden symmetries (continuous, discrete and potential ones) are considered for this system as well.
We exhaustively describe the solution subsets of the Burgers system
that are its common solutions with its inviscid counterpart and with the two-dimensional Navier--Stokes equations.
Using the method of differential constraints, which is particularly efficient for the Burgers system,
we construct a number of wide families of solutions of this system
that are expressed in terms of solutions of the (1+1)-dimensional linear heat equation
although they are not related to the well-known linearizable solution subset of the Burgers system.
\par\vspace{5mm}}

\noindent{\footnotesize
Keywords:
two-dimensional Burgers system;
Lie symmetries;
Lie reductions;
conservation laws;
exact solutions;
linearization;
hidden symmetries;
discrete symmetries;
Burgers equation
\par}

\noprint{MSC:
35-XX   Partial differential equations
 35Bxx  Qualitative properties of solutions
  35B06   Symmetries, invariants, etc.
 35Cxx  Representations of solutions
  35C05   Solutions in closed form
76-XX   Fluid mechanics {For general continuum mechanics, see 74Axx, or other parts of 74-XX}
 76Mxx  Basic methods in fluid mechanics [See also 65-XX]
  76M60   Symmetry analysis, Lie group and algebra methods
}

\section{Introduction}

In the last decades a lot of attention has been paid to study various generalizations of the Burgers equation \cite{sach1987a}.
If we ignore the pressure gradient terms from the momentum equations in the incompressible Navier--Stokes equations,
where the kinematic viscosity is set, without loss of generality, to be equal 1,
we obtain the nonlinear system
\begin{gather}\label{eq:BurgersSystem}
\begin{split}
&u_t+uu_x+vu_y-u_{xx}-u_{yy}=0,\\
&v_t+uv_x+vv_y-v_{xx}-v_{yy}=0,
\end{split}
\end{gather}
which is known as the two-dimensional Burgers system. 
This system and its three-dimensional counterpart were first considered in \cite[Eq.~(61)]{cole1951} 
as a multidimensional analogue of the famous Burgers equation.
We point out that solutions of the system~\eqref{eq:BurgersSystem} do not necessarily satisfy the continuity equation.
The system~\eqref{eq:BurgersSystem} has been considered, for example, in \cite{hlav1983a,sale1987a} and
certain underlying geometric and group theoretical properties were discussed.

It is well known that the Hopf--Cole transformation relates the Burgers equation and the linear heat equation \cite{cole1951,hopf1950}.
Note that this linearization had implicitly been presented earlier in \cite[p.\,102, Exercise~3]{fors1906}.
The Hopf--Cole transformation can be generalized for the above multidimensional analogues of the Burgers equation \cite[Eq.~(63)]{cole1951}, 
see also~\cite[Eq.~(2.22)]{ames1965} and, for a wider context, \cite{broa2015c}.
In space dimension two, a generalization of the Hopf--Cole transformation~is
\begin{gather}\label{eq:2DHopfColeTrans}
u=-2\frac{\phi_x}\phi,\quad v=-2\frac{\phi_y}\phi.
\end{gather}
It reduces the system~\eqref{eq:BurgersSystem} with the additional differential constraint
\begin{gather}\label{eq:BurgersSystemLinearizationConstraint}
u_y=v_x,
\end{gather}
meaning that the flow $(u,v)$ is irrotational,
to the single (1+2)-dimensional linear heat equation
\begin{gather}\label{eq:(1+2)DLinHeatEq}
\phi_t-\phi_{xx}-\phi_{yy}=0.
\end{gather}
In the present paper, Hopf--Cole-type transformations are also derived for certain reduced systems of~\eqref{eq:BurgersSystem}.

Under the other constraint $v=0$,
the second equation of the system~\eqref{eq:BurgersSystem} is satisfied identically,
and its first equation reduces to a (1+2)-dimensional generalization of the Burgers equation,
\begin{gather}\label{(1+2)DBurgersEq}
u_t+uu_x-u_{xx}-u_{yy}=0,
\end{gather}
which was derived in~\cite{raja2008a} as an equation for the wave phase of two-dimensional sound simple waves in weakly dissipative flows.
Therein, symmetry analysis of this equation was carried out,
which included the first exhaustive study of its Lie reductions in an optimized way
and the construction of several new families of its exact solutions.
For the first time, the equation~\eqref{(1+2)DBurgersEq} had formally appeared in~\cite{edwa1995a},
where its maximal Lie invariance algebra had been computed
and its two-step Lie reductions to ODEs had preliminarily been considered.
The equation~\eqref{(1+2)DBurgersEq} was also singled out in~\cite{deme2008a}
in the course of the group classification of (1+2)-dimensional diffusion--convection equations.

Transformation properties of evolution equations and systems have been widely studied
because of many practical benefits that such knowledge gives
and also because of the variety of physical applications that are modeled by these equations.
Particularly useful in the study of a partial differential equation is 
the knowledge of the corresponding Lie (pseudo)group of point symmetries.
While there is no existing general theory for solving nonlinear PDEs,
methods of group analysis of differential equations have been proved to be very powerful.
Lie symmetries of the system~\eqref{eq:BurgersSystem} were considered in several papers \cite{abdu2016a,elsa2014a,tami1991a}.
Although the maximal Lie invariance algebra~$\mathfrak g$ of this system had already been found accurately in~\cite{tami1991a},
the further studies of Lie reductions of the system~\eqref{eq:BurgersSystem} were incomplete or even essentially incorrect.
Optimal lists of subalgebras of the algebra~$\mathfrak g$ were not constructed,
and the used sets of subalgebras had a lot of significant weaknesses.

In the present paper, we carry out enhanced symmetry analysis of the two-dimensional Burgers system~\eqref{eq:BurgersSystem}.
The objects, structures and properties studied in the paper include not only Lie reductions and invariant solutions
but also the complete point symmetry group of the original system~\eqref{eq:BurgersSystem};
the maximal Lie invariance algebras, the complete point symmetry groups and local conservation laws of various reduced systems;
hidden (continuous, discrete and potential) symmetries;
hidden (local and potential) conservation laws;
the solution subsets of the system~\eqref{eq:BurgersSystem} 
that are its common solutions with the inviscid Burgers system and with the two-dimensional Navier--Stokes equations;
solutions affine in the space variables; and
subsets of solutions that are expressed in terms of solutions of the (1+1)-dimensional linear heat equation.

The structure of the maximal Lie invariance algebra of the system~\eqref{eq:BurgersSystem} is discussed
in Section~\ref{sec:BurgersSystemMLIAlgAndCompleteGroup}.
Therein we compute the complete point symmetry group~$G$ of this system and its discrete symmetries
using the algebraic method, which is enhanced with early factoring out the inner automorphisms
related to Levi factors.
In Section~\ref{sec:SymsOfLinearizableSubsetOfSolutions} we prove that
the complete point symmetry group of the joint system of~\eqref{eq:BurgersSystem}
with the differential constraint~\eqref{eq:BurgersSystemLinearizationConstraint},
which singles out the widest linearizable solution subset of~\eqref{eq:BurgersSystem},
coincides with the group~$G$.
The study of Lie reductions of the system~\eqref{eq:BurgersSystem} is based on
constructing optimal lists of one- and two-dimensional subalgebras of its Lie symmetry algebra
in Section~\ref{sec:SubalgebrasOfLieInvarianceAlgebra}.
In order to optimize the construction of Lie submodel hierarchy for the system~\eqref{eq:BurgersSystem},
we exploit a special reduction technique,
which was systematically used in~\cite{fush1994a,fush1994b} and discussed in~\cite{popo1995b}.
This technique is to choose reduction ansatzes among possible ones in such a~way
that reduced systems are of simple and unified forms
being similar to the form of the origin system as much as possible.
Listed inequivalent one-dimensional subalgebras lead to Lie reductions
where each reduced system consists of two PDEs in two independent variables,
see Section~\ref{sec:BurgersSystemLiereductionsOfCodim1}.
In Section~\ref{sec:SymmetryAnalysisOfReducedSystemsOfPDEs}
we study Lie symmetries of all obtained reduced systems of PDEs.
The reduced system for solutions invariant with respect to space shifts
is linearized by a Hopf--Cole-type transformation
to uncoupled system of two copies of the (1+1)-dimensional linear heat equation,
and Lie symmetries of the other reduced system are induced by Lie symmetries
of the original system~\eqref{eq:BurgersSystem}.
This is why we do not need to consider further Lie reductions of reduced systems of PDEs.
Instead of two-step Lie reductions, it is more advantageous
to reduce the system~\eqref{eq:BurgersSystem} directly to systems of ODEs
using the optimal list of two-dimensional subalgebras.
At the same time, essential are only Lie reductions with respect to two-dimensional subalgebras
that, up to $G$-equivalence, contain no vector fields of space shifts.
Inequivalent essential reduced systems of ODEs are collected in Section~\ref{sec:BurgersSystemLieReductionOfCodim2}.
The Burgers system~\eqref{eq:BurgersSystem} admits no useful hidden symmetries related to these reductions.
New exact stationary similarity solutions of~\eqref{eq:BurgersSystem} are constructed
in Section~\ref{sec:SolutionOfReducedSystemsOfODEs}.
In Section~\ref{sec:BurgersSystemCLs}
we prove that the system~\eqref{eq:BurgersSystem} admits no local conservation laws.
Then we discuss local and potential conservation laws of various submodels related to this system.
The solutions of the Burgers system~\eqref{eq:BurgersSystem}
that can be prolonged to solutions of the (1+2)-dimensional Navier--Stokes equations
are comprehensively described in Section~\ref{sec:BurgersSystemSolutionsCommonWithNavier-StokesEqns}.
In Section~\ref{sec:OneMoreReductionToSingle(1+2)DPDE} we study
a new non-Lie reduction of the system~\eqref{eq:BurgersSystem}
to a single (1+2)-dimensional PDE, which is associated with the differential constraint $u_x=v_y$.
We compute the complete point symmetry groups
of the joint system of~\eqref{eq:BurgersSystem} with this constraint
and of the reduced equation.
Looking for exact solutions of the reduced equation results in finding
more stationary similarity solutions of~\eqref{eq:BurgersSystem},
one more family of solutions of~\eqref{eq:BurgersSystem}
that is parameterized by two arbitrary nonvanishing solutions of the (1+1)-dimensional linear heat equation,
and solutions of~\eqref{eq:BurgersSystem} expressed via
the general solution of a simple complex Hamilton--Jacobi equation.
All solutions of the Burgers system~\eqref{eq:BurgersSystem} that are affine in the space variables
are listed in Section~\ref{sec:SolutionsAffineInSpaceVars}.
Section~\ref{sec:CommonSolutionsOfViscidAndInviscidBurgersSystems} is devoted to
the exhaustive description of common solutions of the `viscid' and  `inviscid' (1+2)-dimensional Burgers systems.
In Section~\ref{sec:BurgersSystemMoreSolutionsWithMethodOfDiffConstraints}
we consider the system~\eqref{eq:BurgersSystem} with various differential constraints,
$v=0$, $v_x=0$, $u_{xx}=v_x=0$ and $u_{xx}=v=0$, respectively,
which allows us to construct new wide linearizable subsets of solutions of~\eqref{eq:BurgersSystem}.
Among these subsets, there are families parameterized by one or two arbitrary solutions of the (1+1)-dimensional linear heat equation.
Analyzing results of the present paper in Section~\ref{sec:Conclusion},
we discuss which optimized tools were used to obtain these results.

\section{Lie invariance algebra and complete point symmetry group}\label{sec:BurgersSystemMLIAlgAndCompleteGroup}

The classical approach for deriving Lie symmetries is well known and established in the last decades \cite{BlumanBook1989,olve1993b,ovsiannikov1982}.
The maximal Lie invariance algebra of the Burgers system~\eqref{eq:BurgersSystem}, which was found in~\cite{tami1991a},
is the so-called reduced (i.e., centerless) special Galilei algebra~\cite{FushchychBarannykBarannyk} with space dimension two
\begin{gather*}
\mathfrak g=\langle P^t,D,\Pi,J,P^x,P^y,G^x,G^y\rangle,
\\
\hspace*{-\mathindent}\mbox{where}
\\
P^t=\p_t, \quad
D=2t\p_t+x\p_x+y\p_y-u\p_u-v\p_v,
\\
\Pi=t^2\p_t+tx\p_x+ty\p_y+(x-tu)\p_u+(y-tv)\p_v,\quad
J=x\p_y-y\p_x+u\p_v-v\p_u,
\\
P^x=\p_x, \quad
P^y=\p_y, \quad
G^x=t\p_x+\p_u,\quad
G^y=t\p_y+\p_v.
\end{gather*}
The nonzero commutation relations of~$\mathfrak g$ are
\begin{gather*}
[P^t,D]=2P^t,\quad
[D,\Pi]=2\Pi,\quad
[P^t,\Pi]=D,
\\
[P^x,D]=P^x,\quad
[P^y,D]=P^y,\quad
[P^x,\Pi]=G^x,\quad
[P^y,\Pi]=G^y,
\\
[P^t,G^x]=P^x,\quad
[P^t,G^y]=P^y,\quad
[D,G^x]=G^x,\quad
[D,G^y]=G^y,
\\
[P^x,J]=P^y,\quad
[P^y,J]=-P^x,\quad
[G^x,J]=G^y,\quad
[G^y,J]=-G^x.
\end{gather*}

The Levi decomposition of the algebra~$\mathfrak g$ is
$
\mathfrak g=\langle P^t,D,\Pi\rangle\lsemioplus\langle J,P^x,P^y,G^x,G^y\rangle.
$
Here the subalgebra $\mathfrak f=\langle P^t,D,\Pi\rangle$ is a Levi factor of~$\mathfrak g$,
which is a~realization of the algebra ${\rm sl}(2,\mathbb R)$.
The radical~$\mathfrak r=\langle J,P^x,P^y,G^x,G^y\rangle$ of~$\mathfrak g$
is a~realization of an almost abelian algebra.
More specifically, $\mathfrak r=\mathfrak c\lsemioplus\mathfrak n$,
where $\mathfrak n=\langle P^x,P^y,G^x,G^y\rangle$ is the nilradical
(as well as the only maximal abelian ideal)
of both~$\mathfrak r$ and~$\mathfrak g$,
and the span $\mathfrak c=\langle J\rangle$ turns out a Cartan subalgebra of~$\mathfrak r$.
By~${\rm pr}_{\mathfrak f}$ and~${\rm pr}_{\mathfrak c}$ we denote the projectors defined by
the decomposition $\mathfrak g=\mathfrak f\lsemioplus(\mathfrak c\lsemioplus\mathfrak n)$.

The complete list of proper ideals of the algebra~$\mathfrak g$
is exhausted by the radical~$\mathfrak r$, the nilradical~$\mathfrak n$
and the derivative $\mathfrak g'=\mathfrak f\lsemioplus\mathfrak n$ of~$\mathfrak g$.
Each of these ideals is a  megaideal (i.e., a fully characteristic ideal)
of the algebra~$\mathfrak g$.
For~$\mathfrak g'$ this claim is obvious and
for~$\mathfrak r$ and~$\mathfrak n$ it follows from the fact that
the radical (resp.\ the nilradical) of a Lie algebra is
the unique maximal solvable (resp.\ nilpotent) ideal of this algebra
and hence this ideal is mapped by each automorphism of this algebra onto itself.

\begin{theorem}\label{thm:BurgersSystemCompletePointSymGroup}
The complete point symmetry group~$G$ of the Burgers system~\eqref{eq:BurgersSystem}
is generated by one-parameter groups associated with vector fields from the algebra~$\mathfrak g$
and one discrete transformation of simultaneous mirror mappings in the $(x,y)$- and $(u,v)$-planes,~e.g.,
\[
(t,x,y,u,v)\mapsto(t,-x,y,-u,v).
\]
\end{theorem}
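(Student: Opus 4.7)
The strategy I would follow is the algebraic method: every point symmetry $\Phi$ of the Burgers system preserves its maximal Lie invariance algebra, and hence its pushforward $\Phi_*$ induces an automorphism $A_\Phi\in\mathrm{Aut}\,\mathfrak g$. The proof therefore proceeds by determining the image of the map $G\to\mathrm{Aut}\,\mathfrak g$, $\Phi\mapsto A_\Phi$, modulo inner automorphisms already realized by the identity component $G^\circ$, and then checking which of the surviving cosets are induced by genuine point transformations of $(t,x,y,u,v)$-space.

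I would first write a candidate $A_\Phi$ as a linear map on $\mathfrak g$ in the listed basis and impose the constraints that each of the megaideals $\mathfrak g'$, $\mathfrak r$, $\mathfrak n$ described just above the theorem is $A_\Phi$-invariant. By Malcev's theorem, after composing $A_\Phi$ with a suitable inner automorphism coming from $\exp\mathfrak n\subset G^\circ$, I may additionally assume $A_\Phi(\mathfrak f)=\mathfrak f$. The enhancement advertised in the introduction is to immediately exploit the inner automorphisms coming from the Levi factor itself — that is, the one-parameter groups generated by $P^t$, $D$, $\Pi$ — to trivialize the $\mathfrak f$-block of $A_\Phi$ up to the single nontrivial coset of $\mathrm{Inn}\,\mathfrak f$ in $\mathrm{Aut}\,\mathfrak f\cong\mathrm{PGL}(2,\mathbb R)$. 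With this normalization in place, the commutation relations $[P^t,G^i]=P^i$, $[D,G^i]=G^i$, $[P^i,\Pi]=G^i$, together with $[P^x,J]=P^y$, $[P^y,J]=-P^x$ and the analogous identities for $G^x,G^y$, force $A_\Phi|_{\mathfrak n}$ to act on $\langle P^x,P^y\rangle$ by the same orthogonal matrix (up to an overall scalar forced by the $D$-weight) as on $\langle G^x,G^y\rangle$, and constrain its $J$-component. Further inner automorphisms from $J$, $P^x$, $P^y$, $G^x$, $G^y$ normalize the remaining freedom to just two cosets: the identity and a reflection such as $P^x\mapsto-P^x$, $G^x\mapsto-G^x$, $J\mapsto -J$; in the same step one checks that the potential outer $\mathfrak f$-contribution either coincides with this reflection modulo inner automorphisms or is not realized by a point transformation.

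It then remains to verify that the nontrivial residual automorphism is indeed induced by the explicit involution $(t,x,y,u,v)\mapsto(t,-x,y,-u,v)$, which is a one-line substitution into~\eqref{eq:BurgersSystem}, and to conclude that $G$ is the extension of $G^\circ$ by the $\mathbb Z/2$ generated by this map. I expect the main technical obstacle to be the middle step: exhausting the inner automorphisms in the correct order so that each free parameter in $A_\Phi$ is consumed economically and neither a spurious discrete generator is retained nor a genuine one overlooked. The early disposal of the $\mathfrak f$-related inner freedom is precisely what keeps this bookkeeping manageable, since it reduces the subsequent analysis on $\mathfrak n$ to a problem for a fixed, already-diagonalized action of $\mathfrak f$.
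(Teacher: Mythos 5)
Your overall strategy coincides with the paper's: pushforwards of point symmetries are automorphisms of~$\mathfrak g$, the Levi--Malcev theorem lets you normalize so that the Levi factor~$\mathfrak f$ is preserved, the outer automorphism of $\mathfrak f\cong{\rm sl}(2,\mathbb R)$ contributes a single sign~$\varepsilon$, and the commutation relations with $P^t$, $D$, $\Pi$, $J$ force the restriction to the nilradical to act on $\langle P^x,P^y\rangle$ and $\langle G^x,G^y\rangle$ by proportional scalar-times-orthogonal blocks, whose rotation angle is removed by $\mathrm{Ad}(\exp(sJ))$. Up to that point you match the paper's proof.

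The gap is your claim that ``further inner automorphisms from $J$, $P^x$, $P^y$, $G^x$, $G^y$ normalize the remaining freedom to just two cosets.'' This is not true. The inner automorphisms generated by $P^x,P^y,G^x,G^y$ were already spent in the Levi--Malcev conjugation and cannot be reused without destroying the normalization $A(\mathfrak f)=\mathfrak f$; in any case they act unipotently and rescale nothing. More importantly, the normalized automorphisms retain a genuine \emph{continuous} parameter: the common scalar $a\ne0$ multiplying $P^x,P^y,G^x,G^y$. It is not ``forced by the $D$-weight'' (the weights of $P^x$ and $G^x$ under $\mathrm{ad}_D$ are opposite, so $\mathrm{Ad}(\exp(sD))$ scales the $P$'s and the $G$'s by reciprocal factors and cannot absorb~$a$ while keeping the $\mathfrak f$-block diagonal, and $\mathrm{Ad}(\exp(\pi J))$ only flips its sign). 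Together with the sign~$\varepsilon$, the candidates after all inner normalizations form a set parameterized by $\varepsilon,\varepsilon'=\pm1$ and $a>0$ --- four one-parameter families, not two cosets. These cannot be eliminated by algebra alone: the automorphism with generic $a$ (or with $\varepsilon=-1$) \emph{is} induced by a point transformation, namely one mapping the Burgers system to a Burgers system with rescaled viscosity (or reversed time), so only the equation itself can reject it.

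The indispensable step your proposal omits is therefore the direct-method completion that occupies the second half of the paper's proof: integrate the componentwise pushforward conditions to obtain $T=\varepsilon t$, $X=\varepsilon\varepsilon'ax$, $Y=\varepsilon ay$, $U=\varepsilon'au$, $V=av$, substitute this candidate into the system~\eqref{eq:BurgersSystem}, and deduce $\varepsilon=1$ and $a^2=1$ from the balance between the convective and viscous terms. You verify only the easy direction --- that the surviving reflection is a symmetry --- but not the essential converse, that the remaining continuum of algebraically admissible candidates fails to preserve the system.
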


\begin{proof}
We apply the automorphism-based version of the algebraic method for finding the complete point symmetry group
that involves factoring out inner automorphisms, cf.\ \cite{bihl2015a,card12a,hydo00b}.
Then we use constraints obtained by the algebraic method for components of point symmetry transformations
to complete the proof with the direct method.
See examples on computing point transformations between differential equations by the direct method,
e.g., in~\cite{king1998a}.

Each automorphism of~$\mathfrak g$ maps the Levi factor~$\mathfrak f$ of~$\mathfrak g$ onto a Levi factor of~$\mathfrak g$.
In view of the Levi--Malcev theorem, any two Levi factors of~$\mathfrak g$ are conjugated by an inner automorphism
generated by an element of the nilradical~$\mathfrak n$.
All inner automorphisms of~$\mathfrak g$ are induced by elements in the identity component of the group~$G$.
Hence one can assume that the Levi factor~$\mathfrak f$ is invariant
with respect to pushforwards by discrete point symmetries of the system~\eqref{eq:BurgersSystem}.
Since the Levi factor~$\mathfrak f$ is isomorphic to the algebra ${\rm sl}(2,\mathbb R)$,
it possesses a singe independent outer automorphism with the matrix $\mathop{\rm diag}(-1,1,-1)$.
As a result, to find discrete point symmetries of the system~\eqref{eq:BurgersSystem}
it suffices to consider automorphisms of~$\mathfrak g$ whose matrices are of the form
$\mathop{\rm diag}(\varepsilon,1,\varepsilon)\oplus\tilde A$,
where $\varepsilon=\pm1$ and $\tilde A$ is a $5\times5$ nondegenerate matrix.
The set of such automorphisms is a subgroup of the automorphism group of~$\mathfrak g$
and is exhausted by those with matrices of the form
\[
A=\mathop{\rm diag}(\varepsilon,1,\varepsilon,\varepsilon')
\oplus\varepsilon\begin{pmatrix}\varepsilon'a&-\varepsilon'b\\b&a\end{pmatrix}
\oplus\begin{pmatrix}\varepsilon'a&-\varepsilon'b\\b&a\end{pmatrix},
\]
where $\varepsilon,\varepsilon'=\pm1$, $(a,b)\ne(0,0)$,
and the parameter~$b$ can be set to be equal 0
using inner automorphisms associated with the basis vector field~$J$.
Thus, the final form of automorphism matrices to be considered is
\[
A=\mathop{\rm diag}(\varepsilon,1,\varepsilon,\varepsilon',\varepsilon\varepsilon'a,\varepsilon a,\varepsilon'a, a),
\]
where $\varepsilon,\varepsilon'=\pm1$ and $a\ne0$.

Suppose that the pushforward~$\mathcal T_*$ of vector fields by a point transformation
\[
\mathcal T\colon\quad (\tilde t,\tilde x,\tilde y,\tilde u,\tilde v)=(T,X,Y,U,V)(t,x,y,u,v)
\]
is the automorphism of~$\mathfrak g$ with the matrix~$A$, i.e.,
\begin{gather*}
\mathcal T_*P^t=\varepsilon\tilde P^t,\quad
\mathcal T_*P^x=\varepsilon\varepsilon'a\tilde P^x,\quad
\mathcal T_*P^y=\varepsilon a\tilde P^y,\quad
\mathcal T_*G^x=\varepsilon'a\tilde G^x,\quad
\mathcal T_*G^y=a\tilde G^y,\\
\mathcal T_*D  =\tilde D,\quad
\mathcal T_*\Pi=\varepsilon\tilde\Pi,\quad
\mathcal T_*J  =\varepsilon'\tilde J,
\end{gather*}
where tildes over vector fields mean that these vector fields are given in the new coordinates.
We componentwise split the above conditions for~$\mathcal T_*$
and thus derive a system of differential equations for the components of~$\mathcal T$,
\begin{gather*}
T_t=\varepsilon,\quad X_t=Y_t=U_t=V_t=0,\\
X_x=\varepsilon\varepsilon'a,\quad T_x=Y_x=U_x=V_x=0,\\
Y_y=\varepsilon a,\quad T_y=X_y=U_y=V_y=0,\\
tX_x+X_u=\varepsilon'aT,\quad U_u=\varepsilon'a,\quad T_u=Y_u=V_u=0,\\
tY_y+Y_v=aT,\quad V_v=a,\quad T_v=X_v=U_v=0,\\
tT_t=T,\quad xX_x-uX_u=X,\quad yY_y-vY_v=Y,\quad uU_u=U,\quad vV_v=V,\\
t^2T_t=\varepsilon T^2,\quad txX_x+(x-tu)X_u=\varepsilon TX,\quad tyY_y+(y-tv)Y_v=\varepsilon TY,\\ \qquad (x-tu)U_u=\varepsilon(X-TU),\quad (y-tv)V_v=\varepsilon(Y-TV),\\
yX_x+vX_u=\varepsilon'Y,\quad xY_y+uY_v=\varepsilon'X,\quad vU_u=\varepsilon'V,\quad uV_v=\varepsilon'U.
\end{gather*}
This system implies that $T=\varepsilon t$ and hence $X_u=Y_v=0$.
Furthermore, $X=\varepsilon\varepsilon'ax$, $Y=\varepsilon a y$, $U=\varepsilon'au$ and $V=av$.

Using the chain rule, we express all required transformed derivatives in terms of the initial coordinates
and substitute the obtained expressions into the copy of the Burgers system in the new coordinates.
The expanded system should vanish for each solution of the Burgers system.
This condition implies the equations $\varepsilon=1$ and $a^2=1$, i.e., $a=\pm1$.
The value $a=-1$ is mapped to the value $a=1$ by an inner automorphism related to the vector field~$J$.
Therefore, discrete symmetries of the Burgers system~\eqref{eq:BurgersSystem} are exhausted,
up to combining with continuous symmetries and with each other, by one involution, $(t,x,y,u,v)\to(t,-x,y,-u,v)$.
\end{proof}

\begin{corollary}
The factor group of the complete point symmetry group~$G$ of the Burgers system~\eqref{eq:BurgersSystem}
with respect to its identity component is isomorphic to the group $\mathbb Z_2$.
\end{corollary}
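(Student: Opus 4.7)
The plan is short because most of the work is already done by Theorem~\ref{thm:BurgersSystemCompletePointSymGroup}. I would argue in three small steps.

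First, I would observe that by Theorem~\ref{thm:BurgersSystemCompletePointSymGroup} the group $G$ is generated by the identity component $G^\circ$ (which is exactly the connected Lie subgroup of $G$ obtained by exponentiating the vector fields in~$\mathfrak g$) together with the single involution $\iota\colon(t,x,y,u,v)\mapsto(t,-x,y,-u,v)$. Consequently, $G/G^\circ$ is a cyclic group generated by the coset $[\iota]$. Since $\iota^2=\mathrm{id}$, the order of $[\iota]$ in $G/G^\circ$ divides~$2$.

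Second, I would verify that $\iota\notin G^\circ$, so that $[\iota]$ is nontrivial and $G/G^\circ$ has order exactly $2$. The cleanest way is to compute the pushforward $\iota_*$ on the basis of~$\mathfrak g$: it acts as $P^x\mapsto -P^x$, $G^x\mapsto -G^x$, $J\mapsto -J$ while fixing all other basis vectors. In the notation of the proof of Theorem~\ref{thm:BurgersSystemCompletePointSymGroup}, this corresponds to the automorphism matrix with $\varepsilon=a=1$ and $\varepsilon'=-1$. It was shown there that such automorphisms with $\varepsilon'=-1$ cannot be reduced to inner automorphisms of~$\mathfrak g$; since elements of $G^\circ$ induce only inner automorphisms of~$\mathfrak g$, the transformation $\iota$ cannot belong to $G^\circ$.

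Third, combining these two observations yields $G/G^\circ=\langle[\iota]\rangle\cong\mathbb Z_2$. The only potential obstacle is the second step, but it is already essentially contained in the matrix analysis of outer automorphisms carried out in the proof of Theorem~\ref{thm:BurgersSystemCompletePointSymGroup}, so no new computation is required.
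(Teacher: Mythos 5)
Your argument is correct and is essentially the one the paper intends: the corollary is stated without an explicit proof as an immediate consequence of Theorem~\ref{thm:BurgersSystemCompletePointSymGroup}, whose proof establishes exactly what you use, namely that $G$ is generated by its identity component together with a single involution that is not a continuous symmetry. Your verification that the involution lies outside the identity component is a correct filling-in of a step the paper leaves implicit (though it is not literally ``shown there''): the pushforward negates $J$ and hence acts as $-1$ on the one-dimensional quotient $\mathfrak g/\mathfrak g'$, whereas every element of the identity component induces an inner automorphism, which acts trivially on that quotient.
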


\begin{corollary}
The complete point symmetry group~$G$ of the Burgers system~\eqref{eq:BurgersSystem}
consists of the transformations of the form 
\begin{gather*}
\tilde t=\frac{\alpha t+\beta}{\gamma t+\delta},
\quad
\begin{pmatrix}\tilde x\\ \tilde y\end{pmatrix}
=\frac\sigma{\gamma t+\delta}O\begin{pmatrix}x\\y\end{pmatrix}
+\frac{\alpha t+\beta}{\gamma t+\delta}\begin{pmatrix}\mu_1\\ \mu_2\end{pmatrix}
+\begin{pmatrix}\nu_1\\ \nu_2\end{pmatrix},
\\[1ex]
\begin{pmatrix}\tilde u\\ \tilde v\end{pmatrix}
=\frac{\gamma t+\delta}{\sigma}O\begin{pmatrix}u\\v\end{pmatrix}
-\frac\gamma\sigma O\begin{pmatrix}x\\y\end{pmatrix}
+\begin{pmatrix}\mu_1\\ \mu_2\end{pmatrix},
\end{gather*}
where
$\alpha$, $\beta$, $\gamma$ and $\delta$ are arbitrary constants with $\alpha\delta-\beta\gamma>0$ 
such that their tuple is defined up to nonvanishing multiplier, $\sigma=\sqrt{\alpha\delta-\beta\gamma}$,
$O$ is an arbitrary $2\times2$ orthogonal matrix,
and $\mu_1$, $\mu_2$, $\nu_1$ and $\nu_2$ are arbitrary constants. 
\end{corollary}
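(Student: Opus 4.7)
The plan is to derive the explicit form of a general element of $G$ by integrating the basis vector fields of~$\mathfrak g$ to finite one-parameter subgroups, composing them in a convenient order, and adjoining the discrete involution supplied by Theorem~\ref{thm:BurgersSystemCompletePointSymGroup}. Since $G$ is generated by the flows of elements of~$\mathfrak g$ together with the mirror involution $(t,x,y,u,v)\mapsto(t,-x,y,-u,v)$, every element of $G$ is a finite composition of such transformations, so it suffices to compute one transformation of the asserted form for each choice of parameters and, conversely, to check that every composition has the asserted form.

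The key ingredient is the action of the Levi factor $\mathfrak f=\langle P^t,D,\Pi\rangle\cong{\rm sl}(2,\mathbb R)$. Integrating $P^t$, $D$ and $\Pi$ separately yields respectively a translation $t\mapsto t+c$ (with $x,y,u,v$ unchanged), a dilation $(t,x,y,u,v)\mapsto(e^{2s}t,e^sx,e^sy,e^{-s}u,e^{-s}v)$, and an inversion-type transformation $\tilde t=t/(1-ct)$, $\tilde x=x/(1-ct)$, $\tilde u=(1-ct)u+cx$, and analogously for $y,v$. A direct inspection of these formulas shows that an arbitrary composition of such transformations acts on $t$ by a Möbius transformation $\tilde t=(\alpha t+\beta)/(\gamma t+\delta)$ with $\alpha\delta-\beta\gamma>0$, and on $(x,y)$ and $(u,v)$ by $\tilde x=x/(\gamma t+\delta)$, $\tilde u=(\gamma t+\delta)u-\gamma x$ (after normalizing $\alpha\delta-\beta\gamma=1$); rescaling the tuple $(\alpha,\beta,\gamma,\delta)$ by a positive constant is absorbed by the factor $\sigma=\sqrt{\alpha\delta-\beta\gamma}$. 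I would verify this by showing that the corresponding subset of transformations is closed under composition (which amounts to multiplication in ${\rm GL}^+(2,\mathbb R)$ modulo scalars) and contains the flows of $P^t$, $D$, and $\Pi$.

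Next I would append the remaining generators. The flow of~$J$ gives a common rotation of $(x,y)$ and $(u,v)$ by the same angle, and composing with the involution produced in Theorem~\ref{thm:BurgersSystemCompletePointSymGroup} enlarges $SO(2)$ to $O(2)$, yielding the orthogonal matrix~$O$. The flows of the Galilean boosts $G^x$ and $G^y$ contribute $(x,y)\mapsto(x,y)+t(\mu_1,\mu_2)$ and $(u,v)\mapsto(u,v)+(\mu_1,\mu_2)$, and the flows of $P^x$ and $P^y$ contribute the pure space translation by $(\nu_1,\nu_2)$. Composing in the order (rotation/reflection)~$\circ$~(Galilean boost)~$\circ$~(space translation)~$\circ$~($\mathfrak f$-action) and tracking the resulting formulas reproduces the three displayed expressions.

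The main obstacle — and the step that needs the most care — is to confirm that this particular composition order produces exactly the stated formula rather than a more complicated one: in particular, that after the $\mathfrak f$-action the rotation matrix $O$ acts on $(x,y)$ and $(u,v)$ with the coefficients $\sigma/(\gamma t+\delta)$ and $(\gamma t+\delta)/\sigma$ shown, and that the boost and translation parameters enter linearly as displayed. This can be verified by a direct composition. Finally, the converse direction — that every transformation of the asserted form lies in $G$ — follows since the ingredients (Möbius in $t$, $O(2)$-action, boosts, translations) are each manifestly in $G$, and the parameter count $3+2+2+2=9$ matches $\dim\mathfrak g+1$ with the $\mathbb Z_2$ factor absorbed into $O\in O(2)$.
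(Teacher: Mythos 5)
Your proposal is correct and follows the route the paper implicitly intends: the corollary is stated without proof precisely because it is obtained by exponentiating the basis vector fields of~$\mathfrak g$, verifying closure of the resulting parametric family under composition, and adjoining the mirror involution from Theorem~\ref{thm:BurgersSystemCompletePointSymGroup} to enlarge $SO(2)$ to $O(2)$. (Only your closing parameter count is slightly off --- $O(2)$ contributes one continuous parameter plus a discrete factor, so the continuous dimension is $3+1+2+2=8=\dim\mathfrak g$ --- but this remark is not load-bearing.)
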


\section{Symmetries of linearizable subset of solutions}\label{sec:SymsOfLinearizableSubsetOfSolutions}

It is possible to check with the infinitesimal invariance criterion and/or the chain rule
that all point symmetries of the Burgers system~\eqref{eq:BurgersSystem} are point symmetries
of the equation~\eqref{eq:BurgersSystemLinearizationConstraint}.
Therefore, the complete point symmetry group of the joint system~\eqref{eq:BurgersSystem},~\eqref{eq:BurgersSystemLinearizationConstraint},
which is further denoted by~$\mathcal S$,
contains the point symmetry group~$G$ of the system~\eqref{eq:BurgersSystem}.
Moreover, the following stronger assertion is also true.

\begin{proposition}\label{pro:BurgersSystemWithLinearizationConstraintLieSyms}
The maximal Lie invariance algebra and the complete point symmetry group of the system~$\mathcal S$,
which consists of the Burgers system~\eqref{eq:BurgersSystem}
jointly with the differential constraint~\eqref{eq:BurgersSystemLinearizationConstraint},
coincide with the algebra~$\mathfrak g$ and the group~$G$, respectively.
\end{proposition}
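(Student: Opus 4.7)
The preceding remark already gives the inclusions $\mathfrak{g}\subseteq\mathfrak{g}_{\mathcal S}$ and $G\subseteq G_{\mathcal S}$, where $\mathfrak{g}_{\mathcal S}$ and $G_{\mathcal S}$ denote respectively the maximal Lie invariance algebra and the complete point symmetry group of $\mathcal S$. The plan is to establish the reverse inclusions in two steps: first $\mathfrak{g}_{\mathcal S}\subseteq\mathfrak{g}$ via the infinitesimal invariance criterion, and then $G_{\mathcal S}\subseteq G$ by reusing the algebraic-method setup from the proof of Theorem~\ref{thm:BurgersSystemCompletePointSymGroup}.

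For the algebra part, I would take a general infinitesimal generator $Q=\tau\p_t+\xi^1\p_x+\xi^2\p_y+\eta^1\p_u+\eta^2\p_v$, prolong it to the second order, and apply it to the three equations cutting out $\mathcal S^{(2)}$. As principal derivatives I would choose $u_t,v_t,u_{xx},u_{yy},v_{xx},v_{yy}$ from the Burgers equations together with $u_y$ (equivalently $v_x$) from the constraint. After substituting these and splitting the resulting identities with respect to the remaining parametric derivatives, the determining system for $\mathcal S$ differs from that of the Burgers system only by the formal identification $u_y\equiv v_x$. Careful bookkeeping should then show that every new identity is a consequence of the old ones, giving $\mathfrak{g}_{\mathcal S}=\mathfrak{g}$. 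This step is the main obstacle: since in principle adding a differential constraint enlarges the space of admitted symmetries, one cannot invoke a general principle and must verify by direct analysis that the enlarged determining system produces no extra solutions.

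Once the algebra equality is in hand, the group claim follows cleanly. Every $\mathcal T\in G_{\mathcal S}$ pushes $\mathfrak{g}_{\mathcal S}=\mathfrak{g}$ forward to itself, so the automorphism analysis from the proof of Theorem~\ref{thm:BurgersSystemCompletePointSymGroup} applies verbatim: the megaideals $\mathfrak{r}$, $\mathfrak{n}$, $\mathfrak{g}'$ are preserved, a Levi factor can be normalized to $\mathfrak{f}$ by an inner automorphism, and the candidate pushforward matrices reduce to the form $\mathrm{diag}(\varepsilon,1,\varepsilon,\varepsilon',\varepsilon\varepsilon'a,\varepsilon a,\varepsilon'a,a)$ with $\varepsilon,\varepsilon'=\pm1$ and $a\ne0$. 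Substitution into the Burgers system itself already forces $\varepsilon=1$ and $a=\pm1$, exactly as in the theorem. A direct chain-rule check then completes the argument: for any such candidate both $\tilde u_{\tilde y}$ and $\tilde v_{\tilde x}$ pick up the common factor $\varepsilon\varepsilon'$, so the constraint is preserved automatically and yields no further restriction. Hence $G_{\mathcal S}\subseteq G$.
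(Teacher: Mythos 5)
Your proposal follows essentially the same two-step route as the paper: first show that the determining system for~$\mathcal S$ has the same solution space as that for~\eqref{eq:BurgersSystem}, then rerun the automorphism-based argument of Theorem~\ref{thm:BurgersSystemCompletePointSymGroup}, noting that the candidate transformations already preserve the constraint. The one place where the paper does work that your sketch glosses over is the bookkeeping for the overdetermined system: you cannot take all six of $u_t,v_t,u_{xx},u_{yy},v_{xx},v_{yy}$ as principal derivatives of two equations, and you must justify that adjoining~\eqref{eq:BurgersSystemLinearizationConstraint} to~\eqref{eq:BurgersSystem} produces no hidden lower-order integrability conditions before splitting with respect to parametric derivatives. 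The paper settles this by using the Hopf--Cole parameterization~\eqref{eq:2DHopfColeTrans} of the solution set of~$\mathcal S$ by solutions of~\eqref{eq:(1+2)DLinHeatEq} to conclude that the independent differential consequences of orders one and two are exactly~\eqref{eq:BurgersSystemLinearizationConstraint}, the two equations~\eqref{eq:BurgersSystem}, and the prolonged constraints $u_{xy}=v_{xx}$, $u_{yy}=v_{xy}$; your ``formal identification $u_y\equiv v_x$'' is adequate only if it is understood to propagate to all derivatives of $u_y$ in this way. With that repaired the argument matches the paper's, and your explicit observation that $\tilde u_{\tilde y}$ and $\tilde v_{\tilde x}$ acquire the common factor $\varepsilon\varepsilon'$ is a useful detail the paper leaves implicit.
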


\begin{proof}
Since the system~$\mathcal S$ is overdetermined, we should carefully handle its differential consequences.
At the same time, there is the description of the solution set of~$\mathcal S$ in terms of
the generalized Hopf--Cole transformation~\eqref{eq:2DHopfColeTrans}
and the solution set of the (1+2)-dimensional linear heat equation~\eqref{eq:(1+2)DLinHeatEq}.
This description implies that independent differential consequences of~$\mathcal S$
whose orders as differential equations equal one or two
are exhausted by the equation~\eqref{eq:BurgersSystemLinearizationConstraint}
and the equations~\eqref{eq:BurgersSystem}, $u_{xy}=v_{xx}$ and $u_{yy}=v_{xy}$, respectively.
It is obvious that the last two equations are differential consequences of~\eqref{eq:BurgersSystemLinearizationConstraint}
that are obtained by single differentiations with respect to~$x$ and~$y$, respectively.

Therefore, for computing the maximal Lie invariance algebra of the system~$\mathcal S$
we need to first use the infinitesimal invariance criterion separately
for the equation~\eqref{eq:BurgersSystemLinearizationConstraint}.
Then, taking into account the derived constraints for components of Lie symmetry vector fields,
we apply the infinitesimal invariance criterion to the equations~\eqref{eq:BurgersSystem},
substitute for derivatives in view of all the above differential consequences
and split with respect to parametric derivatives.
The constructed system of determining equations for components of Lie symmetry vector fields
of the system~$\mathcal S$ has the same solution space as that for the Burgers system~\eqref{eq:BurgersSystem}.
In other words, the maximal Lie invariance algebra of the system~$\mathcal S$
coincides with the algebra~$\mathfrak g$.

Following the argumentation in the proof of Theorem~\ref{thm:BurgersSystemCompletePointSymGroup},
we see that the statement on the complete point symmetry group of the system~$\mathcal S$
is also true.
\end{proof}

Proposition~\ref{pro:BurgersSystemWithLinearizationConstraintLieSyms} means
that the Burgers system~\eqref{eq:BurgersSystem} possesses no genuine conditional symmetries
related to the differential constraint~\eqref{eq:BurgersSystemLinearizationConstraint}
although it possesses genuine potential conditional symmetries under this constraint.
Indeed, the equation~\eqref{eq:BurgersSystemLinearizationConstraint} is of conserved form.
Using it as a ``short'' conservation law, we introduce
the potential~$\psi$ defined by the equations $\psi_x=u$ and $\psi_y=v$.
The substitution of the expression of $(u,v)$ in terms of~$\psi$ reduces the system~\eqref{eq:BurgersSystem}
to the condition that the derivatives of $R=\psi_t+\tfrac12(\psi_x)^2+\tfrac12(\psi_y)^2-\psi_{xx}-\psi_{yy}$
with respect to~$x$ and~$y$ vanish,
i.e., the function~$R$ depends only on~$t$.
Since the potential~$\psi$ is defined up to summand being an arbitrary smooth function of~$t$,
we can make the function~$R$ to vanish,
which gives the equation for the potential~$\psi$,
\begin{gather}\label{eq:BurgersSystemEqLinearizableTo(1+2)DLinHeatEq}
\psi_t+\frac12(\psi_x)^2+\frac12(\psi_y)^2-\psi_{xx}-\psi_{yy}=0.
\end{gather}
This equation is linearized by a point transformation of the potential, $\phi=e^{-\psi/2}$,
to the equation~\eqref{eq:(1+2)DLinHeatEq}.
The expressions for~$u$ and~$v$ in terms of the modified potential~$\phi$ take the form~\eqref{eq:2DHopfColeTrans}.
Therefore, the equation~\eqref{eq:(1+2)DLinHeatEq} can interpreted as a potential equation
for the system~$\mathcal S$.
The maximal Lie invariance algebra~$\breve{\mathfrak g}$ of the equation~\eqref{eq:(1+2)DLinHeatEq}
is spanned by the vector fields
\begin{gather*}
\breve P^t=\p_t,\quad
\breve D=2t\p_t+x\p_x+y\p_y,\quad
\breve\Pi=4t^2\p_t+4tx\p_x+4ty\p_y-(x^2+y^2+4t)\phi\p_\phi,
\\
\breve J=x\p_y-y\p_x,\quad
\breve P^x=\p_x,\quad
\breve P^y=\p_y,\quad
\breve G^x=2t\p_x-x\phi\p_\phi,\quad
\breve G^y=2t\p_y-y\phi\p_\phi,
\\
\breve I=\phi\p_\phi,\quad
\breve Z(f)=f(t,x,y)\p_\phi,
\end{gather*}
where the parameter function $f=f(t,x,y)$ runs through the solution set of the equation~\eqref{eq:(1+2)DLinHeatEq}. 
Discrete symmetries of the equation~\eqref{eq:(1+2)DLinHeatEq} are exhausted,
up to combining with continuous symmetries and with each other, by two involution, 
$(t,x,y,\psi)\to(t,-x,y,\psi)$ and $(t,x,y,\psi)\to(t,x,y,-\psi)$.
The vector fields~$\breve P^t$, $\breve D$, $\breve\Pi$, $\breve J$, $\breve P^x$, $\breve P^y$, $\breve G^x$ and $\breve G^y$
induce, via the transformation~\eqref{eq:2DHopfColeTrans},
the basis elements~$P^t$, $D$, $\Pi$, $J$, $P^x$, $P^y$, $G^x$ and $G^y$ of the algebra~$\mathfrak g$, respectively.
In order to show this, it is necessary to prolong Lie symmetry vector fields of the equation~\eqref{eq:(1+2)DLinHeatEq}
to first derivatives of~$\phi$ and act by prolonged vector fields on $-2\phi_x/\phi$ and $-2\phi_y/\phi$.
Expressing the results of the action in terms of~$(u,v)$, when it is possible, gives the $u$- and $v$-components 
of the induced elements of the algebra~$\mathfrak g$.
In a similar way, the discrete symmetry $(t,x,y,\psi)\to(t,-x,y,\psi)$ of the equation~\eqref{eq:(1+2)DLinHeatEq} induces
the discrete symmetry $(t,x,y,u,v)\to(t,-x,y,-u,v)$ of the system~$\mathcal S$.
Therefore, the entire point symmetry group~$G$ of the system~$\mathcal S$ is induced
by the point symmetry group of the potential equation~\eqref{eq:(1+2)DLinHeatEq}.
The Lie symmetry vector field~$\breve I$ and the second independent discrete symmetry $(t,x,y,\psi)\to(t,x,y,-\psi)$ of the equation~\eqref{eq:(1+2)DLinHeatEq}
are mapped, in the above way, to the zero vector field and the identity transformation, respectively.
At the same time, vector fields from the infinite-dimensional ideal $\{\breve Z(f)\}$ of~$\breve{\mathfrak g}$,
which is related to the linear superposition of solutions of the equation~\eqref{eq:(1+2)DLinHeatEq},
have no counterparts among local infinitesimal symmetries of the system~$\mathcal S$
and thus are genuine potential symmetries of this system, i.e.,
the genuine conditional potential symmetries of the system~\eqref{eq:BurgersSystem}.

Another implication of Proposition~\ref{pro:BurgersSystemWithLinearizationConstraintLieSyms}
is that every Lie ansatz for the system~\eqref{eq:BurgersSystem} also reduces
the equation~\eqref{eq:BurgersSystemLinearizationConstraint}.
This is why for any one-dimensional subalgebra~$\mathfrak s$ of the algebra~$\mathfrak g$ 
or for any two-dimensional subalgebra~$\mathfrak s$ among $\mathfrak g^{2.1}_\kappa$--$\mathfrak g^{2.6}_\mu$,
the linearizable set of solutions of the form~\eqref{eq:2DHopfColeTrans} intersects
the set of $\mathfrak s$-invariant solutions of the system~\eqref{eq:BurgersSystem};
cf.\ Sections~\ref{sec:BurgersSystemLiereductionsOfCodim1} and~\ref{sec:BurgersSystemLieReductionOfCodim2}.
Reducing the system~\eqref{eq:BurgersSystem} by an ansatz for $\mathfrak s$-invariant solutions,
one should be interested only in solutions of the corresponding reduced system
that do not satisfy the reduced counterpart of~\eqref{eq:BurgersSystemLinearizationConstraint}.

\section{Subalgebras of Lie invariance algebra}\label{sec:SubalgebrasOfLieInvarianceAlgebra}

The classification of subalgebras of Galilei algebras was considered in a number of works,
see for example in \cite{bara1989a,bara1995a,FushchychBarannykBarannyk} and references therein.
We classified inequivalent one- or two-dimensional subalgebras of~$\mathfrak g$ from the very beginning
and compared the obtained list with the list presented in~\cite{FushchychBarannykBarannyk}.

We classify subalgebras of the algebra~$\mathfrak g$,
up to the equivalence relation generated by the adjoint action 
of the point symmetry group~$G$ of the Burgers system on~$\mathfrak g$.
The radical~$\mathfrak r$ and the nilradical~$\mathfrak n$
are megaideals (i.e., fully characteristic ideals) of~$\mathfrak g$
and hence they are $G$-invariant.
To characterize classification cases, for any subalgebra~$\mathfrak s$ of~$\mathfrak g$
we can introduce the $G^\sim$-invariant values
$\dim\mathfrak s\cap\mathfrak r$, $\dim\mathfrak s\cap\mathfrak n$,
$\dim{\rm pr}_{\mathfrak f}\mathfrak s$ and $\dim{\rm pr}_{\mathfrak c}\mathfrak s$.

For efficiently recognizing inequivalent subalgebras of~$\mathfrak g$,
we consider their projections on the Levi factor~$\mathfrak f$.
These projections are necessarily subalgebras of~$\mathfrak f$,
and, moreover, the projections of equivalent subalgebras of~$\mathfrak g$ are equivalent as subalgebras of~$\mathfrak f$.
A~complete list of inequivalent subalgebras of the algebra ${\rm sl}(2,\mathbb R)$ is well known;
see, e.g.,~\cite{pate1977a}.
In terms of the realization~$\mathfrak f$ of~${\rm sl}(2,\mathbb R)$, it is exhausted by
$\{0\}$, $\langle P^t\rangle$, $\langle D\rangle$, $\langle P^t+\Pi\rangle$,
$\langle P^t, D\rangle$ and~$\mathfrak f$ itself.
Considering each of the listed subalgebras of~$\mathfrak f$
as a projection of an appropriate subalgebra of~$\mathfrak g$,
we try to add elements of the radical~$\mathfrak r$ to the basis elements of this subalgebra,
and to additionally extend the basis by elements from the radical~$\mathfrak r$.

A complete list of one-dimensional $G$-inequivalent subalgebras of~$\mathfrak g$ is exhausted by the subalgebras
\begin{gather*}
\mathfrak g^{1.1}_\kappa=\langle P^t+\kappa J          \rangle_{\kappa\in\{0,1\}},\quad
\mathfrak g^{1.2}       =\langle P^t+G^y               \rangle,                   \quad
\mathfrak g^{1.3}_\kappa=\langle D+2\kappa J           \rangle_{\kappa\geqslant0},\\[.5ex]
\mathfrak g^{1.4}_\kappa=\langle P^t+\Pi+\kappa J      \rangle_{\kappa\geqslant0},\quad
\mathfrak g^{1.5}_\mu   =\langle P^t+\Pi+J+\mu(G^x-P^y)\rangle_{\mu>0},           \\[.5ex]
\mathfrak g^{1.6}       =\langle J                     \rangle,\quad
\mathfrak g^{1.7}       =\langle G^x-P^y               \rangle,\quad
\mathfrak g^{1.8}       =\langle P^y                   \rangle.
\end{gather*}

A similar list of two-dimensional $G$-inequivalent subalgebras consists of the subalgebras
\begin{gather*}
\mathfrak g^{2.1}_\kappa  =\langle P^t,D+\kappa J                \rangle_{\kappa\geqslant0},\quad
\mathfrak g^{2.2}         =\langle P^t,J                         \rangle,\quad
\mathfrak g^{2.3}         =\langle D,J                           \rangle,\quad
\mathfrak g^{2.4}         =\langle P^t+\Pi,J                     \rangle,\\[.5ex]
\mathfrak g^{2.5}_\mu     =\langle P^t+\Pi+J+\mu(G^y+P^x),G^x-P^y\rangle_{\mu\geqslant0},\quad
\mathfrak g^{2.6}_\mu     =\langle G^x-P^y,G^y+\mu P^x           \rangle_{\mu>0},\\[.5ex]
\mathfrak g^{2.7}_{\mu\nu}=\langle P^y,P^t+\mu G^x+\nu G^y       \rangle_{\mu,\nu\geqslant0,\ \mu^2+\nu^2\in\{0,1\}},\quad
\mathfrak g^{2.8}         =\langle P^y,D                         \rangle,\\[.5ex]
\mathfrak g^{2.9}         =\langle P^y,P^x                       \rangle,\quad\!
\mathfrak g^{2.10}        =\langle P^y,G^y                       \rangle,\quad\!
\mathfrak g^{2.11}_\mu    =\langle P^y,G^x+\mu G^y               \rangle_{\mu\geqslant0},\quad\!
\mathfrak g^{2.12}        =\langle P^y,G^y+P^x                   \rangle.
\end{gather*}

\section{Lie reductions of codimension one}\label{sec:BurgersSystemLiereductionsOfCodim1}

Ansatzes constructed with one-dimensional subalgebras of~$\mathfrak g$
reduce the system~\eqref{eq:BurgersSystem} to systems of two partial differential equations
in two independent variables.
Below for each of the one-dimensional subalgebras listed in the previous section,
we present an ansatz constructed for $(u,v)$
and the corresponding reduced system.
Here $w^i=w^i(z_1,z_2)$, $i=1,2$,
are new unknown functions of the invariant independent variables $(z_1,z_2)$.
The subscripts~1 and~2 of~$w$'s denote derivatives with respect to~$z_1$ and~$z_2$, respectively.
We assume summation with respect to the repeated index~$i$.

\bigskip\par\noindent
1.1. $\mathfrak g^{1.1}_\kappa=\langle P^t+\kappa J\rangle_{\kappa\in\{0,1\}}$:
\begin{gather*}
u=w^1\cos\tau-w^2\sin\tau-\kappa y, \\
v=w^1\sin\tau+w^2\cos\tau+\kappa x,
\end{gather*}
where\quad
$z_1= x\cos\tau+y\sin\tau$,\quad
$z_2=-x\sin\tau+y\cos\tau$,\quad
$\tau:=\kappa t$;
\begin{gather*}
w^iw^1_i-w^1_{ii}-2\kappa w^2-\kappa z_1=0,\\
w^iw^2_i-w^2_{ii}+2\kappa w^1-\kappa z_2=0.
\end{gather*}

\medskip\par\noindent
1.2. $\mathfrak g^{1.2}=\langle P^t+G^y\rangle$:\quad
$u=w^1, \quad v=w^2+t,$\quad
where \quad
\smash{$z_1=x, \quad z_2=y-\dfrac{t^2}{2}$};
\begin{gather*}
w^iw^1_i-w^1_{ii}=0,\\
w^iw^2_i-w^2_{ii}+1=0.
\end{gather*}

\medskip\par\noindent
1.3. $\mathfrak g^{1.3}_\kappa=\langle D+2\kappa J\rangle_{\kappa\geqslant0}$:
\begin{gather*}
u=\frac1{\sqrt{|t|}}(w^1\cos\tau-w^2\sin\tau)+\frac x{2t}-\kappa\frac yt, \\
v=\frac1{\sqrt{|t|}}(w^1\sin\tau+w^2\cos\tau)+\frac y{2t}+\kappa\frac xt,
\end{gather*}
where\quad
$z_1=\dfrac1{\sqrt{|t|}}( x\cos\tau+y\sin\tau)$,\quad
$z_2=\dfrac1{\sqrt{|t|}}(-x\sin\tau+y\cos\tau)$,\quad
$\tau:=\kappa\ln|t|$;\quad
($\hat\kappa:=\kappa\mathop{\rm sgn}\nolimits t$)
\begin{gather*}
w^iw^1_i-w^1_{ii}-2\hat\kappa w^2-\left(\kappa^2+\frac14\right)z_1=0,\\
w^iw^2_i-w^2_{ii}+2\hat\kappa w^1-\left(\kappa^2+\frac14\right)z_2=0.
\end{gather*}

\medskip\par\noindent
1.4. $\mathfrak g^{1.4}_\kappa=\langle P^t+\Pi+\kappa J\rangle_{\kappa\geqslant0}$:
\begin{gather*}
u=\frac1{\sqrt{t^2+1}}(w^1\cos\tau-w^2\sin\tau)+\frac{tx}{t^2+1}-\frac{\kappa y}{t^2+1},\\
v=\frac1{\sqrt{t^2+1}}(w^1\sin\tau+w^2\cos\tau)+\frac{ty}{t^2+1}+\frac{\kappa x}{t^2+1},
\end{gather*}
where\quad
$z_1=\dfrac1{\sqrt{t^2+1}}( x\cos\tau+y\sin\tau)$,\quad
$z_2=\dfrac1{\sqrt{t^2+1}}(-x\sin\tau+y\cos\tau)$,\quad
$\tau:=\kappa\tan^{-1}t$;
\begin{gather*}
w^iw^1_i-w^1_{ii}-2\kappa w^2+(1-\kappa^2)z_1=0,\\
w^iw^2_i-w^2_{ii}+2\kappa w^1+(1-\kappa^2)z_2=0.
\end{gather*}

\medskip\par\noindent
1.5. $\mathfrak g^{1.5}_{\mu}=\langle P^t+\Pi+J+\mu(G^x-P^y)\rangle_{\mu>0}$:
\begin{gather*}
u=\frac{tw^1+w^2}{t^2+1}+\frac{t(x+\mu)}{t^2+1}-\frac{y}{t^2+1},\\
v=\frac{-w^1+tw^2}{t^2+1}+\frac{ty}{t^2+1}+\frac{x-\mu}{t^2+1},
\end{gather*}
where\quad
$z_1=\dfrac{tx-y}{t^2+1}-\mu\arctan t$, \quad
$z_2=\dfrac{x+ty}{t^2+1}$;
\begin{gather*}
w^iw^1_i-w^1_{ii}-2w^2=0,\\
w^iw^2_i-w^2_{ii}+2w^1+2\mu=0.
\end{gather*}

\noprint{
\medskip\par\noindent
1.6. $\mathfrak g^{1.6}=\langle J\rangle$:\quad
$u=\dfrac xrw^1-\dfrac yrw^2$,\quad
$v=\dfrac yrw^1+\dfrac xrw^2$,\quad
where\quad
$z_1=t$, \quad
$z_2=r:=\sqrt{x^2+y^2}$;
\\[1ex](another version of the ansatz can be obtained by replacing
$(w^1,w^2)$ by $(w^1+1/r,w^2/r)$ );
\begin{gather*}
w^1_1+w^1w^1_2-\frac{(w^2)^2}{z_2}-w^1_{22}-\frac{w^1_2}{z_2}+\frac{w^1}{z_2^2}=0,\\
w^2_1+w^1w^2_2+\frac{w^1w^2}{z_2}-w^2_{22}-\frac{w^2_2}{z_2}+\frac{w^2}{z_2^2}=0.
\end{gather*}
}

\medskip\par\noindent
1.6. $\mathfrak g^{1.6}=\langle J\rangle$:\quad
$u=\dfrac xrw^1-\dfrac yrw^2+\dfrac x{r^2}$,\quad
$v=\dfrac yrw^1+\dfrac xrw^2+\dfrac y{r^2}$,\\[1ex]
where \quad
$z_1=t$, \quad
$z_2=r:=\sqrt{x^2+y^2}$;
\begin{gather*}
w^1_1+w^1w^1_2-w^1_{22}-\frac{(w^2)^2}{z_2}-\frac1{z_2^3}=0,\\
w^2_1+w^1w^2_2-w^2_{22}+\frac{w^1w^2}{z_2}+2\frac{w^2}{z_2^2}=0.
\end{gather*}

\medskip\par\noindent
1.7. $\mathfrak g^{1.7}=\langle G^x-P^y\rangle$:\quad
$u=\dfrac{w^1-tw^2+tx-y}{t^2+1}$,\quad
$v=\dfrac{tw^1+w^2+x+ty}{t^2+1}$,\\
where\quad
$z_1=\arctan t$,\quad
$z_2=\dfrac{x+ty}{t^2+1}$;
\begin{gather*}
w^1_1+w^1w^1_2-w^1_{22}-2w^2=0,\\
w^2_1+w^1w^2_2-w^2_{22}+2w^1=0.
\end{gather*}

\medskip\par\noindent
1.8. $\mathfrak g^{1.8}=\langle P^y\rangle$:\quad
$u=w^1$,\quad
$v=w^2$,\quad
where \quad
$z_1=t$,\quad
$z_2=x$;
\begin{gather*}
w^1_1+w^1w^1_2-w^1_{22}=0,\\
w^2_1+w^1w^2_2-w^2_{22}=0.
\end{gather*}

The differential constraint $u_y=v_x$, which is needed for the linearizability of the system~\eqref{eq:BurgersSystem},
is respectively reduced by the above ansatzes to the following differential constraints
in terms of invariant variables:
\begin{gather*}
1.1.\ w^1_2=w^2_1+2\kappa,\quad
1.2.\ w^1_2=w^2_1,\quad
1.3.\ w^1_2=w^2_1+2\kappa\sgn t,\quad
1.4.\ w^1_2=w^2_1+2\kappa,\\
1.5.\ w^1_2=w^2_1+2,\quad
1.6.\ z_2w^2_2+w^2=0,\quad
1.7.\ w^2_2+2=0,\quad
1.8.\ w^2_2=0.\quad
\end{gather*}
For each of the above reduced systems,
only solutions that do not satisfy the differential constraint with the same number
are essential for finding exact solutions of the system~\eqref{eq:BurgersSystem}.

\section{Symmetry analysis of reduced systems of PDEs}\label{sec:SymmetryAnalysisOfReducedSystemsOfPDEs}

We have selected ansatzes in such a way that
the reduced systems are of quite simple form and can be grouped into two sets
depending on their structure,
which is convenient for studying their symmetries and finding exact solutions.

Thus, reduced systems 1.1--1.5 are of the form
\begin{gather*}
w^iw^1_i-w^1_{ii}-2\kappa w^2+\alpha z_1=0,\\
w^iw^2_i-w^2_{ii}+2\kappa w^1+\alpha z_2+\beta=0,
\end{gather*}
where $\kappa$, $\alpha$ and~$\beta$ are constants with $\alpha\beta=0$.
Depending on values of these parameters,
a system of the above form admits the following maximal Lie invariance algebra~$\mathfrak a$:
\[\arraycolsep=0ex
\begin{array}{ll}
\alpha\ne0,\ \beta=0        \colon&\quad\mathfrak a=\langle\tilde J                               \rangle,\\[.5ex]
\alpha=0,\ \beta\ne0        \colon&\quad\mathfrak a=\langle\tilde P^1,\tilde P^2                  \rangle,\\[.5ex]
\alpha=\beta=0,\ \kappa\ne0 \colon&\quad\mathfrak a=\langle\tilde P^1,\tilde P^2,\tilde J         \rangle,\\[.5ex]
\alpha=\beta=\kappa=0       \colon&\quad\mathfrak a=\langle\tilde P^1,\tilde P^2,\tilde J,\tilde D\rangle.
\end{array}
\]
Here we denote
\begin{gather*}
\tilde P^1=\p_{z_1},\quad
\tilde P^2=\p_{z_2},\quad
\tilde J=z_1\p_{z_2}-z_2\p_{z_1}+w^1\p_{w^2}-w^2\p_{w^1},\\
\tilde D=z_1\p_{z_1}+z_2\p_{z_2}-w^1\p_{w^1}-w^2\p_{w^2}.
\end{gather*}
As a result, the maximal Lie invariance algebras of reduced systems~1.1--1.5 are respectively
\begin{gather*}
\mathfrak a^1=\langle\tilde J\rangle \quad\mbox{if}\quad \kappa=1 \qquad\mbox{and}\qquad
\mathfrak a^1=\langle\tilde P^1,\tilde P^2,\tilde J,\tilde D\rangle \quad\mbox{if}\quad \kappa=0;
\\
\mathfrak a^2=\langle\tilde P^1,\tilde P^2\rangle;
\qquad
\mathfrak a^3=\langle\tilde J\rangle;
\\
\mathfrak a^4=\langle\tilde J\rangle \quad\mbox{if}\quad \kappa\ne1 \qquad\mbox{and}\qquad
\mathfrak a^4=\langle\tilde P^1,\tilde P^2,\tilde J\rangle \quad\mbox{if}\quad \kappa=1;
\qquad
\mathfrak a^5=\langle\tilde P^1,\tilde P^2\rangle.
\end{gather*}

The other reduced systems, 1.6--1.8, are of the form
\begin{gather*}
w^j_1+w^1w^j_2-w^j_{22}+F^j(z_2,w^1,w^2)=0,\quad j=1,2,
\end{gather*}
where the parameter functions $F^j=F^j(z_1,z_2,w^1,w^2)$ are at most quadratic in $(w^1,w^2)$.
The maximal Lie invariance algebras of these reduced systems are of different structure:
\begin{gather*}
\mathfrak a^6=\langle
\p_{z_1},\
2z_1\p_{z_1}+z_2\p_{z_2}-w^1\p_{w^1}-w^2\p_{w^2},\\\phantom{\mathfrak a^6=\langle}
z_1^2\p_{z_1}+z_1z_2\p_{z_2}+(z_2-z_1w^1)\p_{w^1}-z_1w^2\p_{w^2}
\rangle,
\\[1ex]
\mathfrak a^7=\langle
\p_{z_1},\
\p_{z_2},\
\cos(2z_1)\p_{z_2}-2\sin(2z_1)\p_{w^1}-2\cos(2z_1)\p_{w^2},\\\phantom{\mathfrak a^7=\langle}
\sin(2z_1)\p_{z_2}+2\cos(2z_1)\p_{w^1}-2\sin(2z_1)\p_{w^2}
\rangle,
\\[1ex]
\mathfrak a^8=\langle
\p_{z_1},\
2z_1\p_{z_1}+z_2\p_{z_2}-w^1\p_{w^1},\
z_1^2\p_{z_1}+z_1z_2\p_{z_2}+(z_2-z_1w^1)\p_{w^1},\\\phantom{\mathfrak a^8=\langle}
\p_{z_2},\
z_1\p_{z_2}+\p_{w^1},\
\p_{w^2},\
w^2\p_{w^2},\
w^1\p_{w^2},\
(z_2-z_1w^1)\p_{w^2}
\rangle.
\end{gather*}

If a~reduced system possesses Lie symmetries
that are not induced by Lie symmetries of an original system,
then the original system is said to admit \emph{additional}~\cite{olve1993b}
(or \emph{hidden}~\cite{abra2008a}) symmetries
with respect to the corresponding reduction.
The first example of such symmetries was constructed in~\cite{kapi1978a}
for the axisymmetric reduction of the incompressible Euler equations,
and this example was discussed in~\cite[Example~3.5]{olve1993b}.
Hidden symmetries of the Navier--Stokes equations were comprehensively studied in~\cite{fush1994a,fush1994b}.

In order to clarify which Lie symmetries of reduced systems~1.1--1.8 are induced
by Lie symmetries of the original Burgers system~\eqref{eq:BurgersSystem},
for each $m\in\{1,\dots,8\}$ we compute the normalizer of the subalgebra~$\mathfrak g^{1.m}$
in the algebra~$\mathfrak g$,
\[{\rm N}_{\mathfrak g}(\mathfrak g^{1.m})=\{Q\in\mathfrak g\mid [Q,Q']\in\mathfrak g^{1.m}\ \mbox{for all}\ Q'\in\mathfrak g^{1.m}\}.\]
The algebra of induced Lie symmetries of reduced system~1.$m$
is isomorphic to the quotient algebra ${\rm N}_{\mathfrak g}(\mathfrak g^{1.m})/\mathfrak g^{1.m}$.
Therefore, all Lie symmetries of reduced system~1.$m$
are induced by Lie symmetries of the original Burgers system~\eqref{eq:BurgersSystem}
if and only if  $\dim\mathfrak a^m=\dim{\rm N}_{\mathfrak g}(\mathfrak g^{1.m})-1$.
Thus,
\begin{gather*}
{\rm N}_{\mathfrak g}(\mathfrak g^{1.1}_\kappa)=\langle P^t,J\rangle\mbox{ \ if \ } \kappa=1 \quad\mbox{and}\quad
{\rm N}_{\mathfrak g}(\mathfrak g^{1.1}_\kappa)=\langle P^t,D,J,P^x,P^y\rangle \mbox{ \ if \ } \kappa=0,
\\
{\rm N}_{\mathfrak g}(\mathfrak g^{1.2})=\langle P^t+G^y,P^x,P^y\rangle,\quad
{\rm N}_{\mathfrak g}(\mathfrak g^{1.3}_\kappa)=\langle D,J\rangle,
\\
{\rm N}_{\mathfrak g}(\mathfrak g^{1.4}_\kappa)=\langle P^t+\Pi,J\rangle
\ \mbox{if}\ \kappa\ne1 \quad\mbox{and}\quad
{\rm N}_{\mathfrak g}(\mathfrak g^{1.4}_\kappa)=\langle P^t+\Pi,J,G^x-P^y,G^y+P^x\rangle \ \mbox{if}\ \kappa=1,
\\
{\rm N}_{\mathfrak g}(\mathfrak g^{1.5}_\mu)=\langle P^t+\Pi+J,G^x-P^y,G^y+P^x\rangle,\quad
{\rm N}_{\mathfrak g}(\mathfrak g^{1.6})=\langle P^t,D,\Pi,J\rangle,
\\
{\rm N}_{\mathfrak g}(\mathfrak g^{1.7})=\langle P^t+\Pi+J,P^x,P^y,G^x,G^y\rangle,\quad
{\rm N}_{\mathfrak g}(\mathfrak g^{1.8})=\langle P^t,D,P^x,P^y,G^x,G^y\rangle.
\end{gather*}
Comparing the dimensions of ${\rm N}_{\mathfrak g}(\mathfrak g^{1.m})$ and~$\mathfrak a^m$,
we conclude that
all Lie symmetries of reduced systems~1.1--1.7 are induced by Lie symmetries
of the original Burgers system~\eqref{eq:BurgersSystem}
but this is not the case for reduced system~1.8.
Therefore, the study of further Lie reductions of reduced systems~1.1--1.7 to systems of ODEs is needless
since it is more efficient to directly reduce the system~\eqref{eq:BurgersSystem} to systems of ODEs
using two-dimensional subalgebras of the algebra~$\mathfrak g$,
which is done in Section~\ref{sec:BurgersSystemLieReductionOfCodim2}.
In general, each direct reduction of codimension two corresponds to several two-step reductions.

We denote basis elements of~the algebra~$\mathfrak g^{1.8}$ by
$\hat P^1$, $\hat D$, $\hat\Pi$, $\hat P^2$, $\hat G$, $\hat G'$, $\hat I^{22}$, $\hat I^{21}$ and $\hat S$
following the order in which these elements are listed above.
Since the expressions of invariant variables for the algebra~$\mathfrak g^{1.8}$ are especially simple,
$z_1=t$, $z_2=x$, $w^1=u$ and $w^2=v$,
it is obvious that the elements of ${\rm N}_{\mathfrak g}(\mathfrak g^{1.8})$ induce the subalgebra
$\mathfrak s=\langle
\hat P^1, 
\hat D  , 
\hat P^2, 
\hat G  , 
\hat G'   
\rangle$
of~$\mathfrak a^8$.
All vector fields from the complement of~$\mathfrak s$ in~$\mathfrak a^8$
are hidden symmetries of the system~\eqref{eq:BurgersSystem}
that are associated with the subalgebra~$\mathfrak g^{1.8}$ of the Lie invariance algebra~$\mathfrak g$.

Reduced system~1.8 is also singular from the point of view of other properties;
cf.\ Section~\ref{sec:BurgersSystemCLs}.
This system is partially coupled.
Its first equation is only in~$w^1$ and coincides with the classical Burgers equation,
and its second equation is linear with respect to~$w^2$.
These two facts give us the hint that reduced system~1.8 can be linearized
by a substitution related to the Hopf--Cole transformation.
Indeed, substituting
\begin{gather}\label{eq:Hopf--Cole-likeTransForReducedSystem1.8}
w^1=-2\frac{\tilde w^1_2}{\tilde w^1}, \quad
w^2=\frac{\tilde w^2}{\tilde w^1}
\end{gather}
into reduced system~1.8, we obtain
\[
-2\left(\frac{\tilde w^1_1-\tilde w^1_{22}}{\tilde w^1}\right)_2=0,\quad
\frac{\tilde w^1(\tilde w^2_1-\tilde w^2_{22})-\tilde w^2(\tilde w^1_1-\tilde w^1_{22})}{(\tilde w^1)^2}=0.
\]
\looseness=-1
We follow the standard procedure of applying the Hopf--Cole transformation
and integrate once the first of the obtained equations with respect to~$z_2$.
This results in the equation $\tilde w^1_1-\tilde w^1_{22}=h\tilde w^1$,
where $h=h(z_1)$ is an arbitrary smooth function of~$z_1=t$.
Since the new unknown function~$\tilde w^1$ in the Hopf--Cole transformation
is defined up to a multiplier being an arbitrary nonvanishing smooth function of~$z_1$,
we can set the function~$h$ to be identically equal to~$0$, i.e., $\tilde w^1_1-\tilde w^1_{22}=0$.
Then the second obtained equation implies $\tilde w^2_1-\tilde w^2_{22}=0$.
Summing up, reduced system~1.8 is linearized by the substitution~\eqref{eq:Hopf--Cole-likeTransForReducedSystem1.8}
to the decoupled system of two copies of the (1+1)-dimensional linear heat equation,
\begin{gather}\label{eq:SystemOfTwoLinHeatEqs}
\tilde w^j_1=\tilde w^j_{22},\quad j=1,2.
\end{gather}
As a result, we can construct wide families of exact invariant solutions of the Burgers system~\eqref{eq:BurgersSystem}
using ansatz~1.8, the substitution~\eqref{eq:Hopf--Cole-likeTransForReducedSystem1.8} and
known exact solutions of the (1+1)-dimensional linear heat equation.
This is why the situation for Lie reduction~1.8 is opposite to that for Lie reductions~1.1--1.7:
all Lie reductions of the system~\eqref{eq:BurgersSystem} with respect to two-dimensional subalgebras of the algebra~$\mathfrak g$
that are equivalent to two-step Lie reductions with reduction~1.8 as the first step
should be excluded from the consideration as needless.

$\mathfrak g^{1.8}$-invariant solutions of the system~\eqref{eq:BurgersSystem} satisfy
the differential constraint $u_y=v_y=0$ that differs from the differential constraint $u_y=v_x$.
Therefore, we obtain one more linearizable subset of solutions of the system~\eqref{eq:BurgersSystem}
although it is narrower than the subset singled out by the differential constraint $u_y=v_x$
since it is parameterized by two solutions of the (1+1)-dimensional linear heat equation
in contrast to the latter subset parameterized by one solution of the (1+2)-dimensional linear heat equation.
At the same time, the former subset can be extended by transformations from~$G$
that do not preserve~$\mathfrak g^{1.8}$,
whereas the latter subset is $G$-invariant.
It is obvious that the intersection of these subsets is associated with the differential constraint $u_y=v_x=v_y=0$
This means in terms of the invariants of the subalgebra~$\mathfrak g^{1.8}$
that $w^2$ is a constant, i.e., $\tilde w^2$ is proportional to~$\tilde w^1$.

The Lie invariance algebra~$\mathfrak p$ of the system~\eqref{eq:SystemOfTwoLinHeatEqs}
is much wider than that of reduced system~1.8.
It is spanned by the vector fields
\begin{gather*}
\tilde P^1=\p_{z_1},\quad
\tilde D  =2z_1\p_{z_1}+z_2\p_{z_2},\\
\tilde \Pi=4z_1^2\p_{z_1}+4z_1z_2\p_{z_2}-(z_2^2+2z_1)\tilde w^1\p_{\tilde w^1}-(z_2^2+2z_1)\tilde w^2\p_{\tilde w^2},\\
\tilde P^2=\p_{z_2},\quad
\tilde G  =2z_1\p_{z_2}-z_2\tilde w^1\p_{\tilde w^1}-z_2\tilde w^2\p_{\tilde w^2},\quad
\tilde I^{ij}=\tilde w^j\p_{\tilde w^i},\quad
f^i(z_1,z_2)\p_{\tilde w^i},\quad i,j=1,2,
\end{gather*}
where $f^1=f^1(z_1,z_2)$ and $f^2=f^2(z_1,z_2)$ run through the solution set of the linear heat equation $f_1=f_{22}$.
The single linear heat equation admits the two independent recursion operators 
$\mathcal R_1=\mathrm D_2$ and $\mathcal R_2=2z_1\mathrm D_2+z_2$ with commutation relation $\mathcal R_1\mathcal R_2-\mathcal R_2\mathcal R_1=1$
\cite[Example~5.21]{olve1993b}.
Here $\mathrm D_2$ denotes the operator of total derivative with respect to the variable $z_2$;
cf.\ Section~\ref{sec:BurgersSystemCLs}.
Since the system~\eqref{eq:SystemOfTwoLinHeatEqs} consists of two copies of the linear heat equation 
that are not coupled to each other, it possesses the eight recursion operators $E_{ij}\mathcal R_1$ and $E_{ij}\mathcal R_2$,
where $E_{ij}$ denotes the $2\times 2$ matrix with unit in the $i$th row and the $j$th column and with zero otherwise,
$i,j=1,2$.
Generalized symmetries of the system~\eqref{eq:SystemOfTwoLinHeatEqs} 
can be easily constructed by its recursion operators acting iteratively on the characteristic $(\tilde w^1,\tilde w^2)$
of the Lie symmetry vector field $\tilde w^1\p_{\tilde w^1}+\tilde w^2\p_{\tilde w^2}$.
Since the system~\eqref{eq:SystemOfTwoLinHeatEqs} can be interpreted
as a potential system of reduced system~1.8 (see again Section~\ref{sec:BurgersSystemCLs}),
Lie (resp. generalized) symmetries of the system~\eqref{eq:SystemOfTwoLinHeatEqs}
can be interpreted as potential (resp. potential generalized) symmetries of reduced system~1.8
and hence as hidden potential (resp. hidden potential generalized) symmetries of the original Burgers system~\eqref{eq:BurgersSystem}.
In fact, some of potential symmetries of reduced system~1.8 correspond to its Lie symmetry.
In order to find this correspondence, we prolong elements of the algebra~$\mathfrak p$ to $\tilde w^1_2$
and then, using~\eqref{eq:Hopf--Cole-likeTransForReducedSystem1.8}, to~$w^1$ and~$w^2$.
Each vector field from~$\mathfrak p$ whose prolongation is projectable to the space of $(z_1,z_2,w^1,w^2)$
induces an element of~$\mathfrak a^8$, which just coincides with the projection of the prolonged vector field.
A similar procedure can be applied to generalized symmetries of the system~\eqref{eq:SystemOfTwoLinHeatEqs}.
Thus,%
\noprint{
\begin{gather*}
\tilde P^1   \to \hat P^1   ,\quad
\tilde D     \to \hat D     ,\quad
\tilde \Pi   \to4\hat \Pi   ,\quad
\tilde P^2   \to \hat P^2   ,\quad
\tilde G     \to2\hat G     ,\quad
\tilde I^{21}\to \hat P^1   ,\quad
\tilde I^{22}\to \hat I^{22}.
\end{gather*}
}
$\tilde P^1   \to \hat P^1   $,
$\tilde D     \to \hat D     $,
$\tilde \Pi   \to4\hat \Pi   $,
$\tilde P^2   \to \hat P^2   $,
$\tilde G     \to2\hat G     $,
$\tilde I^{21}\to \hat P^1   $,
$\tilde I^{22}\to \hat I^{22}$.
The basis element~$\tilde I^{11}$ of~$\mathfrak p$ is mapped to 0,
and the basis element~$\tilde I^{12}$ induces the generalized symmetry $-2w^2_2\p_{w^1}-(w^2)^2\p_{w^2}$ of reduced system~1.8.
Vector fields of the form $f^i\p_{w^i}$, which are associated with linear superposition of solutions of the system~\eqref{eq:SystemOfTwoLinHeatEqs},
have no counterparts among local infinitesimal symmetries of reduced system~1.8.
The last two basis elements~$\hat I^{21}$ and~$\hat S$ of the algebra~$\mathfrak a^8$
are induced by the generalized symmetries $-2\tilde w^1_2\p_{\tilde w^2}$ and $(2z_1\tilde w^1_2+2\tilde w^1)\p_{\tilde w^2}$
of the system~\eqref{eq:SystemOfTwoLinHeatEqs},
which are the results of acting by the recursion operators $-2E_{21}\mathcal R_1$ and $E_{21}\mathcal R_2$
on the Lie symmetry vector field $\tilde w^1\p_{\tilde w^1}+\tilde w^2\p_{\tilde w^2}$, respectively.
This gives an additional justification that the study of Lie reductions of reduced system~1.8 is needless.

\section{Lie reductions of codimension two}\label{sec:BurgersSystemLieReductionOfCodim2}

Since the reduced system constructed with the subalgebra $\mathfrak g^{1.8}=\langle P^y\rangle$ is linearizable,
a two-dimensional subalgebra of~$\mathfrak g$ is essential
for using in the course of Lie reduction of the system~\eqref{eq:BurgersSystem}
only if it does not contain the vector field~$P^y$
or, more generally, a vector field $G$-equivalent to~$P^y$,
which is a $G$-invariant property.
Note that the subalgebra $\mathfrak g^{2.10}$ is totally not appropriate for using within the framework of Lie reduction.
Therefore, only the subalgebras $\mathfrak g^{2.1}_\kappa$--$\mathfrak g^{2.6}_\mu$
are essential for Lie reduction among the two-dimensional listed inequivalent subalgebras.
Below for each of these subalgebras,
we present an ansatz constructed for $(u,v)$ and the corresponding reduced system.
Here $\varphi^i=\varphi^i(\omega)$, $i=1,2$,
are new unknown functions of the invariant independent variable~$\omega$,
and $r:=\sqrt{x^2+y^2}$.

\bigskip\par\noindent
2.1. $\mathfrak g^{2.1}_\kappa =\langle P^t,D+\kappa J\rangle_{\kappa\geqslant0}$:\\[1ex]
$u=\dfrac x{r^2}\varphi^1-\dfrac y{r^2}\varphi^2$,\quad
$v=\dfrac y{r^2}\varphi^1+\dfrac x{r^2}\varphi^2$,\quad
where\quad
$\omega=\arctan\dfrac yx-\kappa\ln r$;
\begin{gather*}
(\varphi^2-\kappa\varphi^1-2\kappa)\varphi^1_\omega-(\kappa^2+1)\varphi^1_{\omega\omega}+2\varphi^2_\omega-(\varphi^1)^2-(\varphi^2)^2=0,\\
(\varphi^2-\kappa\varphi^1-2\kappa)\varphi^2_\omega-(\kappa^2+1)\varphi^2_{\omega\omega}-2\varphi^1_\omega=0.
\end{gather*}

\par\noindent
2.2. $\mathfrak g^{2.2}=\langle P^t,J\rangle$:\quad
$u=\dfrac{x\varphi^1-y\varphi^2}r+\dfrac x{r^2}$,\quad
$v=\dfrac{y\varphi^1+x\varphi^2}r+\dfrac y{r^2}$,\quad
where $\omega=r$;
\begin{gather*}
\varphi^1\varphi^1_\omega-\varphi^1_{\omega\omega}-\frac{(\varphi^2)^2}{\omega}-\frac1{\omega^3}=0,\\
\varphi^1\varphi^2_\omega-\varphi^2_{\omega\omega}+\frac{\varphi^1\varphi^2}{\omega}+2\frac{\varphi^2}{\omega^2}=0.
\end{gather*}

\par\noindent
2.3. $\mathfrak g^{2.3}=\langle D,J\rangle$:\quad
$u=\dfrac{x\varphi^1-y\varphi^2}{r\sqrt{|t|}}+\dfrac x{r^2}+\dfrac x{2t}$,\quad
$v=\dfrac{y\varphi^1+x\varphi^2}{r\sqrt{|t|}}+\dfrac y{r^2}+\dfrac y{2t}$,\quad
where \
$\omega=\dfrac r{\sqrt{|t|}}$;
\begin{gather*}
\varphi^1\varphi^1_\omega-\varphi^1_{\omega\omega}-\frac{(\varphi^2)^2}{\omega}-\frac1{\omega^3}-\frac\omega4=0,\\
\varphi^1\varphi^2_\omega-\varphi^2_{\omega\omega}+\frac{\varphi^1\varphi^2}{\omega}+2\frac{\varphi^2}{\omega^2}=0.
\end{gather*}

\medskip\par\noindent
2.4. $\mathfrak g^{2.4}=\langle P^t+\Pi,J\rangle$:\\[.5ex]
$u=\dfrac{x\varphi^1-y\varphi^2}{r\sqrt{t^2+1}}+\dfrac x{r^2}+\dfrac{tx}{t^2+1}$,\quad
$v=\dfrac{y\varphi^1+x\varphi^2}{r\sqrt{t^2+1}}+\dfrac y{r^2}+\dfrac{ty}{t^2+1}$,\quad
where \
$\omega=\dfrac r{\sqrt{t^2+1}}$;
\begin{gather*}
\varphi^1\varphi^1_\omega-\varphi^1_{\omega\omega}-\frac{(\varphi^2)^2}{\omega}-\frac1{\omega^3}+\omega=0,\\
\varphi^1\varphi^2_\omega-\varphi^2_{\omega\omega}+\frac{\varphi^1\varphi^2}{\omega}+2\frac{\varphi^2}{\omega^2}=0.
\end{gather*}

\medskip\par\noindent
2.5. $\mathfrak g^{2.5}_\mu=\langle P^t+\Pi+J+\mu(G^y+P^x),G^x-P^y\rangle_{\mu\geqslant0}$:\\[.5ex]
$u=\dfrac{\varphi^1-t\varphi^2+tx-y+\mu}{t^2+1}$,\quad
$v=\dfrac{t\varphi^1+\varphi^2+x+ty+\mu t}{t^2+1}$,\quad
where\quad
$\omega=\dfrac{x+ty}{t^2+1}-\mu\arctan t$;
\begin{gather*}
\varphi^1\varphi^1_\omega-\varphi^1_{\omega\omega}-2\varphi^2=0,\\
\varphi^1\varphi^2_\omega-\varphi^2_{\omega\omega}+2\varphi^1+2\mu=0.
\end{gather*}

\noprint{
\medskip\par\noindent
2.6. $\mathfrak g^{2.6}_{\kappa}=\langle G^x-P^y,G^y+P^x+\kappa P^y\rangle_{\kappa\geqslant0}$:\\[.5ex]
$u=\dfrac{(t+\kappa)\varphi^1-\varphi^2+(t+\kappa)x-y}{t(t+\kappa)+1}$,\quad
$v=\dfrac{\varphi^1+t\varphi^2+x+ty}{t(t+\kappa)+1}$,\quad
where\quad
$\omega=t$;
\quad$\varphi^1_\omega=\varphi^2_\omega=0$.
}

\medskip\par\noindent
2.6. $\mathfrak g^{2.6}_{\mu}=\langle G^x-P^y,G^y+\mu P^x\rangle_{\mu>0}$:\\[.5ex]
$u=\dfrac{t\varphi^1-\mu\varphi^2+tx-\mu y}{t^2+\mu}$,\quad
$v=\dfrac{\varphi^1+t\varphi^2+x+ty}{t^2+\mu}$,\quad
where\quad
$\omega=t$;
\begin{gather*}
\varphi^1_\omega=\varphi^2_\omega=0.
\end{gather*}

The differential constraint $u_y=v_x$ singling out the widest linearizable solution subset of the system~\eqref{eq:BurgersSystem}
is respectively reduced by ansatzes~2.1--2.6 to the following differential constraints
in terms of invariant variables:
\begin{gather*}
2.1.\ \varphi^1_\omega+\kappa\varphi^2_\omega=0.\quad
\mbox{2.2.--2.4.}\ \omega\varphi^2_\omega+\varphi^2=0.\quad
2.5.\ \varphi^2_\omega=-2.\quad
2.6.\ 1=0.
\end{gather*}
For each of the above reduced systems of ODEs,
only solutions that do not satisfy the differential constraint with the same number
may give new exact solutions of the system~\eqref{eq:BurgersSystem}.

The maximal Lie invariance algebras of reduced systems~2.1--2.6 are the following:
\begin{gather*}
2.1.\ \langle\p_\omega\rangle; \quad
2.2.\ \langle\omega\p_\omega-\varphi^1\p_{\varphi^1}-\varphi^2\p_{\varphi^2}\rangle; \quad
2.3.\ \{0\};\quad
2.4.\ \{0\};\quad
2.5.\ \langle\p_\omega\rangle;\\
2.6.\ \langle\xi(\omega,\varphi^1,\varphi^2)\p_\omega+\eta^1(\varphi^1,\varphi^2)\p_{\varphi^1}+\eta^2(\varphi^1,\varphi^2)\p_{\varphi^2}\rangle,
\end{gather*}
where $\xi$, $\eta^1$ and~$\eta^2$ run through the sets of smooth functions of their arguments.
Analogously to Lie reductions of codimension one,
we use the procedure of determining which Lie symmetries of reduced systems~2.1--2.6 are induced
by Lie symmetries of the original Burgers system~\eqref{eq:BurgersSystem}.
For each $m\in\{1,\dots,6\}$ we find the normalizer of the subalgebra~$\mathfrak g^{2.m}$
in the algebra~$\mathfrak g$,
${\rm N}_{\mathfrak g}(\mathfrak g^{2.m})=\{Q\in\mathfrak g\mid [Q,Q']\in\mathfrak g^{2.m}\ \mbox{for all}\ Q'\in\mathfrak g^{2.m}\}$.
We get
\begin{gather*}
{\rm N}_{\mathfrak g}(\mathfrak g^{2.1}_\kappa)=
{\rm N}_{\mathfrak g}(\mathfrak g^{2.2})=\langle P^t, D,J\rangle,\quad
{\rm N}_{\mathfrak g}(\mathfrak g^{2.3})=\langle D,J\rangle,\quad
{\rm N}_{\mathfrak g}(\mathfrak g^{2.4})=\langle P^t+\Pi,J\rangle,\\
{\rm N}_{\mathfrak g}(\mathfrak g^{2.5}_\mu)=\langle P^t+\Pi+J,G^y+P^x,G^x-P^y\rangle,\\
{\rm N}_{\mathfrak g}(\mathfrak g^{2.6}_\kappa)=\langle P^x,P^y,G^x,G^y,\mu P^t+\Pi,(\mu+1)P^t-J\rangle.
\end{gather*}
The algebra of induced Lie symmetries of reduced system~2.$m$
is isomorphic to the quotient algebra ${\rm N}_{\mathfrak g}(\mathfrak g^{2.m})/\mathfrak g^{2.m}$.
Therefore, all Lie symmetries of reduced system~2.$m$
are induced by Lie symmetries of the original Burgers system~\eqref{eq:BurgersSystem}
if and only if  $\dim\mathfrak a^m=\dim{\rm N}_{\mathfrak g}(\mathfrak g^{2.m})-2$.
This is the case for each $m\in\{1,\dots,5\}$.
In other words, the Burgers system~\eqref{eq:BurgersSystem}
possesses no hidden symmetries associated with reductions~2.1--2.5.
Reduction~2.6 is singular since the corresponding reduced system consists of two first-order ODEs.
The maximal Lie invariance algebra of this system is infinite-dimensional.
Therefore, the system~\eqref{eq:BurgersSystem} admits many hidden symmetries
related to reduction~2.6
although these symmetries are not of interest
since reduced system~2.6 is trivially integrable, $\varphi^1,\varphi^2=\const$, i.e., $\varphi^1,\varphi^2=0\bmod G$,
and the associated solutions of the system~\eqref{eq:BurgersSystem} are quite simple.

\section{Stationary similarity solutions}\label{sec:SolutionOfReducedSystemsOfODEs}

Following~\cite[Eq.~(58)]{raja2008a}, we take another ansatz for the algebra~$\mathfrak g^{2.1}_0=\langle P^t,D\rangle$,
\begin{gather}\label{eq:BurgersSystemAnzatz2.1kappa0}
u=\frac{\varphi^1(\omega)}y,\quad
v=\frac{\varphi^2(\omega)}y\quad\mbox{with}\quad
\omega=\frac xy.
\end{gather}

The corresponding reduced system is
\begin{gather}\label{eq:BurgersSystemNewReducedSystem2.1kappa0}
(1+\omega^2)\varphi^i_{\omega\omega}+4\omega\varphi^i_\omega+2\varphi^i+\varphi^2(\omega\varphi^i_\omega+\varphi^i)-\varphi^1\varphi^i_\omega=0,\quad i=1,2.
\end{gather}
In order to construct particular solutions of the system~\eqref{eq:BurgersSystemNewReducedSystem2.1kappa0},
we set the additional constraint $\varphi^2_\omega=0$,
under which the second equation of the system~\eqref{eq:BurgersSystemNewReducedSystem2.1kappa0} implies $\varphi^2\in\{0,-2\}$.
Substituting each of these values of~$\varphi^2$ into the first equation of~\eqref{eq:BurgersSystemNewReducedSystem2.1kappa0},
we obtain a completely integrable equation.
As the first step of integration, we integrate this equation once to the Riccati equation
\begin{equation}\label{eq:BurgersSystemRicattiEq}
(1+\omega^2)\varphi^1_\omega+(\varphi^2+2)\omega\varphi^1-\frac12(\varphi^1)^2=-2A, \quad \mbox{where}\quad A=\const,
\end{equation}
which is reduced, in the standard way, by the substitution
\begin{equation}\label{eq:BurgersSystemRicattiSubs}
\varphi^1=-2(1+\omega^2)\frac{\psi_\omega}\psi\quad\mbox{with}\quad \psi=\psi(\omega)
\end{equation}
into the second-order linear ODE
\begin{equation}\label{eq:BurgersSystem2ndOrderLinEq}
(1+\omega^2)^2\psi_{\omega\omega}+(\varphi^2+4)\omega(1+\omega^2)\psi_\omega-A\psi=0.
\end{equation}

For $\varphi^2=0$, the equation~\eqref{eq:BurgersSystem2ndOrderLinEq} was integrated in elementary functions in~\cite{raja2008a}.
The form of the general solution depends on the constant parameter~$A$,
\begin{gather*}
\psi=C_1+C_2\left(\arctan\omega+\frac\omega{1+\omega^2}\right)\quad\mbox{if}\quad A=0,
\\[.5ex]
\psi=C_1\frac\omega{\sqrt{1+\omega^2}}+C_2\frac{\omega\arctan\omega+1}{\sqrt{1+\omega^2}}\quad\mbox{if}\quad A=1,
\\[1ex]
\psi=
C_1\frac{\omega-\alpha}{\sqrt{1+\omega^2}}e^{-\alpha\arctan\omega}+
C_2\frac{\omega+\alpha}{\sqrt{1+\omega^2}}e^{\alpha\arctan\omega}\quad\mbox{with}\quad \alpha=\sqrt{A-1} \quad\mbox{if}\quad A>1,
\\[1ex]
\psi=
C_1\frac{\omega\cos(\beta\arctan\omega)-\beta\sin(\beta\arctan\omega)}{\sqrt{1+\omega^2}}+
C_2\frac{\omega\sin(\beta\arctan\omega)+\beta\cos(\beta\arctan\omega)}{\sqrt{1+\omega^2}}\\
\qquad\mbox{with}\quad \beta=\sqrt{1-A} \quad\mbox{if}\quad A<1\quad\mbox{and}\quad A\ne0.
\end{gather*}
Here and in what follows $C$'s are arbitrary constants.
For integer values of~$\beta$, the last solution can be rewritten in terms of the polynomials
\[
P(\omega)=\sum_{i=0}^{\bigl[\frac\beta2\bigr]}(-1)^i\binom{\beta}{2i}\omega^{\beta-2i}, \quad
Q(\omega)=\sum_{i=0}^{\bigl[\frac{\beta+1}2\bigr]+1}(-1)^i\binom{\beta}{2i+1}\omega^{\beta-2i-1},
\]
\[
\psi=
C_1\frac{\omega P(\omega)-\beta Q(\omega)}{(1+\omega^2)^{\frac{\beta+1}2}}+
C_2\frac{\omega Q(\omega)+\beta P(\omega)}{(1+\omega^2)^{\frac{\beta+1}2}}.
\]

The integration of the equation~\eqref{eq:BurgersSystem2ndOrderLinEq} with $\varphi^2=-2$ is much simpler.
This equation can be represented in the form $(1+\omega^2)((1+\omega^2)\psi_\omega)_\omega-A\psi=0$
and hence it is reduced to the constant-coefficient linear ODE $\psi_{\zeta\zeta}=A\psi$
by the change of the independent variable $\zeta=\arctan\omega$.
Depending on the constant parameter~$A$, we obtain the general solution
\begin{gather*}
\psi=C_1+C_2\arctan\omega\quad\mbox{if}\quad A=0,
\\[1ex]
\psi=C_1e^{-\alpha\arctan\omega}+C_2e^{\alpha\arctan\omega}\quad\mbox{with}\quad \alpha=\sqrt A \quad\mbox{if}\quad A>0,
\\[1ex]
\psi=C_1\cos(\beta\arctan\omega)+C_2\sin(\beta\arctan\omega)\quad\mbox{with}\quad \beta=\sqrt{-A} \quad\mbox{if}\quad A<0.
\end{gather*}
Again, for integer values of~$\beta$, the last solution can be rewritten in terms of the above polynomials~$P$ and~$Q$,
\[
\psi=\frac{C_1P(\omega)+C_2Q(\omega)}{(1+\omega^2)^{\beta/2}}.
\]

Obtained expressions for the function~$\psi$, depending on either $\varphi^2=0$ or $\varphi^2=-2$,
together with ansatz~\eqref{eq:BurgersSystemAnzatz2.1kappa0} and
representation~\eqref{eq:BurgersSystemRicattiSubs} for the function~$\varphi^1$
give stationary similarity solutions of the Burgers system~\eqref{eq:BurgersSystem}.

The ansatz~\eqref{eq:BurgersSystemAnzatz2.1kappa0} reduces the differential constraint~\eqref{eq:BurgersSystemLinearizationConstraint} 
to the equation $\omega\varphi^1_\omega+\varphi^1+\varphi^2_\omega=0$.
This equation is satisfied by solutions of the system~\eqref{eq:BurgersSystemNewReducedSystem2.1kappa0} with $\varphi^2=\const$
only if $\varphi^1=0$ or $\varphi^1=-2/\omega$,
which is a very minor subset of the set of above solutions.
Moreover, constructed solutions of the Burgers system~\eqref{eq:BurgersSystem} are equivalent to shift-invariant ones
only if $\varphi^1=-2/\omega$ for $\varphi^2=0$ or $\varphi^1=\const$.

\noprint{
Ansatz~2.1 and the corresponding reduced system can also be used
for the construction of other $\mathfrak g^{2.1}_0$-invariant solutions of the system~\eqref{eq:BurgersSystem}.
Suppose that $\varphi^1=\const$.
Then from reduced system~2.1 we have two equations for $\varphi^2$,
$2\varphi^2_\omega=(\varphi^2)^2+(\varphi^1)^2$ and
$\varphi^2_{\omega\omega}=\varphi^2\varphi^2_\omega$,
and the second equation is a differential consequence of the first one,
which is a constant coefficient Riccati equation.
The general solution of this equation is
\[
\varphi^2=2C_0\tan(C_0\omega+C_1) \quad\mbox{if}\quad C_0:=\frac{\varphi^1}2\ne0
\quad\mbox{and}\quad
\varphi^2=-\frac1{\omega+C_1} \quad\mbox{if}\quad \varphi^1=0.
\]
Using rotation Lie symmetries generated by~$J$, we can set the constant~$C_1$ to be equal 0.
As a result, we construct the following solutions of the system~\eqref{eq:BurgersSystem}:
\begin{gather*}
u=\frac{2C_0}{r^2}\left(x-y\tan\left(C_0\arctan\frac yx\right)\right), \quad
v=\frac{2C_0}{r^2}\left(y+x\tan\left(C_0\arctan\frac yx\right)\right),
\\
u=\frac y{r^2\arctan(y/x)},\quad
v=-\frac x{r^2\arctan(y/x)}.
\end{gather*}
}

\section{Conservation laws}\label{sec:BurgersSystemCLs}

In contrast to Lie symmetries, properties of the Burgers system~\eqref{eq:BurgersSystem} related to local conservation laws
are poor.
(See \cite{boch1999a,olve1993b,popo2008a,vino1984a} for definitions of related notions and necessary theoretical results.)

\begin{proposition}\label{pro:BurgersSystemCLs}
The Burgers system~\eqref{eq:BurgersSystem} admits no nontrivial cosymmetries
and, therefore, no nonzero local conservation laws.
\end{proposition}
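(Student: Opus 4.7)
The plan is to reduce the statement about local conservation laws to a statement about cosymmetries and then kill the cosymmetries by an order-reduction argument on the adjoint linearization. Since~\eqref{eq:BurgersSystem} is an evolutionary (and in particular normal) system, each equivalence class of local conservation laws admits a unique, up to trivial terms, characteristic, and characteristics coincide with cosymmetries (see \cite{olve1993b,vino1984a}). Hence the second claim follows from the first, and it suffices to prove that the only cosymmetry of~\eqref{eq:BurgersSystem} is the zero one.

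A cosymmetry is a tuple $\lambda=(\lambda^1,\lambda^2)$ annihilated on the solution set by the formal adjoint of the linearization of~\eqref{eq:BurgersSystem}. Writing out this adjoint and using the evolutionary form of the system to substitute for all $t$-derivatives, I would represent each equivalence class of cosymmetries by a tuple whose components depend on $t$, $x$, $y$, $u$, $v$ and purely spatial derivatives of~$u$ and~$v$ up to some finite order~$k$. The two determining equations then take the form
\begin{gather*}
-\lambda^1_t-u\lambda^1_x-v\lambda^1_y-v_y\lambda^1-\lambda^1_{xx}-\lambda^1_{yy}+v_x\lambda^2=0,\\
u_y\lambda^1-\lambda^2_t-u\lambda^2_x-v\lambda^2_y-u_x\lambda^2-\lambda^2_{xx}-\lambda^2_{yy}=0,
\end{gather*}
where $\lambda^i_{xx}$, $\lambda^i_{yy}$ and $\lambda^i_t$ denote total derivatives, and $t$-derivatives of~$u$ and~$v$ that arise after total differentiation are eliminated in view of~\eqref{eq:BurgersSystem}. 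As a result, each determining equation becomes a polynomial expression in spatial derivatives of $u$ and $v$ of orders up to $k+2$ with coefficients depending on $t$, $x$, $y$, $u$, $v$ and spatial derivatives of order at most $k$.

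Next I would split these equations with respect to parametric spatial derivatives that do not appear in $\lambda^1$ and $\lambda^2$ themselves. The terms $\lambda^i_{xx}+\lambda^i_{yy}$ together with the contributions coming from $\lambda^i_t$ provide the only source of derivatives of orders $k+1$ and $k+2$, so collecting their coefficients yields a system of linear relations forcing $\lambda^1$ and $\lambda^2$ to be independent of all spatial derivatives of order exactly $k$. A downward induction on~$k$ then reduces the analysis to the case $\lambda^i=\lambda^i(t,x,y,u,v)$. In this residual case the determining equations split once more, now with respect to the derivatives $u_x$, $u_y$, $v_x$, $v_y$, $u_{xx}$, $u_{xy}$, $u_{yy}$, $v_{xx}$, $v_{xy}$, $v_{yy}$, into an overdetermined linear PDE system on $\lambda^1$ and $\lambda^2$ whose integration gives $\lambda^1=\lambda^2=0$.

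The main obstacle I expect is the bookkeeping in the order-reduction step. Because there are two dependent variables and because the adjoint operator has nontrivial off-diagonal entries $v_x\lambda^2$ and $u_y\lambda^1$, the highest-order coefficients at each step do not vanish term by term but rather produce a coupled linear system relating the partial derivatives of~$\lambda^1$ and~$\lambda^2$ with respect to derivatives of $u$ and $v$ of order~$k$. Carrying out the induction requires showing, at each level, that the resulting coefficient matrix has maximal rank, which is where the multi-index combinatorics of mixed spatial derivatives is heaviest; once this is handled, the final splitting at order zero is straightforward.
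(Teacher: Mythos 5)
Your proposal is correct and follows essentially the same route as the paper: reduce conservation laws to cosymmetries of the (normal) evolution system, derive the same adjoint determining equations, bound the order of the cosymmetry by collecting the order-$(k{+}2)$ terms coming from $-\mathrm D_t\lambda^i$ and $-\lambda^i_{xx}-\lambda^i_{yy}$, and finish by splitting with respect to $v_x$ and $u_y$. The coupling you worry about at the top order does not actually occur, since the off-diagonal terms $v_x\lambda^2$ and $u_y\lambda^1$ have order at most $\max(1,k)$; the order-$(k{+}2)$ relations decouple componentwise and are triangular under a suitable ordering of the multi-indices, which is exactly how the paper (citing the standard order-bound argument for evolution systems) disposes of them.
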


\begin{proof}
It is convenient to re-denote the independent and dependent variables as
$z_0=t$, $z_1=x$, $z_2=y$, $w^1=u$, $w^2=v$
and thus $z=(z_0,z_1,z_2)$ and $w=(w^1,w^2)$ are the tuples of independent and dependent variables, respectively.
A differential function $F=F[w]$ of~$w$ is, roughly speaking, a smooth function of~$z$, $w$
and a finite number of derivatives of~$w$ with respect to~$z$.
Cosymmetries of the system~\eqref{eq:BurgersSystem} are pairs of differential functions of $w$,
$\gamma=(\gamma^1[w],\gamma^2[w])$ that satisfy the system
\begin{gather}\label{eq:BurgersSystemConditionFoCosyms}
-\mathrm D_0\gamma^i-\mathrm D_j(w^j\gamma^i)-\mathrm D_j\mathrm D_j\gamma^i+w^j_{\delta_i}\gamma^j=0
\end{gather}
on the manifold defined by the system~\eqref{eq:BurgersSystem} and its differential consequences
in the corresponding jet space $J^{\infty}(z|w)$.
Here $\mathrm D_\mu$ denotes the operator of total derivative with respect to the variable~$z_\mu$.
In other words, $\mathrm D_\mu=\partial_\mu+w^i_{\alpha+\delta_\mu}\partial_{w^i_\alpha}$,
where $\alpha=(\alpha_0,\alpha_1,\alpha_2)$ is an arbitrary multi-index, $\alpha_\mu\in\mathbb N_0=\mathbb N\cup\{0\}$.
The indices~$i$ and~$j$ run from 1 to~2, the index~$\mu$ runs from 0 to~2,
and the summation convention over repeated indices is used.
The variable~$w^i_\alpha$ of the jet space $J^{\infty}(z|w)$ is identified with the derivative of $w^i$ of order~$\alpha$,
$w^i_\alpha=\partial^{|\alpha|}w^i/\partial z_0^{\alpha_0}\partial z_1^{\alpha_1}\partial z_2^{\alpha_2}$,
$|\alpha|=\alpha_0+\alpha_1+\alpha_2$,
and $\delta_\mu$ is the multi-index with zeros everywhere except on the $\mu$th entry, which equals $1$.
The order of a differential function~$F$ is equal to the highest order of jet variables involved in~$F$,
where $\ord z_\mu=-\infty$ and $\ord w^i_\alpha=|\alpha|$.

Since the system~\eqref{eq:BurgersSystem} is of evolution type,
the cosymmetry~$\gamma$ can be assumed, up to equivalence of cosymmetries,
not to depend on derivatives of~$w$ involving differentiation with respect to~$z_0$.

Suppose that $\ord\gamma>-\infty$.
Up to permutation of~$(z_1,w^1)$ and $(z_2,w^2)$, which is a Lie symmetry transformation for the system~\eqref{eq:BurgersSystem},
we can assume that $\ord\gamma^1\geqslant\ord\gamma^2$ and thus $\ord\gamma^1>-\infty$.
After expanding the equation~\eqref{eq:BurgersSystemConditionFoCosyms} with $i=1$ and
substituting the expressions implied by the system~\eqref{eq:BurgersSystem}
and its differential consequences for derivatives $w^i_\alpha$ with $\alpha_0=1$,
we collect the terms with derivatives of~$w$ of the highest order $r:=\ord\gamma^1+2$ appearing in this equation,
which gives \[-2\sum_{|\alpha|=r}\gamma^1_{w^i_\alpha}(w^i_{\alpha+2\delta_1}+w^i_{\alpha+2\delta_2})=0.\]
Splitting this equality with respect to $(r+2)$th order derivatives of~$w$
implies that $\gamma^1_{w^i_\alpha}=0$ for any $i=1,2$ and any~$\alpha$ with $|\alpha|=\ord\gamma^1$,
which contradicts the definition of order of a differential function.
(The above consideration is similar to the derivation of upper bound
for orders of conservation laws of (1+1)-dimensional even-order evolution equations~\cite[Section~22.5]{Ibragimov1985}.
See also analogous results for systems of such equations~\cite{folt1999a} and more general systems~\cite{igon2002b}.)

Therefore, $\ord\gamma=-\infty$.
Collecting coefficients of $w^2_1$ and $w^1_2$ in the equations~\eqref{eq:BurgersSystemConditionFoCosyms}
with $i=1$ and $i=2$, respectively, we derive $\gamma^1=\gamma^2=0$.
In other words, the Burgers system~\eqref{eq:BurgersSystem} admits only trivial cosymmetries.
Since each conservation-law characteristic of a system of differential equations is a cosymmetry of this system
and for systems in the extended Kovalevskaya form, which include all systems of evolution equations,
trivial characteristics are associated with trivial conserved currents~\cite{mart79a},
the space of conservation laws of the Burgers system~\eqref{eq:BurgersSystem} is zero-dimensional.
\end{proof}

In spite of absence of local conservation laws for the Burgers system~\eqref{eq:BurgersSystem},
we can consider local conservation laws of various submodels related to this model,
and such conservation laws can be interpreted as its hidden conservation laws.

Thus, for the (1+2)-dimensional linear heat equation~\eqref{eq:(1+2)DLinHeatEq}
the space of its reduced conservation-law characteristics is infinite dimensional and
can be identified with the solution space of the (1+2)-dimensional backward linear heat equation,
$\lambda_t+\lambda_{xx}+\lambda_{yy}=0$ with $\lambda=\lambda(t,x,y)$.
The relation of the equations~\eqref{eq:(1+2)DLinHeatEq} and~\eqref{eq:BurgersSystemEqLinearizableTo(1+2)DLinHeatEq}
via the transformation $\phi=e^{-\psi/2}$ allows us
to conclude that the analogous space for the equation~\eqref{eq:BurgersSystemEqLinearizableTo(1+2)DLinHeatEq}
can be identified with $\{\lambda(t,x,y)e^{-\psi/2}\}$,
where the function~$\lambda$ again runs through the solution space of the (1+2)-dimensional backward linear heat equation.
The last claim implies that the system~$\mathcal S$,
which consists of the Burgers system~\eqref{eq:BurgersSystem}
jointly with the differential constraint~\eqref{eq:BurgersSystemLinearizationConstraint},
admits the only independent conservation law related to the conserved form
of the equation~\eqref{eq:BurgersSystemLinearizationConstraint}.
Indeed, if the system~$\mathcal S$ had admitted another independent conservation law,
this conservation law would induce a conservation law of the equation~\eqref{eq:BurgersSystemEqLinearizableTo(1+2)DLinHeatEq}
whose characteristic does not depend on~$\psi$ and thus cannot be reduced to the form~$\lambda(t,x,y)e^{-\psi/2}$.

Reduced systems~1.6--1.8 are systems of even-order evolution quasilinear equations.
In a way similar to the proof of Proposition~\ref{pro:BurgersSystemCLs},
it is possible to show that these systems may possess only reduced cosymmetries of order $-\infty$.
In fact, all cosymmetries of reduced systems~1.6 and~1.7 are trivial
and hence the spaces of local conservation laws of these systems are zero-dimensional
whereas the similar space for reduced system~1.8 is one-dimensional,
and its space of reduced characteristics is spanned by the characteristic $(1,0)$.
The single linearly independent local conservation law of reduced system~1.8
can be interpreted as the pullback of the single linearly independent local conservation law
of the first equation of this system, which is no other than the classical Burgers equation.
Conservation laws of the Burgers equation are well known;
see \mbox{\cite[Chapter~5, Example~3.1]{boch1999a}} and~\cite{popo2005b}.
Using the conserved current $(w^1,(w^1)^2/2-w^1_2)$ associated with the characteristic $(1,0)$,
we introduce the potential~$\psi$ defined by the equations $\psi_2=w^1$ and $\psi_1=w^1_2-(w^1)^2/2$.
These equations jointly with the second equation of reduced system~1.8 constitute
a potential system for this reduced system.
We can exclude~$w^1$ from the potential system in view of the equation $w^1=\psi_x$.
There is a one-to-one correspondence between conservation laws of the complete potential system
and the system for $(w^2,\psi)$,
which is established by the pullback with the projection $(w^1,w^2,\psi)\to(w^2,\psi)$.
The system for $(w^2,\psi)$,
\[
\psi_1+\frac{(\psi_2)^2}2-\psi_{22}=0,\quad
w^2_1+\psi_2w^2_2-w^2_{22}=0,
\]
is linearized by the point transformation of dependent variables
$\psi=-2\ln\tilde w^1$, $w^2=\tilde w^2/\tilde w^1$
to the decoupled system~\eqref{eq:SystemOfTwoLinHeatEqs} of two copies of the linear heat equation.
In this way, we represent the process of linearizing reduced system~1.8
as a composition of potentialization, projection and point transformation.
Then each local conservation of the system~\eqref{eq:SystemOfTwoLinHeatEqs}
can be assumed as a potential conservation law of reduced system~1.8.
There are no more potential conservation laws of reduced system~1.8
\cite{popo2005b,popo2008a}.
Therefore, its space of potential conservation laws is parameterized
by pairs of solutions of the (1+1)-dimensional backward linear heat equation.

It is easy to prove that reduced systems~1.1--1.5 do not possess local conservation laws
with characteristics of order not greater than zero.
We can conjecture that these systems admit no local conservation laws at all
but the proof of this conjecture is expected to be cumbersome.

\section{Common solutions of Burgers system\\ and Navier--Stokes equations}\label{sec:BurgersSystemSolutionsCommonWithNavier-StokesEqns}

In spite of a formal similarity in the form of certain equations,
the Burgers system~\eqref{eq:BurgersSystem} and
the (1+2)-dimensional incompressible Navier--Stokes equations,
\begin{gather}\label{eq:NSEqs}
\begin{split}
&u_t+uu_x+vu_y-u_{xx}-u_{yy}+p_x=0,\\
&v_t+uv_x+vv_y-v_{xx}-v_{yy}+p_y=0,\\
&u_x+v_y=0,
\end{split}
\end{gather}
are in fact not close to each other.
In~\eqref{eq:NSEqs}, $(u,v)$ is the flow velocity, $p$ is the pressure,
the kinematic viscosity and the fluid density are set, without loss of generality,
to be equal 1.
If we attach the differential constraint $u_x+v_y=0$ to the system~\eqref{eq:BurgersSystem},
then solutions of the joint system
\begin{subequations}\label{eq:BurgersSystemWithIncompressibilityCondition}
\begin{gather}
R^1:=u_t+uu_x+vu_y-u_{xx}-u_{yy}=0,\label{eq:BurgersSystemWithIncompressibilityConditionA}\\
R^2:=v_t+uv_x+vv_y-v_{xx}-v_{yy}=0,\label{eq:BurgersSystemWithIncompressibilityConditionB}\\
R^3:=u_x+v_y=0\label{eq:BurgersSystemWithIncompressibilityConditionC}
\end{gather}
\end{subequations}
are prolonged with $p=\const$ to solutions of the system~\eqref{eq:NSEqs}
and thus they can be interpreted as common solutions of the Burgers system~\eqref{eq:BurgersSystem}
and the Navier--Stokes equations~\eqref{eq:NSEqs}.

To describe the solution set of the overdetermined system~\eqref{eq:BurgersSystemWithIncompressibilityCondition},
we compute various differential consequences of this system. In particular,
\begin{gather*}
\mathrm D_xR^1+\mathrm D_yR^2-(\mathrm D_t+u\mathrm D_x+v\mathrm D_y-\mathrm D_x^2-\mathrm D_y^2)R^3-(R^3)^2
=-2(u_xv_y-u_yv_x)=0,
\end{gather*}
where $\mathrm D_t$, $\mathrm D_x$ and $\mathrm D_y$ denote the operators of total derivatives 
with respect to the variables $t$, $x$ and~$y$, respectively;
see Section~\ref{sec:BurgersSystemCLs}.

In other words, the Jacobian of $(t,u,v)$ with respect to $(t,x,y)$ vanishes
and thus the independent variable~$t$ and the unknown functions~$u$ and~$v$ are functionally dependent.
Up to permutation of $(x,u)$ and $(y,v)$,
we can assume that $v$ is represented as a function of~$t$ and~$u$, $v=F(t,u)$.
This gives  $R^3=u_x+F_uu_y=0$, i.e., $u_x=-F_uu_y$, and thus
\[
R^2-F_uR^1=F_t-F_{uu}(u_x{}^{\!2}+u_y{}^{\!2})=F_t-F_{uu}((F_u)^2+1)u_y{}^{\!2}=0.
\]

Suppose that $F_{uu}\ne0$.
Then the last equation can be solved with respect to~$u_y$,
which gives the representation of~$u_y$ as a function of~$(t,u)$, $u_y=G(t,u)$.
The cross-differentiation of the equations $u_y=G$ and $u_x=-F_uG$ with respect to~$x$ and~$y$
leads to the constraint $F_{uu}G^2=0$, which is equivalent to $G=0$ since $F_{uu}\ne0$.
As a result, $u_x=u_y=v_x=v_y=0$
and the equations~\eqref{eq:BurgersSystemWithIncompressibilityConditionA}
and~\eqref{eq:BurgersSystemWithIncompressibilityConditionB} also imply $u_t=v_t=0$.
Therefore, solutions with $F_{uu}\ne0$ are exhausted by trivial constant solutions.

If  $F_{uu}=0$, then also $F_t=0$.
This means that $v=c_1u+c_0$ and $u_x+c_1u_y=0$ for some constants $c_0$ and~$c_1$.

Uniting and symmetrizing the above solutions, we obtain the following assertion.

\begin{proposition}\label{pro:BurgersSystemWithIncompressibilityConditionSolutionSet}
The tuple $(u,v)$ satisfies the system~\eqref{eq:BurgersSystemWithIncompressibilityCondition}
if and only if its components are of the form
$u=a_1w(t,z)+a_0$, $v=a_2w(t,z)$,
where $a_0$, $a_1$ and~$a_2$ are constants with $a_1{}^{\!2}+a_2{}^{\!2}=1$,
$z=a_2x-a_1y$, and $w=w(t,z)$ is an arbitrary solution of the linear equation $w_t+a_0w_z=w_{zz}$.
\end{proposition}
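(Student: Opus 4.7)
The plan is to prove the equivalence in both directions, with the converse direction being a routine substitution and the forward direction building directly on the analysis that precedes the proposition. For the converse, I would insert $u = a_1 w + a_0$, $v = a_2 w$, $z = a_2 x - a_1 y$ into~\eqref{eq:BurgersSystemWithIncompressibilityCondition} and simplify. Incompressibility is immediate: $u_x + v_y = a_1 a_2 w_z - a_1 a_2 w_z = 0$. In each Burgers equation, the potential quadratic-in-$w$ contribution from $u u_x + v u_y$ cancels because the coefficients combine as $a_1^2 a_2 - a_1^2 a_2$; what remains is a linear term in $w_z$. Together with $u_{xx} + u_{yy} = (a_1^2 + a_2^2)\,a_1 w_{zz} = a_1 w_{zz}$ by the unit-circle constraint, both~\eqref{eq:BurgersSystemWithIncompressibilityConditionA} and~\eqref{eq:BurgersSystemWithIncompressibilityConditionB} collapse to the single stated linear equation for $w$.

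For the forward direction, the analysis immediately preceding the statement has already reduced any solution to the form $v = c_1 u + c_0$ with constants $c_0, c_1$, subject to $u_x + c_1 u_y = 0$ (after possibly swapping the roles of $(x,u)$ and $(y,v)$ when $u$ cannot serve as the implicit independent variable). The next step is to integrate the transport equation $u_x + c_1 u_y = 0$ via the method of characteristics, yielding $u = U(t,\xi)$ with $\xi := c_1 x - y$. Substituting this together with $v = c_1 U + c_0$ into the first Burgers equation, the cubic convective contributions $U \cdot c_1 U_\xi$ and $(c_1 U + c_0)(-U_\xi)$ cancel up to the linear piece $-c_0 U_\xi$, and the Laplacian becomes $(1 + c_1^2)\, U_{\xi\xi}$. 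The result is a linear convection--diffusion equation for $U$.

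The final step is the symmetrization. I would introduce $a_1 := 1/\sqrt{1+c_1^2}$, $a_2 := c_1/\sqrt{1+c_1^2}$, so that $a_1^2 + a_2^2 = 1$ and $z = a_2 x - a_1 y = a_1 \xi$, and then write $U = a_1 w + a_0$ with $a_0$ chosen to absorb the constant $c_0$ into $v$ so that $v = a_2 w$ holds exactly. This converts the equation for $U$ into the stated linear equation for $w$. The case that arises from swapping $(x,u) \leftrightarrow (y,v)$ corresponds to completing the unit circle, covering parameter values (such as $a_1 = 0$) that are inaccessible from the original branch.

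The main obstacle is not conceptual but organisational: it is the careful bookkeeping of the parameter change $(c_0, c_1) \mapsto (a_0, a_1, a_2)$ on the constrained set $a_1^2 + a_2^2 = 1$, together with verifying that the two implicit-function branches glue consistently at the boundary cases $a_1 = 0$ and $a_2 = 0$ and that the residual sign/rescaling ambiguities of $w$ are all absorbed into the free choice of the triple $(a_0, a_1, a_2)$.
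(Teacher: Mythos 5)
Your route is the same as the paper's: the paper's proof of this proposition is precisely the derivation that precedes it (the differential consequence forcing $u_xv_y-u_yv_x=0$, the functional dependence $v=F(t,u)$, the dichotomy $F_{uu}\ne0$ versus $F_{uu}=0$, ending with $v=c_1u+c_0$ and $u_x+c_1u_y=0$) followed only by the words ``uniting and symmetrizing''. You correctly take that derivation as given and set out to supply the missing symmetrization plus the routine converse, and the skeleton is right: the characteristic coordinate $\xi=c_1x-y$, the cancellation of the quadratic convective terms, and the Laplacian factor $1+c_1^2$ are all computed correctly.

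The problem is that the step you defer as ``organisational bookkeeping'' is exactly where the argument fails to close for the statement as written, and you assert its outcome without performing it. Substituting $u=a_1w+a_0$, $v=a_2w$, $z=a_2x-a_1y$ into \eqref{eq:BurgersSystemWithIncompressibilityConditionA} leaves the linear convective term $a_0a_1a_2w_z$, so after dividing by $a_1$ one gets $w_t+a_0a_2w_z=w_{zz}$, not $w_t+a_0w_z=w_{zz}$; the same stray factor $a_2$ appears in your forward direction, since with $a_0=-c_0/c_1$ one has $-c_0a_1=a_0a_2$. Moreover the branches do not glue in the stated normal form: the solution $u=U(t,y)$, $v=c_0\ne0$ with $U_t+c_0U_y=U_{yy}$ (the case $c_1=0$) cannot be written as $u=a_1w+a_0$, $v=a_2w$, because a nonconstant $w$ forces $a_2=0$ and hence $v=0$; and the swap $(x,u)\leftrightarrow(y,v)$ does not repair this, because the stated form is not swap-invariant (the additive constant sits only in the $u$-component). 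Both defects vanish if the normal form is taken to be $u=a_1w+a_2a_0$, $v=a_2w-a_1a_0$, which yields exactly $w_t+a_0w_z=w_{zz}$ and is symmetric under the swap --- this is presumably what ``symmetrizing'' was meant to produce, and it is what your computation actually establishes once carried out. As written, however, your claim that the substitution lands on ``the stated linear equation'' is false, so the proof has a genuine gap at precisely the step you identify as the main obstacle and then skip.
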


Using symmetry transformations of rotations and Galilean boosts,
we can set $a_0=a_1=0$ and $a_2=1$,
which leads to the particular subcase of Lie reduction~1.8 of the system~\eqref{eq:BurgersSystem} with $w^1=0$.

\section{One more reduction to single (1+2)-dimensional PDE}
\label{sec:OneMoreReductionToSingle(1+2)DPDE}

An interesting subset of solutions of the system~\eqref{eq:BurgersSystem} is singled out by the differential constraint
\begin{gather}\label{eq:BurgersSystemDiffConstraintForSingle(1+2)DPDE}
u_x=v_y.
\end{gather}
Although this constraint differs from the ``incompressibility'' constraint~\eqref{eq:BurgersSystemWithIncompressibilityConditionC}
only by the sign relating~$u_x$ and~$v_y$,
the solution set of the joint system~\eqref{eq:BurgersSystem},~\eqref{eq:BurgersSystemDiffConstraintForSingle(1+2)DPDE},
which we denote by~$\bar{\mathcal S}$,
is essentially different from that the system~\eqref{eq:BurgersSystemWithIncompressibilityCondition}.
In contrast to the formally similar constraints~\eqref{eq:BurgersSystemLinearizationConstraint}
and~\eqref{eq:BurgersSystemWithIncompressibilityConditionC}, $u_y=v_x$ and $u_x+v_y=0$,
the constraint~\eqref{eq:BurgersSystemDiffConstraintForSingle(1+2)DPDE} is not rotationally invariant
and hence it is a representative of a family of rotationally equivalent constraints.

At the same time, the constraint~\eqref{eq:BurgersSystemDiffConstraintForSingle(1+2)DPDE}
is similar to the constraint~\eqref{eq:BurgersSystemLinearizationConstraint}
in the sense that it allows one to reduce the system~\eqref{eq:BurgersSystem}
to a single (1+2)-dimensional PDE via introducing a potential
although, in contrast to~\eqref{eq:BurgersSystemEqLinearizableTo(1+2)DLinHeatEq},
this equation is not linearizable.
Using the equation~\eqref{eq:BurgersSystemDiffConstraintForSingle(1+2)DPDE}, which is of conserved form,
as a ``short'' conservation law, we introduce
the potential~$\psi$ defined by the equations $\psi_x=v$ and $\psi_y=u$.
The substitution of the expression of $(u,v)$ in terms of~$\psi$ reduces the system~\eqref{eq:BurgersSystem}
to the condition that the derivatives of $R=\psi_t+\psi_x\psi_y-\psi_{xx}-\psi_{yy}$
with respect to~$y$ and~$x$ vanish, i.e., the function~$R$ depends only on~$t$.
Since the potential~$\psi$ is defined up to summand being an arbitrary smooth function of~$t$,
we can set the function~$R$ vanish,
which gives the equation for the potential~$\psi$,
\begin{gather}\label{eq:BurgersSystemSingle(1+2)DPDE}
\psi_t+\psi_x\psi_y-\psi_{xx}-\psi_{yy}=0.
\end{gather}
Each known solution of the equation~\eqref{eq:BurgersSystemSingle(1+2)DPDE}
leads to a solution of the Burgers system~\eqref{eq:BurgersSystem}
\begin{gather}\label{eq:BurgersSystemSolutionsViaSingle(1+2)DPDE}
u=\psi_y, \quad v=\psi_x.
\end{gather}

\subsection{Symmetries of overdetermined system}

\begin{proposition}\label{pro:BurgersSystemWithDiffConstraintForSingle(1+2)DPDELieSyms}
The maximal Lie invariance algebra~$\bar{\mathfrak g}$ of the system~$\bar{\mathcal S}$ coincides with
\[\mathfrak g'=\langle P^t,D,\Pi,P^x,P^y,G^x,G^y\rangle\simeq\mathfrak g/\langle J\rangle,\]
where $\mathfrak g$ is the maximal Lie invariance algebra of the system~\eqref{eq:BurgersSystem}.
The complete point symmetry group~$\bar G$ of the system~$\bar{\mathcal S}$
is generated by one-parameter groups associated with vector fields from the algebra~$\mathfrak g'$
and two discrete transformations: the transformation of simultaneous mirror mappings in the $(x,y)$- and $(u,v)$-planes,
$I_1\colon(t,x,y,u,v)\mapsto(t,-x,y,-u,v)$, and the permutation of~$(x,u)$ and~$(y,v)$, $I_0\colon(t,x,y,u,v)\mapsto(t,y,x,v,u)$.
\end{proposition}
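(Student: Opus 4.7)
The plan is to mirror the two-part structure of the proofs of Theorem~\ref{thm:BurgersSystemCompletePointSymGroup} and Proposition~\ref{pro:BurgersSystemWithLinearizationConstraintLieSyms}: first establish the Lie invariance algebra of~$\bar{\mathcal S}$ by the infinitesimal invariance criterion, handling the overdetermined nature carefully via differential consequences; then compute the complete point symmetry group~$\bar G$ by the algebraic method based on the automorphisms of~$\bar{\mathfrak g}=\mathfrak g'$.

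For the computation of~$\bar{\mathfrak g}$, I would proceed exactly as in the proof of Proposition~\ref{pro:BurgersSystemWithLinearizationConstraintLieSyms}. First, the independent differential consequences of~$\bar{\mathcal S}$ of orders one and two are exhausted by~\eqref{eq:BurgersSystemDiffConstraintForSingle(1+2)DPDE}, \eqref{eq:BurgersSystem}, and the prolongations $u_{xx}=v_{xy}$, $u_{xy}=v_{yy}$ obtained by single differentiation of~\eqref{eq:BurgersSystemDiffConstraintForSingle(1+2)DPDE}; this is transparent through the potential representation~\eqref{eq:BurgersSystemSolutionsViaSingle(1+2)DPDE} and equation~\eqref{eq:BurgersSystemSingle(1+2)DPDE}. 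Then I would apply the infinitesimal criterion first to~\eqref{eq:BurgersSystemDiffConstraintForSingle(1+2)DPDE} alone for a general point vector field $Q=\tau\p_t+\xi\p_x+\chi\p_y+\phi\p_u+\psi\p_v$, and subsequently to~\eqref{eq:BurgersSystem} using the above differential consequences, splitting with respect to parametric derivatives. To show $\mathfrak g'\subseteq\bar{\mathfrak g}$ it suffices to verify by direct prolongation that the basis elements $P^t,D,\Pi,P^x,P^y,G^x,G^y$ annihilate $u_x-v_y$ modulo~$\bar{\mathcal S}$; for the opposite inclusion the crucial observation is that $J^{(1)}(u_x-v_y)=-2(u_y+v_x)$ on~$\bar{\mathcal S}$, which is not a differential consequence of the joint system, hence any symmetry containing a $J$-component forces a contradiction. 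Together with the earlier result that the Burgers system alone has the algebra~$\mathfrak g$, this pins the algebra of~$\bar{\mathcal S}$ down to~$\mathfrak g'$.

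For~$\bar G$, I would follow the automorphism-based algebraic method of Theorem~\ref{thm:BurgersSystemCompletePointSymGroup}. The megaideal structure of~$\bar{\mathfrak g}=\mathfrak g'$ is simpler than that of~$\mathfrak g$: the radical coincides with the nilradical $\mathfrak n=\langle P^x,P^y,G^x,G^y\rangle$, on which the Levi factor $\mathfrak f=\langle P^t,D,\Pi\rangle\simeq\mathrm{sl}(2,\mathbb R)$ acts as two copies of the standard $2$-dimensional representation, with $P^t$ and~$\Pi$ exchanging the pairs $(G^x,G^y)\leftrightarrow(P^x,P^y)$ and $D$ acting as the identity. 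After factoring out inner automorphisms generated by the nilradical and the identity component of the Levi factor, and the outer $\mathrm{sl}(2,\mathbb R)$-involution, the remaining freedom lives in the normalizer in $\mathrm{GL}(\mathfrak n)$ of this representation, which is the image of $\mathrm O(2)\times\mathrm{GL}(1)$ acting jointly on the $(P^x,P^y)$ and $(G^x,G^y)$ pairs together with independent sign changes on the two copies. I would translate such an automorphism into a system of determining equations for $(T,X,Y,U,V)$ as in the proof of Theorem~\ref{thm:BurgersSystemCompletePointSymGroup}, integrate them in view of the chain-rule substitution of transformed derivatives into the copy of~\eqref{eq:BurgersSystem} and into~\eqref{eq:BurgersSystemDiffConstraintForSingle(1+2)DPDE} in the new coordinates, and conclude that, up to composition with continuous transformations, the only discrete freedom is generated by~$I_0$ and~$I_1$. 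The verification that $I_0$ and~$I_1$ individually belong to~$\bar G$ is immediate: $I_1$ is inherited from~$G$ and fixes $u_x-v_y$, while $I_0$ swaps the two Burgers equations and transforms $u_x-v_y$ into $v_y-u_x$.

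The main obstacle I anticipate is the combinatorics of the algebraic-method step, not the Lie-algebra step. The absence of the rotation generator enlarges the group of admissible linear transformations of the nilradical from the circle~$\mathrm{SO}(2)$ up to~$\mathrm O(2)$ combined with independent dilations on the $(P^x,P^y)$- and $(G^x,G^y)$-planes, so one must argue carefully that, after exploiting inner automorphisms (including conjugation by elements of the one-parameter subgroup generated by~$J$ but acting only at the level of the abstract algebra, since~$J\notin\bar{\mathfrak g}$ and hence cannot be used as a group element), the residual discrete freedom reduces precisely to the $\mathbb Z_2\times\mathbb Z_2$ generated by~$I_0$ and~$I_1$ modulo the identity component of~$\bar G$. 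This parallels the corresponding step in Theorem~\ref{thm:BurgersSystemCompletePointSymGroup} but requires one additional independent involution.
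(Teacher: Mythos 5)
Your overall two-part strategy (infinitesimal criterion with careful bookkeeping of low-order differential consequences for the algebra; automorphism-based algebraic method followed by the direct method for the group) is exactly the paper's, and the Lie-algebra half is essentially sound as planned. But the algebraic-method half contains a genuine error in the starting point. You claim that, after factoring out inner automorphisms and the outer involution of the Levi factor, the residual freedom on the nilradical is the image of $\mathrm O(2)\times\mathrm{GL}(1)$ acting conformally on the $(P^x,P^y)$- and $(G^x,G^y)$-pairs. That is the answer for $\mathfrak g$, where the relations $[P^x,J]=P^y$, $[G^x,J]=G^y$, etc.\ force the mixing matrix to be conformal. For $\mathfrak g'$ the generator $J$ is absent, and the centralizer of the $\mathfrak f$-action on $\mathfrak n$ is all of $\mathrm{GL}(2,\mathbb R)$: the admissible automorphisms send $P^x\mapsto a\tilde P^x+c\tilde P^y$, $P^y\mapsto b\tilde P^x+d\tilde P^y$ and $G^x\mapsto\varepsilon a\tilde G^x+\varepsilon c\tilde G^y$, $G^y\mapsto\varepsilon b\tilde G^x+\varepsilon d\tilde G^y$ with $ad-bc\ne0$ arbitrary. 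The paper keeps this full four-parameter matrix and only then uses invariance of $u_x=v_y$ under the direct method to force $(a,d)=(0,0)$ or $(b,c)=(0,0)$ and subsequently $a^2=d^2=1$, $\varepsilon=1$. Restricting a priori to the conformal subgroup is unjustified and in general risks missing discrete symmetries; here it happens to be harmless only because the surviving matrices turn out to lie in $\mathrm O(2)$ anyway, which you cannot know before doing the computation with the full $\mathrm{GL}(2,\mathbb R)$.

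Related to this, your proposal to normalize the automorphism by conjugation with the rotation subgroup generated by $J$ ``at the level of the abstract algebra'' is not legitimate: in the algebraic method one may only factor out inner automorphisms that are induced by actual elements of the identity component of the symmetry group of the system at hand, and rotations are not symmetries of $\bar{\mathcal S}$. The paper accordingly makes no use of $J$-conjugation in this proof. A smaller logical slip: your argument for $\bar{\mathfrak g}\subseteq\mathfrak g'$ proceeds by first asserting $\bar{\mathfrak g}\subseteq\mathfrak g$ ``because the Burgers system alone has algebra $\mathfrak g$'' and then excluding the $J$-component. Adjoining a differential constraint can enlarge the symmetry algebra (that is precisely the phenomenon of conditional symmetries, whose absence is a conclusion of the proposition, not a hypothesis), so the inclusion $\bar{\mathfrak g}\subseteq\mathfrak g$ must itself come out of the full determining-equation computation that you outline; the shortcut is not an independent argument. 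Your computation of $J^{(1)}(u_x-v_y)=-2(u_y+v_x)$ is correct and is a fine sanity check that $J\notin\bar{\mathfrak g}$, but it does not by itself pin down the algebra.
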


Therefore, the group~$\bar G$ consists of the transformations of the form 
\begin{gather}\label{txyComponentsOfPOintSymsOfTheArtifitialEq1}
\tilde t=\frac{\alpha t+\beta}{\gamma t+\delta},
\quad
\begin{pmatrix}\tilde x\\ \tilde y\end{pmatrix}
=\frac\sigma{\gamma t+\delta}\begin{pmatrix}\varepsilon_1x\\ \varepsilon_2y\end{pmatrix}
+\frac{\alpha t+\beta}{\gamma t+\delta}\begin{pmatrix}\mu_1\\ \mu_2\end{pmatrix}
+\begin{pmatrix}\nu_1\\ \nu_2\end{pmatrix},
\\[1ex]\nonumber
\begin{pmatrix}\tilde u\\ \tilde v\end{pmatrix}
=\frac{\gamma t+\delta}{\sigma}\begin{pmatrix}\varepsilon_1u\\ \varepsilon_2v\end{pmatrix}
-\frac\gamma\sigma\begin{pmatrix}\varepsilon_1x\\ \varepsilon_2y\end{pmatrix}
+\begin{pmatrix}\mu_1\\ \mu_2\end{pmatrix}
\end{gather}
or
\begin{gather}\label{txyComponentsOfPOintSymsOfTheArtifitialEq2}
\tilde t=\frac{\alpha t+\beta}{\gamma t+\delta},
\quad
\begin{pmatrix}\tilde x\\ \tilde y\end{pmatrix}
=\frac\sigma{\gamma t+\delta}\begin{pmatrix}\varepsilon_2y\\ \varepsilon_1x\end{pmatrix}
+\frac{\alpha t+\beta}{\gamma t+\delta}\begin{pmatrix}\mu_1\\ \mu_2\end{pmatrix}
+\begin{pmatrix}\nu_1\\ \nu_2\end{pmatrix},
\\[1ex]\nonumber
\begin{pmatrix}\tilde u\\ \tilde v\end{pmatrix}
=\frac{\gamma t+\delta}{\sigma}\begin{pmatrix}\varepsilon_2v\\ \varepsilon_1u\end{pmatrix}
-\frac\gamma\sigma\begin{pmatrix}\varepsilon_2y\\ \varepsilon_1x\end{pmatrix}
+\begin{pmatrix}\mu_1\\ \mu_2\end{pmatrix},
\end{gather}
where
$\alpha$, $\beta$, $\gamma$ and $\delta$ are arbitrary constants with $\alpha\delta-\beta\gamma>0$ 
such that their tuple is defined up to nonvanishing multiplier, $\sigma=\sqrt{\alpha\delta-\beta\gamma}$,
$\varepsilon_1,\varepsilon_2=\pm1$,
and $\mu_1$, $\mu_2$, $\nu_1$ and $\nu_2$ are arbitrary constants. 

\begin{proof}
Since the system~$\bar{\mathcal S}$ is alike to the system~$\mathcal S$ in overdetermination degree,
for computing its maximal Lie invariance algebra
we follow the proof of Proposition~\ref{pro:BurgersSystemWithLinearizationConstraintLieSyms}.
The description of the solution set of~$\bar{\mathcal S}$ in terms of
the representation $(u,v)=(\psi_y,\psi_x)$ 
and the solution set of the equation~\eqref{eq:BurgersSystemSingle(1+2)DPDE} implies
that independent differential consequences of~$\bar{\mathcal S}$
whose orders as differential equations equal one or two
are exhausted by the equation~\eqref{eq:BurgersSystemDiffConstraintForSingle(1+2)DPDE}
and the equations~\eqref{eq:BurgersSystem}, $u_{xx}=v_{xy}$ and $u_{xy}=v_{yy}$, respectively.
It is obvious that the last two equations are differential consequences of~\eqref{eq:BurgersSystemDiffConstraintForSingle(1+2)DPDE}
that are obtained by single differentiations with respect to~$x$ and~$y$.
This is why we separately apply the infinitesimal invariance criterion
to the equation~\eqref{eq:BurgersSystemDiffConstraintForSingle(1+2)DPDE}
and then to the equations~\eqref{eq:BurgersSystem},
substitute for derivatives in view of all the above differential consequences
and split with respect to parametric derivatives.
Solving the constructed system of determining equations leads to the algebra~$\mathfrak g'$.

The complete point symmetry group~$\bar G$ of the system~$\bar{\mathcal S}$
is efficiently computed by the suggested version of the algebraic method.
Similarly to the proof of Theorem~\ref{thm:BurgersSystemCompletePointSymGroup},
in order to find discrete point symmetries of~$\bar{\mathcal S}$
it suffices to consider automorphisms of~$\mathfrak g'$ whose matrices are of the form
$\mathop{\rm diag}(\varepsilon,1,\varepsilon)\oplus\tilde A$,
where $\varepsilon=\pm1$ and $\tilde A$ is a $4\times4$ nondegenerate matrix.
The set of such automorphisms is exhausted by those with matrices of the form
\[
A=\mathop{\rm diag}(\varepsilon,1,\varepsilon)
\oplus\begin{pmatrix}a&b\\c&d\end{pmatrix}
\oplus\varepsilon\begin{pmatrix}a&b\\c&d\end{pmatrix},
\]
where $\varepsilon=\pm1$ and $ad-bc\ne0$.
Suppose that the pushforward~$\mathcal T_*$ of vector fields by the point transformation
$
\mathcal T\colon (\tilde t,\tilde x,\tilde y,\tilde u,\tilde v)=(T,X,Y,U,V)(t,x,y,u,v)
$
is the automorphism of~$\mathfrak g'$ with the matrix~$A$, i.e.,
\begin{gather*}
\mathcal T_*P^t=\varepsilon\tilde P^t,\quad
\mathcal T_*P^x=a\tilde P^x+c\tilde P^y,\quad
\mathcal T_*P^y=b\tilde P^x+d\tilde P^y,\\
\mathcal T_*G^x=\varepsilon a\tilde G^x+\varepsilon c\tilde G^y,\quad
\mathcal T_*G^y=\varepsilon b\tilde G^x+\varepsilon d\tilde G^y,\quad
\mathcal T_*D  =\tilde D,\quad
\mathcal T_*\Pi=\varepsilon\tilde\Pi,
\end{gather*}
where tildes over vector fields mean that these vector fields are given in the new coordinates.
We componentwise split the above conditions for~$\mathcal T_*$ and thus derive a system of differential equations for the components of~$\mathcal T$,
\begin{gather*}
T_t=\varepsilon,\quad X_t=Y_t=U_t=V_t=0;\\
X_x=a,\quad Y_x=c,\quad T_x=U_x=V_x=0;\\
X_y=b,\quad Y_y=d,\quad T_y=U_y=V_y=0;\\
tX_x+X_u=\varepsilon aT,\quad tY_x+Y_u=\varepsilon cT,\quad U_u=\varepsilon a,\quad V_u=\varepsilon c,\quad T_u=0;\\
tX_y+X_v=\varepsilon bT,\quad tY_y+Y_v=\varepsilon dT,\quad U_v=\varepsilon b,\quad V_v=\varepsilon d,\quad T_v=0;\\
tT_t=T,\quad xX_x+yX_y-uX_u-vX_v=X,\quad xY_x+yY_y-uY_u-vY_v=Y,\\
\qquad uU_u+vU_v=U,\quad vV_v+vV_v=V;\\
t^2T_t=\varepsilon T^2,\quad
txX_x+tyX_y+(x-tu)X_u+(y-tv)X_v=\varepsilon TX,\\
\qquad txY_x+tyY_y+(x-tu)Y_u+(y-tv)Y_v=\varepsilon TY,\\
\qquad (x-tu)U_u+(y-tv)U_v=\varepsilon(X-TU),\quad (x-tu)V_u+(y-tv)V_v=\varepsilon(Y-TV).
\end{gather*}
This system implies that $T=\varepsilon t$ and hence $X_u=Y_v=0$.
Furthermore, $X=ax+by$, $Y=cx+dy$, $U=\varepsilon au+\varepsilon bv$ and $V=\varepsilon cu+\varepsilon dv$.

Note that the system~$\bar{\mathcal S}$ admits the obvious symmetry~$I_0$ permuting~$(x,u)$ and~$(y,v)$,
$I_0\colon(t,x,y,u,v)\mapsto(t,y,x,v,u)$.
Using the chain rule, we express transformed derivatives in terms of the initial coordinates
and substitute the obtained expressions into the copy of the system~$\bar{\mathcal S}$ in the new coordinates.
The expanded system should vanish for each solution of~$\bar{\mathcal S}$.
Applying this procedure to the equation~\eqref{eq:BurgersSystemDiffConstraintForSingle(1+2)DPDE},
we obtain that either $(a,d)=(0,0)$ or $(b,c)=(0,0)$,
and the first case is reduced by~$I_0$ to the second one.
Further we assume that $(b,c)=(0,0)$ and thus $ad\ne0$.
These conditions imply the equations $\varepsilon=1$, $a^2=1$ and $d^2=1$, i.e., $a,d=\pm1$.
This leads to three symmetry transformations alternating signs,
$I_1\colon(t,x,y,u,v)\mapsto(t,-x,y,-u,v)$,
$I_2\colon(t,x,y,u,v)\mapsto(t,x,-y,u,-v)$ and
$I_3\colon(t,x,y,u,v)\mapsto(t,-x,-y,-u,-v)$.
Since $I_2=I_0I_1I_0$ and $I_3=I_1I_2$, the discrete symmetries of the system~~$\bar{\mathcal S}$ are exhausted,
up to combining with continuous symmetries and with each other, by the two involution $I_0$ and~$I_1$,
which do not commute.
\end{proof}

\begin{corollary}
The factor group of the complete point symmetry group~$\bar G$ of the system~$\bar{\mathcal S}$
with respect to its identity component is isomorphic to the  dihedral group~$\mathrm{Dih}_4$ of order~8.
\end{corollary}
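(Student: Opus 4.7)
The plan is to read off the factor group directly from the description of~$\bar G$ given in Proposition~\ref{pro:BurgersSystemWithDiffConstraintForSingle(1+2)DPDELieSyms} and its proof. The one-parameter subgroups associated with vector fields from~$\mathfrak g'$ sweep out a connected subgroup of~$\bar G$; on top of this, the explicit formulas \eqref{txyComponentsOfPOintSymsOfTheArtifitialEq1}--\eqref{txyComponentsOfPOintSymsOfTheArtifitialEq2} show that an arbitrary element of~$\bar G$ is uniquely determined, modulo continuous parameters, by the choice of sign pair $(\varepsilon_1,\varepsilon_2)\in\{\pm1\}^2$ and by the choice between the ''non-swap'' form \eqref{txyComponentsOfPOintSymsOfTheArtifitialEq1} and the ''swap'' form \eqref{txyComponentsOfPOintSymsOfTheArtifitialEq2}. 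Hence $\bar G^0$ is the set of transformations of the form~\eqref{txyComponentsOfPOintSymsOfTheArtifitialEq1} with $\varepsilon_1=\varepsilon_2=+1$, and the cosets of~$\bar G^0$ in~$\bar G$ are represented by the eight composites of the two involutions~$I_0$ and~$I_1$ identified in the proof of Proposition~\ref{pro:BurgersSystemWithDiffConstraintForSingle(1+2)DPDELieSyms}.

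Next I would verify the group-theoretic structure of $\langle I_0,I_1\rangle\bmod \bar G^0$ by a direct computation on the tuple $(t,x,y,u,v)$. Setting $r:=I_0I_1$ and $s:=I_0$, one checks on the level of coordinate transformations that
\[
r(t,x,y,u,v)=(t,y,-x,v,-u),\quad r^2=I_3,\quad r^4=\mathrm{id},\quad s^2=\mathrm{id},\quad srs=I_1I_0=r^{-1}.
\]
These are precisely the defining relations of the dihedral group $\mathrm{Dih}_4=\langle r,s\mid r^4=s^2=1,\ srs=r^{-1}\rangle$ of order~$8$, so there is a surjective homomorphism $\mathrm{Dih}_4\twoheadrightarrow \bar G/\bar G^0$.

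Finally I would confirm that this homomorphism is injective, i.e.\ that the eight composites $\mathrm{id},r,r^2,r^3,s,sr,sr^2,sr^3$ lie in eight distinct cosets of~$\bar G^0$. This is the part where a small check is needed rather than pure symbol manipulation: each of the eight composites acts on $(x,y,u,v)$ by one of eight distinct $\{-1,0,1\}$-valued $4\times4$ matrices (four diagonal and four antidiagonal), while every element of~$\bar G^0$ acts on the same four coordinates by a matrix whose $(x,y)$-block is of the form $\sigma(\gamma t+\delta)^{-1}\,\mathrm{diag}(+1,+1)$ (together with the corresponding structure on the $(u,v)$-block). Hence no nontrivial composite of~$I_0$ and~$I_1$ can lie in~$\bar G^0$, and $\bar G/\bar G^0\simeq\mathrm{Dih}_4$.

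The main potential obstacle is this last injectivity/index step: one must be careful that no ''accidental'' continuous transformation identifies two of the eight coset representatives, which would collapse $\mathrm{Dih}_4$ to a proper quotient such as~$\mathbb{Z}_2\times\mathbb{Z}_2$ or~$\mathbb{Z}_4$. The argument above rules this out by comparing the determinant signs and the swap/non-swap pattern of the $(x,y)$-block, both of which are invariants of the coset of~$\bar G^0$.
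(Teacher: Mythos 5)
Your reduction of the corollary to checking that the eight words in $I_0$ and $I_1$ occupy distinct cosets of~$\bar G^0$ is the right way to make the statement precise, and the relations $r^4=s^2=\mathrm{id}$, $srs=r^{-1}$ are verified correctly. The gap is in the injectivity step, and it is concentrated in a single element: $r^2=I_3\colon(t,x,y,u,v)\mapsto(t,-x,-y,-u,-v)$. Neither of the two invariants you invoke separates $r^2$ from the identity coset: its $(x,y)$-block is $-\mathop{\rm diag}(1,1)$, which \emph{is} a scalar multiple of $\mathop{\rm diag}(+1,+1)$, has positive determinant, and does not swap $x$ and~$y$. Indeed, taking $(\alpha,\beta,\gamma,\delta)=(-1,0,0,-1)$ in~\eqref{txyComponentsOfPOintSymsOfTheArtifitialEq1} with $\varepsilon_1=\varepsilon_2=+1$ produces exactly~$I_3$, so $I_3$ lies in the very set you identify with~$\bar G^0$; the parameterization is two-to-one, with $(\varepsilon_1,\varepsilon_2;\alpha,\beta,\gamma,\delta)$ and $(-\varepsilon_1,-\varepsilon_2;-\alpha,-\beta,-\gamma,-\delta)$ giving the same transformation, so ``$\varepsilon_1=\varepsilon_2=+1$'' is not a well-defined property of a transformation.

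Moreover, the obstruction is not merely a defect of the chosen invariants: $I_3$ genuinely belongs to the identity component. The vector field $P^t+\Pi$ lies in $\mathfrak g'=\bar{\mathfrak g}$, and integrating its flow ($t(\theta)=\tan(\theta+\arctan t_0)$, $x(\theta)=x_0\cos(\arctan t_0)/\cos(\theta+\arctan t_0)$, and similarly for $y$, $u$, $v$) and evaluating at $\theta=\pi$ gives precisely $(t,x,y,u,v)\mapsto(t,-x,-y,-u,-v)$; this is the familiar fact that $-E$ is connected to $E$ inside ${\rm SL}(2,\mathbb R)$ while acting by $-1$ on the space variables. Consequently the surjection $\mathrm{Dih}_4\twoheadrightarrow\bar G/\bar G^0$ you construct has kernel containing $\{1,r^2\}$, the images of $I_0$ and $I_1$ commute modulo~$\bar G^0$ (since $I_0I_1I_0I_1=r^2$), and the quotient collapses to $\mathbb Z_2\times\mathbb Z_2$ of order~$4$. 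So the injectivity step cannot be completed as written; rescuing the order-$8$ count would require a convention under which $\exp(\pi(P^t+\Pi))$ is not counted as a continuous symmetry, which neither your argument nor Proposition~\ref{pro:BurgersSystemWithDiffConstraintForSingle(1+2)DPDELieSyms} supplies. You were right to flag the ``accidental identification'' of coset representatives as the main risk --- it is exactly where the argument breaks.
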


Proposition~\ref{pro:BurgersSystemWithDiffConstraintForSingle(1+2)DPDELieSyms} implies
that all point symmetries of the system~$\bar{\mathcal S}$
are induced by point symmetries of the original system~\eqref{eq:BurgersSystem}.
There are no counterparts for rotations among point symmetries of~$\bar{\mathcal S}$,
except rotations by multiples of $\pi/2$, which are discrete symmetries for~$\bar{\mathcal S}$.
Therefore, the Burgers system~\eqref{eq:BurgersSystem} admits no truly conditional symmetries
related to the differential constraint~\eqref{eq:BurgersSystemDiffConstraintForSingle(1+2)DPDE}.

\subsection{Symmetries of potential equation}

\begin{proposition}\label{pro:BurgersSystemSingle(1+2)DPDELieSyms}
The maximal Lie invariance algebra~$\check{\mathfrak g}$
of the equation~\eqref{eq:BurgersSystemSingle(1+2)DPDE} is spanned by the vector fields
\begin{gather*}
\check P^t=\p_t, \quad
\check D=2t\p_t+x\p_x+y\p_y,\quad
\check\Pi=t^2\p_t+tx\p_x+ty\p_y+xy\p_\psi,\quad
\\
\check P^x=\p_x, \quad
\check P^y=\p_y, \quad
\check G^x=t\p_x+y\p_\psi,\quad
\check G^y=t\p_y+x\p_\psi,\quad
\check P^\psi=\p_\psi.
\end{gather*}
The complete point symmetry group~$\check G$ of the equation~\eqref{eq:BurgersSystemSingle(1+2)DPDE}
is generated by one-parameter groups associated with vector fields from the algebra~$\check{\mathfrak g}$
and two discrete transformations: alternating the signs of $(x,\psi)$,
$\check I_1\colon(t,x,y,\psi)\mapsto(t,-x,y,-\psi)$,
and the permutation of~$x$ and~$y$, $\check I_0\colon(t,x,y,\psi)\mapsto(t,y,x,\psi)$.
\end{proposition}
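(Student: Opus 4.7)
The plan is to follow the two-stage strategy used in the proofs of Theorem~\ref{thm:BurgersSystemCompletePointSymGroup} and Proposition~\ref{pro:BurgersSystemWithDiffConstraintForSingle(1+2)DPDELieSyms}: first compute $\check{\mathfrak g}$ from the infinitesimal invariance criterion, then determine the discrete symmetries by the automorphism-based algebraic method and complete with the direct method. For the first stage I would take a general vector field $Q=\tau\partial_t+\xi\partial_x+\eta\partial_y+\zeta\partial_\psi$ with coefficients depending on $(t,x,y,\psi)$, prolong to the second order, and apply the resulting operator to $\psi_t+\psi_x\psi_y-\psi_{xx}-\psi_{yy}$ on the manifold defined by~\eqref{eq:BurgersSystemSingle(1+2)DPDE}. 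Substituting for~$\psi_t$ and splitting the resulting polynomial identity in the parametric derivatives of~$\psi$ produces an overdetermined linear system of determining equations, whose solution space I expect to be eight-dimensional and spanned by the listed vector fields.

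For the second stage I analyze the structure of $\check{\mathfrak g}$. The subalgebra $\langle\check P^t,\check D,\check\Pi\rangle$ is a Levi factor isomorphic to $\mathrm{sl}(2,\mathbb R)$, and the radical $\check{\mathfrak r}=\langle\check P^x,\check P^y,\check G^x,\check G^y,\check P^\psi\rangle$ is step-two nilpotent and equals the nilradical of $\check{\mathfrak g}$; the only nonzero brackets among its generators are $[\check P^x,\check G^y]=[\check P^y,\check G^x]=\check P^\psi$, while $[\check P^x,\check G^x]=[\check P^y,\check G^y]=0$. Following the proof of Theorem~\ref{thm:BurgersSystemCompletePointSymGroup}, inner automorphisms (including those associated with the nilradical) let me reduce to automorphisms preserving the Levi factor, and the only independent outer automorphism of $\mathrm{sl}(2,\mathbb R)$ contributes the block $\mathrm{diag}(\varepsilon,1,\varepsilon)$, $\varepsilon=\pm1$. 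The adjoint action of $\check D$ induces a $\mathbb Z$-grading on $\check{\mathfrak r}$ with $\check P^x,\check P^y$ of weight $-1$, $\check P^\psi$ of weight $0$, and $\check G^x,\check G^y$ of weight $+1$, so the action of an automorphism on $\check{\mathfrak r}$ is described by a pair of $2\times2$ blocks on the $\pm1$ weight spaces together with a scalar on $\langle\check P^\psi\rangle$. The relations $[\check P^t,\check G^x]=\check P^x$ and $[\check P^t,\check G^y]=\check P^y$ link the two blocks up to the sign $\varepsilon$, while $[\check P^x,\check G^x]=[\check P^y,\check G^y]=0$ force this common block to be either diagonal or antidiagonal, and the scalar on $\langle\check P^\psi\rangle$ is then determined by $[\check P^x,\check G^y]=\check P^\psi$. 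The remaining overall scale is normalized by the inner automorphisms generated by~$\check D$.

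For each surviving candidate form of the automorphism matrix I then set up and integrate the system of first-order equations for the components $(T,X,Y,\Psi)$ of the transformation $\mathcal T$, in the style of the analogous systems displayed in the proofs of Theorem~\ref{thm:BurgersSystemCompletePointSymGroup} and Proposition~\ref{pro:BurgersSystemWithDiffConstraintForSingle(1+2)DPDELieSyms}. Substituting the resulting candidate transformations into~\eqref{eq:BurgersSystemSingle(1+2)DPDE} and requiring the transformed equation to vanish on solutions should fix $\varepsilon=1$ and pin down the remaining discrete parameters, producing exactly the two noncommuting involutions $\check I_0\colon(t,x,y,\psi)\mapsto(t,y,x,\psi)$ and $\check I_1\colon(t,x,y,\psi)\mapsto(t,-x,y,-\psi)$ as representatives of the discrete symmetries modulo the identity component. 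As a sanity check, both $\check I_0$ and $\check I_1$ visibly preserve~\eqref{eq:BurgersSystemSingle(1+2)DPDE} by inspection, whereas the naive sign flip $\psi\mapsto-\psi$ alone is not a symmetry, consistent with the coupling between~$x$ and~$\psi$ in~$\check I_1$.

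The main obstacle will be the careful bookkeeping in the algebraic method: correctly coupling the outer $\mathrm{sl}(2,\mathbb R)$ sign $\varepsilon$ to the action on the weight spaces of $\check{\mathfrak r}$, handling the weight-zero generator $\check P^\psi$ (whose scaling factor is not an independent parameter but is determined by the $2\times2$ block via the bracket $[\check P^x,\check G^y]=\check P^\psi$), and ensuring that the final enumeration of discrete symmetries is not inflated by candidates already absorbed by inner automorphisms and continuous symmetries. Once this bookkeeping is done, verification that the resulting eight-element discrete group (generated by $\check I_0$ and $\check I_1$) is isomorphic to $\mathrm{Dih}_4$, as asserted in the subsequent corollary, is then routine.
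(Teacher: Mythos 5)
Your proposal is correct and follows essentially the same route as the paper: a standard infinitesimal computation for $\check{\mathfrak g}$, then the automorphism-based algebraic method using the Levi decomposition $\check{\mathfrak f}\lsemioplus\check{\mathfrak r}$, the factoring out of inner automorphisms and the single outer automorphism $\mathop{\rm diag}(-1,1,-1)$ of ${\rm sl}(2,\mathbb R)$, finished off by integrating the pushforward conditions and applying the direct method. The only (harmless) deviation is that you extract the constraints $ac=bd=0$ and $\delta=\varepsilon(ad+bc)$ from the nilradical's commutation relations before integrating, whereas the paper lets them emerge from the resulting differential system for the components $(T,X,Y,\Psi)$.
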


Therefore, the group~$\check G$ consists of the transformations in the space with coordinates $(t,x,y,\psi)$ 
whose $(t,x,y)$-components are of the form~\eqref{txyComponentsOfPOintSymsOfTheArtifitialEq1} or~\eqref{txyComponentsOfPOintSymsOfTheArtifitialEq2} 
and whose $\psi$-component are respectively of the form 
\begin{gather*}
\tilde\psi
=\varepsilon_1\varepsilon_2\psi
-\frac{\gamma\varepsilon_1\varepsilon_2}{\gamma t+\delta}xy
+\frac{\sigma\varepsilon_1\mu_2}{\gamma t+\delta}x
+\frac{\sigma\varepsilon_2\mu_1}{\gamma t+\delta}y
+\frac{\alpha t+\beta}{\gamma t+\delta}\mu_1\mu_2+\varkappa
\\
\hspace*{-\mathindent}\mbox{or}
\\
\tilde\psi
=\varepsilon_1\varepsilon_2\psi
-\frac{\gamma\varepsilon_1\varepsilon_2}{\gamma t+\delta}xy
+\frac{\sigma\varepsilon_1\mu_2}{\gamma t+\delta}y
+\frac{\sigma\varepsilon_2\mu_1}{\gamma t+\delta}x
+\frac{\alpha t+\beta}{\gamma t+\delta}\mu_1\mu_2+\varkappa.
\end{gather*}
Here the description of all the group parameter is the same as 
for~\eqref{txyComponentsOfPOintSymsOfTheArtifitialEq1} and~\eqref{txyComponentsOfPOintSymsOfTheArtifitialEq2}, 
except the additional parameter~$\varkappa$, which is an arbitrary constant.

\begin{proof}
The computation of the algebra~$\check{\mathfrak g}$ is standard and can be realized
using a package for symbolic calculations of Lie symmetries, e.g., the package \texttt{DESOLV}~\cite{carm2000a,vu2007a}.
The Levi decomposition of the algebra~$\check{\mathfrak g}$ is
$\check{\mathfrak g}=\check{\mathfrak f}\lsemioplus\check{\mathfrak r}$,
where the subalgebra $\check{\mathfrak f}=\langle\check P^t,\check D,\check \Pi\rangle$ is a Levi factor of~$\check{\mathfrak g}$,
and the radical~$\check{\mathfrak r}=\langle\check P^x,\check P^y,\check G^x,\check G^y,\check P^\psi\rangle$ of~$\check{\mathfrak g}$
coincides with its nilradical.
This is why, in the course of computing the complete point symmetry group~$\check G$
of the equation~\eqref{eq:BurgersSystemSingle(1+2)DPDE}
up to factoring out Lie symmetries,
similarly to the proofs of Theorem~\ref{thm:BurgersSystemCompletePointSymGroup}
and Proposition~\ref{pro:BurgersSystemWithDiffConstraintForSingle(1+2)DPDELieSyms}
it suffices to consider automorphisms of~$\check{\mathfrak g}$
that preserve the Levi factor~$\check{\mathfrak f}$
and the matrices of whose restrictions to~$\check{\mathfrak f}$ are of the form
$\mathop{\rm diag}(\varepsilon,1,\varepsilon)$ with $\varepsilon=\pm1$.
The set of such automorphisms is exhausted by those with matrices of the form
\[
A=\mathop{\rm diag}(\varepsilon,1,\varepsilon)
\oplus\begin{pmatrix}a&b\\c&d\end{pmatrix}
\oplus\varepsilon\begin{pmatrix}a&b\\c&d\end{pmatrix}
\oplus(\delta),
\]
where $\varepsilon=\pm1$, $ad-bc\ne0$ and $\delta\ne0$.
If the pushforward~$\mathcal T_*$ of vector fields by the point transformation
$\mathcal T\colon (\tilde t,\tilde x,\tilde y,\tilde\psi)=(T,X,Y,\Psi)(t,x,y,\psi)$
is the automorphism of~$\check{\mathfrak g}$ with the matrix~$A$, then
\begin{gather*}
\mathcal T_*\check P^\psi=\delta\tilde P^\psi,\quad
\mathcal T_*\check P^t=\varepsilon\tilde P^t,\quad
\mathcal T_*\check P^x=a\tilde P^x+c\tilde P^y,\quad
\mathcal T_*\check P^y=b\tilde P^x+d\tilde P^y,\\
\mathcal T_*\check G^x=\varepsilon a\tilde G^x+\varepsilon c\tilde G^y,\quad
\mathcal T_*\check G^y=\varepsilon b\tilde G^x+\varepsilon d\tilde G^y,\quad
\mathcal T_*\check D  =\tilde D,\quad
\mathcal T_*\check \Pi=\varepsilon\tilde\Pi,
\end{gather*}
where tildes instead of checks over vector fields mean that these vector fields are given in the new coordinates.
Componentwise splitting of the above conditions for~$\mathcal T_*$ gives a system of differential equations for the components of~$\mathcal T$,
\begin{gather*}
\Psi_\psi=\delta,\quad T_\psi=X_\psi=Y_\psi=0;\quad
T_t=\varepsilon,\quad X_t=Y_t=\Psi_t=0;\\
X_x=a,\quad Y_x=c,\quad T_x=\Psi_x=0;\quad
X_y=b,\quad Y_y=d,\quad T_y=\Psi_y=0;\\
tX_x=\varepsilon aT,\quad tY_x=\varepsilon cT,\quad y\Psi_\psi=\varepsilon aY+\varepsilon cX;\\
tX_y=\varepsilon bT,\quad tY_y=\varepsilon dT,\quad x\Psi_\psi=\varepsilon bY+\varepsilon dX;\\
tT_t=T,\quad xX_x+yX_y=X,\quad xY_x+yY_y=Y;\\
t^2T_t=\varepsilon T^2,\quad txX_x+tyX_y=\varepsilon TX,\quad txY_x+tyY_y=\varepsilon TY,\quad xy\Psi_\psi=\varepsilon XY.
\end{gather*}
The general solution of this system is $T=\varepsilon t$, $X=ax+by$, $Y=cx+dy$ and, up to shifts of~$\Psi$, $\Psi=\delta\psi$,
where the constant parameters~$a$, $b$, $c$, $d$ and~$\delta$ satisfy the equations
$ac=bd=0$ and $\varepsilon(ad+bc)=\delta$.
In view of the inequality $ad-bc\ne0$,
the equations $ac=bd=0$ imply the condition that either $(a,d)=(0,0)$ or $(b,c)=(0,0)$.
The first case is reduced by~$\check I_0$ to the second one,
where~$\check I_0$ is the obvious symmetry of~\eqref{eq:BurgersSystemSingle(1+2)DPDE}
that permutes~$x$ and~$y$, $\check I_0\colon(t,x,y,\psi)\mapsto(t,y,x,\psi)$.
Then the invariance of the equation~\eqref{eq:BurgersSystemSingle(1+2)DPDE}
with respect to the transformation~$\mathcal T$ gives more constraints for the transformation parameters,
which imply, under the assumption $(b,c)=(0,0)$, that $\varepsilon=1$, $a^2=1$, $d^2=1$ and thus $\delta=ad$.
The further proof is similar to the proof of Proposition~\ref{pro:BurgersSystemWithDiffConstraintForSingle(1+2)DPDELieSyms}.
\end{proof}

After the prolongation of vector fields from the algebra~$\check{\mathfrak g}$ and the discrete symmetries~$\check I_0$ and~$\check I_1$
to the derivatives~$\psi_x$ and~$\psi_y$,
it is obvious that all point symmetries of the system~$\bar{\mathcal S}$
are induced by point symmetries of the equation~\eqref{eq:BurgersSystemSingle(1+2)DPDE}.
Moreover, each point symmetry of the equation~\eqref{eq:BurgersSystemSingle(1+2)DPDE}
induces a point symmetry of the system~$\bar{\mathcal S}$,
and the element~$\check P^\psi$ of~$\check{\mathfrak g}$ induces the zero element of~$\bar{\mathfrak g}$.
In other words, $\bar{\mathfrak g}=\mathfrak g'\simeq\check{\mathfrak g}/\langle\check P^\psi\rangle$
and $\bar G=\check G/\check G_0$, where $\check G_0$ is the subgroup of~$\check G$ constituted by the shifts of~$\psi$.
This means that the system~$\bar{\mathcal S}$ admits no potential symmetries with the potential~$\psi$.
Note that the equation~\eqref{eq:BurgersSystemSingle(1+2)DPDE} possesses no local conservation laws,
and hence the system~$\bar{\mathcal S}$ admits only the independent local conservation law
related to the conserved form of the equation~\eqref{eq:BurgersSystemDiffConstraintForSingle(1+2)DPDE},
which is used for introducing the potential~$\psi$ and thus disappears in the course of this,
cf.\ Section~\ref{sec:BurgersSystemCLs}.

Nevertheless, the equation~\eqref{eq:BurgersSystemSingle(1+2)DPDE} is useful
for finding exact solutions of the Burgers system~\eqref{eq:BurgersSystem}
since a single equation~$\mathcal L$ is in general simpler to solve than a system of equations formally alike~$\mathcal L$ in form.
The rest of the present Section~\ref{sec:OneMoreReductionToSingle(1+2)DPDE} illustrates this claim.

\subsection{Essential Lie reduction of potential equation}

We carry out the Lie reduction of the equation~\eqref{eq:BurgersSystemSingle(1+2)DPDE}
with respect to the subalgebra $\check{\mathfrak g}_{2.1}=\langle\check P^t,\check D+2\varsigma\check P^\psi\rangle$
of the maximal Lie invariance algebra~$\check{\mathfrak g}$ of this equation, 
where up to $\check G$-equivalence the constant~$\varsigma$ can be assume nonnegative. 
This is the only codimension-two Lie reduction of the equation~\eqref{eq:BurgersSystemSingle(1+2)DPDE}
that is useful for finding exact solutions of the Burgers system~\eqref{eq:BurgersSystem}
since other such reductions lead to solutions of the Burgers system~\eqref{eq:BurgersSystem}
that are $G$-equivalent to $\mathfrak g^{1.8}$-invariant solutions,
and the whole set of later solutions is described in Section~\ref{sec:SymmetryAnalysisOfReducedSystemsOfPDEs}
in terms of two arbitrary solutions of the (1+1)-dimensional linear heat equation.
We choose the ansatz $\psi=\varphi(\omega)+2\varsigma\ln|x|$ with $\omega=x/y$,
cf.\ Section~\ref{sec:SolutionOfReducedSystemsOfODEs}.
The substitution of this ansatz into the equation~\eqref{eq:BurgersSystemSingle(1+2)DPDE}
gives the second-order reduced ODE
\[
(1+\omega^2)\varphi_{\omega\omega}+(2\omega+2\varsigma)\varphi_\omega+\omega(\varphi_\omega)^2=2\varsigma\omega^{-2},
\]
which is a Riccati equation with respect to~$\varphi_\omega$ and is thus further reduced by the substitution
\[
\varphi_\omega=\frac{1+\omega^2}\omega\frac{\theta_\omega}\theta\quad\mbox{with}\quad \theta=\theta(\omega)
\]
to the linear second-order ODE
\[
\omega(1+\omega^2)^2\theta_{\omega\omega}+(1+\omega^2)(3\omega^2+2\varsigma\omega-1)\theta_\omega-2\varsigma\theta=0.
\]
This equation can be integrated in elementary functions for all values of the parameter~$\varsigma$.
Splitting into different cases depending on~$\varsigma$,
we present its general solution jointly with the corresponding value of~$\varphi_\omega$
(below $\zeta:=\arctan\omega$):
\begin{gather*}
\theta=\frac{e^{-\varsigma\zeta}}{\sqrt{\omega^2+1}}\big(C_1(\omega-\varsigma)+C_2(\zeta(\omega-\varsigma)+1)\big),\quad
\varphi_\omega=\frac2\omega\frac{C_1+C_2(\zeta-\varsigma)}{C_1(\omega-\varsigma)+C_2(\zeta(\omega-\varsigma)+1)}\\
\mbox{if}\quad\varsigma=\pm1;
\\[1ex]
\theta=C_1e^{-\nu_1\zeta}\frac{\omega-\nu_1}{\sqrt{\omega^2+1}}+C_2e^{-\nu_2\zeta}\frac{\omega-\nu_2}{\sqrt{\omega^2+1}},\quad
\varphi_\omega=2\frac\varsigma\omega\frac{C_1\nu_1e^{-\nu_1\zeta}+C_2\nu_2e^{-\nu_2\zeta}}{C_1e^{-\nu_1\zeta}(\omega-\nu_1)+C_2e^{-\nu_2\zeta}(\omega-\nu_2)}\\
\mbox{with}\quad\nu_1=\varsigma+\sqrt{\varsigma^2-1},\quad \nu_2=\varsigma-\sqrt{\varsigma^2-1}
\quad\mbox{if}\quad|\varsigma|>1;
\\[1ex]
\theta=C_1\frac{e^{-\varsigma\zeta}}{\sqrt{\omega^2+1}}\big((\omega-\varsigma)\cos(\mu\zeta)-\mu\sin(\mu\zeta)\big)
      +C_2\frac{e^{-\varsigma\zeta}}{\sqrt{\omega^2+1}}\big((\omega-\varsigma)\sin(\mu\zeta)+\mu\cos(\mu\zeta)\big),\\
\varphi_\omega=2\frac\varsigma\omega\frac
{C_1\big(\varsigma\cos(\mu\zeta)+\mu\sin(\mu\zeta)\big)+C_2\big(\varsigma\sin(\mu\zeta)-\mu\cos(\mu\zeta)\big)}
{C_1\big((\omega-\varsigma)\cos(\mu\zeta)-\mu\sin(\mu\zeta)\big)
+C_2\big((\omega-\varsigma)\sin(\mu\zeta)+\mu\cos(\mu\zeta)\big)}\\
\mbox{with}\quad\mu=\sqrt{1-\varsigma^2}
\quad\mbox{if}\quad|\varsigma|<1.
\end{gather*}
We do not need to integrate the expressions for~$\varphi_\omega$ since
the associated solutions of the Burgers system~\eqref{eq:BurgersSystem} take the form
\[
u=-\frac x{y^2}\varphi_\omega,\quad
v=\frac1y\varphi_\omega+\frac\varsigma x\quad\mbox{with}\quad\omega=\frac xy.
\]

Note that the ansatz constructed with respect to the subalgebra $\check{\mathfrak g}_{2.1}$
induces the representation $\varphi^1=-\omega\varphi_\omega$ and $\varphi^2=\varphi_\omega+\varsigma\omega^{-1}$
for the dependent invariant functions of the ansatz~\eqref{eq:BurgersSystemAnzatz2.1kappa0}, 
which equivalent to the reduction 
of the differential constraint~\eqref{eq:BurgersSystemDiffConstraintForSingle(1+2)DPDE} 
by the ansatz~\eqref{eq:BurgersSystemAnzatz2.1kappa0} 
to the equation $(\varphi^1+\omega\varphi^2)_\omega=0$.
It is quite unobvious that this representation helps to construct
particular solutions of the reduced system~\eqref{eq:BurgersSystemNewReducedSystem2.1kappa0}.
The equivalent ansatz~2.1 with $\kappa=0$ reduces
the differential constraint~\eqref{eq:BurgersSystemDiffConstraintForSingle(1+2)DPDE}
to the equation
$\cos(2\omega)(\varphi^2_\omega+2\varphi^1)+\sin(2\omega)(\varphi^1_\omega-2\varphi^2)=0$,
which is even less obvious than the above representation.

\subsection{Solutions of potential equation that are affine in a space variable}

If a solution~$\psi$ of the equation~\eqref{eq:BurgersSystemSingle(1+2)DPDE} is affine in a space variable,
up to the permutation of~$x$ and $y$, which is a point symmetry of~\eqref{eq:BurgersSystemSingle(1+2)DPDE},
we can assume that $\psi$ is affine in~$y$, i.e., $\psi=\psi^1(t,x)y+\psi^0(t,x)$.
The set of such solutions is obviously singled out by the differential constraint $\psi_{yy}=0$.
Substituting the above representation for~$\psi$ into the equation~\eqref{eq:BurgersSystemSingle(1+2)DPDE}
and splitting with respect to~$y$,
we obtain a system of (1+1)-dimensional PDEs for~$\psi^1$ and~$\psi^0$,
which coincides, up to notation of independent and dependent variables, with reduced system~1.8,
\begin{gather*}
\psi^1_t+\psi^1\psi^1_x-\psi^1_{xx}=0,\\
\psi^0_t+\psi^1\psi^0_x-\psi^0_{xx}=0.
\end{gather*}
Therefore, the Hopf--Cole-type transformation $\psi^1=-2\theta^1_x/\theta^1$, $\psi^0=\theta^0/\theta^1$,
cf.~\eqref{eq:Hopf--Cole-likeTransForReducedSystem1.8},
reduces this system to the system of two copies of the (1+1)-dimensional linear heat equation,
$\theta^1_t=\theta^1_{xx}$, $\theta^0_t=\theta^0_{xx}$.
In view of the ansatz~\eqref{eq:BurgersSystemSolutionsViaSingle(1+2)DPDE}
reducing the Burgers system~\eqref{eq:BurgersSystem} to the equation~\eqref{eq:BurgersSystemSingle(1+2)DPDE},
this leads to one more subset of solutions of the Burgers system~\eqref{eq:BurgersSystem},
which are expressed in terms of a pair of solutions of the (1+1)-dimensional linear heat equation:
\begin{gather}\label{eq:BurgersSystem2ndSolutionSetIn2SolutionsOfLHEq}
u=-2\frac{\theta^1_x}{\theta^1}, \quad
v=\left(-2\frac{\theta^1_x}{\theta^1}y+\frac{\theta^0}{\theta^1}\right)_x
\quad\mbox{with}\quad \theta^i=\theta^i(t,x)\colon\ \theta^i_t=\theta^i_{xx},\quad i=0,1.
\end{gather}

\subsection{Solutions related to a complex Hamilton--Jacobi equation}\label{sec:SolutionsRelatedToComplexHamiltonJacobiEq}

An interesting solution subset of the equation~\eqref{eq:BurgersSystemSingle(1+2)DPDE}
is constituted by harmonic solutions, which additionally satisfy the Laplace equation $\psi_{xx}+\psi_{yy}=0$.
The corresponding overdetermined system
\begin{gather}\label{eq:BurgersSystemSingle(1+2)DPDESystemForharmonicSolutions}
\psi_t+\psi_x\psi_y=0,\quad \psi_{xx}+\psi_{yy}=0
\end{gather}
completed the trivial differential consequences of its first equation,
which are obtained by single differentiations with respect to~$t$, $x$ and~$y$,
is formally compatible,
and so no tools of the formal compatibility theory are relevant here.
To describe the general solution of this system, we use a different approach.
Since the function~$\psi$ is harmonic, it can be represented as, e.g.,
the real part of a complex-valued function $f=f(t,z)$
that is smooth with respect to~$t$ and holomorphic (i.e., complex analytic) with respect to~$z=x+iy$,
\[
\psi=\mathop{\rm Re}f(t,z)=\frac12(f(t,z)+f^*(t,z^*)).
\]
Here $i$ is the imaginary unit, asterisks denote the complex conjugation, and hence $z^*=x-iy$.
Under the above representation,
the Laplace equation $\psi_{xx}+\psi_{yy}=0$ can be neglected,
and the equation $\psi_t+\psi_x\psi_y=0$ reduces to
\[
f_t+f^*_t+\frac i2(f_z+f^*_{z^*})(f_z-f^*_{z^*})=0.
\]
We can separate the variables~$z$ and~$z^*$ in the last equation, obtaining
\[
f_t+\frac i2(f_z)^2=-f^*_t+\frac i2(f^*_{z^*})^2=i\lambda.
\]
Here $\lambda$ is a real-valued function depending only on~$t$.
It suffices to consider only the equation
\begin{gather}\label{eq:BurgersSystemSingle(1+2)DPDEHamiltonJacobiEq}
f_t+\frac i2(f_z)^2=i\lambda
\end{gather}
since its counterpart for~$f^*$ is the complex conjugate of it.
We can interpret the equation~\eqref{eq:BurgersSystemSingle(1+2)DPDEHamiltonJacobiEq} 
as a complex Hamilton--Jacobi equation.
Its complete integral is $f=az-\frac i2a^2t+i\Lambda(t)+b$,
where $a$ and~$b$ are complex constants,
and $\Lambda$ is a fixed antiderivative of~$\lambda$.
The equation~\eqref{eq:BurgersSystemSingle(1+2)DPDEHamiltonJacobiEq}
admits no singular integrals,
and the parametric representation of its general integral is
\begin{gather}\label{eq:BurgersSystemSingle(1+2)DPDEHamiltonJacobiEqGenIntegral}
f=az-\frac i2a^2t+i\Lambda(t)+F(a),\quad z-iat+F_a(a)=0,
\end{gather}
where $F=F(a)$ is an arbitrary holomorphic function of~$a$.
Solving the second equation of~\eqref{eq:BurgersSystemSingle(1+2)DPDEHamiltonJacobiEqGenIntegral} with a fixed~$F$ for~$a$
and substituting the obtained expression into the first equation,
we obtain a solution of the equation~\eqref{eq:BurgersSystemSingle(1+2)DPDEHamiltonJacobiEq}.
The corresponding solution of the Burgers system~\eqref{eq:BurgersSystem} is given~by
\[
u=(\mathop{\rm Re}f)_y=\frac i2(f_z-f^*_{z^*})=-\mathop{\rm Im}f_z,\quad
v=(\mathop{\rm Re}f)_x=\frac 12(f_z+f^*_{z^*})= \mathop{\rm Re}f_z.
\]
Note that varying the parameter function~$\lambda$ (resp.\ $\Lambda$)
has no influence on the expressions for~$\psi$, $u$ and~$v$.
Therefore, this parameter function can be assume to vanish.

If $F$ is a polynomial with $\deg F\leqslant3$, then we can derive explicit expressions
for the associated functions~$f$, $u$ and~$v$.
Thus, in the case where $\deg F\leqslant2$ we construct solutions of the Burgers system~\eqref{eq:BurgersSystem}
that are affine in the space variables~$(x,y)$ and take, up to shifts of the independent variables, the form
\begin{gather}\label{eq:BurgersSystemParticularSolutionAffineInXY}
u=\frac{tx-\mu y}{t^2+\mu^2},\quad
v=\frac{\mu x+ty}{t^2+\mu^2},
\end{gather}
where $\mu$ is an arbitrary (real) constant, and $\mu\in\{0,1\}\bmod G$.
The solution~\eqref{eq:BurgersSystemParticularSolutionAffineInXY} with $\mu=1$
is the simplest $\mathfrak g^{2.5}_0$- and $\mathfrak g^{2.6}_1$-invariant solution,
which is obtained by setting $\varphi^1=\varphi^2=0$ in ansatzes~2.5 and~2.6.
In fact, it is invariant with respect to the subalgebra $\langle P^t+\Pi+J,G^y+P^x,G^x-P^y\rangle$ of~$\mathfrak g$.
The solution~\eqref{eq:BurgersSystemParticularSolutionAffineInXY} with $\mu=0$
is invariant with respect to the subalgebra $\langle D,\Pi,J,G^x,G^y\rangle$.
See Section~\ref{sec:SolutionsAffineInSpaceVars} below 
for the description of the entire set of solutions of the Burgers system~\eqref{eq:BurgersSystem}
that are affine in the space variables~$(x,y)$.

Suppose now that $\deg F=3$.
We represent~$F$ in the form
$F=\frac13\alpha a^3+\frac12\beta a^2+\gamma a+\delta$.
In fact, up symmetry transformations of the original system~\eqref{eq:BurgersSystem}
we can set $\alpha=1$, $\beta\in\mathbb R$ and $\gamma=0$.
This gives the following solution of the Burgers system~\eqref{eq:BurgersSystem}:
\begin{gather*}
u=-\frac t   2\pm\frac12\sqrt{\frac{\sqrt{\zeta^2+\theta^2}-\zeta}2}\sgn\theta,\quad
v=-\frac\beta2\pm\frac12\sqrt{\frac{\sqrt{\zeta^2+\theta^2}+\zeta}2},\quad
\end{gather*}
where $\zeta=\beta^2-t^2-4x$, $\theta=2\beta t+4y$, $\beta\in\mathbb R$. Note that $\beta=0\bmod G$.

\section{Solutions affine in space variables}\label{sec:SolutionsAffineInSpaceVars}

The solutions of the form~\eqref{eq:BurgersSystemParticularSolutionAffineInXY},
even extended by transformations from the group~$G$,
do not exhaust the set ${\rm AS}_{\eqref{eq:BurgersSystem}}$ of solutions of the Burgers system~\eqref{eq:BurgersSystem}
that are affine in the space variables~$(x,y)$.
This set is $G$-invariant and is singled out from the whole solution set of~\eqref{eq:BurgersSystem}
by the differential constraint
$
u_{xx}=u_{xy}=u_{yy}=v_{xx}=v_{xy}=v_{yy}=0.
$
The general form of solutions from~${\rm AS}_{\eqref{eq:BurgersSystem}}$ is
\begin{gather}\label{eq:BurgersSystemAffineAnsatz}
{\bf u}=A(t){\bf x}+{\bf b}(t),
\end{gather}
where ${\bf u}=(u,v)^{\mathsf T}$, ${\bf x}=(x,y)^{\mathsf T}$ and
$A=A(t)$ (resp.\ ${\bf b}={\bf b}(t)$) is a smooth $2\times2$
(resp.\ $2\times1$) matrix function of~$t$.
We substitute the ansatz~\eqref{eq:BurgersSystemAffineAnsatz} into the system~\eqref{eq:BurgersSystem}
and split resulting equations with respect to~$x$ and~$y$,
which gives equations for~$A$ and~${\bf b}$,
\begin{gather}\label{eq:BurgersSystemAffineAnsatzSystemForParameters}
A_t+A^2=0,\quad {\bf b}_t+A{\bf b}=0.
\end{gather}
We separately consider different cases of integrating the system~\eqref{eq:BurgersSystemAffineAnsatzSystemForParameters}
depending on the degeneration degree of the matrix~$A$.
Each of these cases is $G$-invariant.

\medskip\par\noindent
1.\ $\det A\ne0$. We rewrite the first equation of~\eqref{eq:BurgersSystemAffineAnsatzSystemForParameters}
in the form $-A^{-1}A_tA^{-1}=E$, i.e., $(A^{-1})_t=E$, which integrates to $A^{-1}=tE+C$ and thus $A=(tE+C)^{-1}$.
Here $E$ is the $2\times2$ identity matrix and $C$ is an arbitrary constant $2\times2$ matrix.
In view of the first equation of~\eqref{eq:BurgersSystemAffineAnsatzSystemForParameters} and the nondegeneracy of~$A$,
the general solution of the second equation takes the form ${\bf b}=A{\bf b}^0$,
where ${\bf b}^0$ is an arbitrary constant $2\times1$ matrix.
The substitution of the obtained expressions into the ansatz~\eqref{eq:BurgersSystemAffineAnsatz}
leads to the following family of solutions from~${\rm AS}_{\eqref{eq:BurgersSystem}}$:
\[
{\bf u}=(tE+C)^{-1}({\bf x}+{\bf b}^0),
\]
where, in particular, ${\bf b}^0=0$ and $\mathop{\rm tr}C=0$ modulo $G$-equivalence.
Solutions of the form~\eqref{eq:BurgersSystemParticularSolutionAffineInXY} belong to this family.

\medskip\par\noindent
2.\ $\det A=0$, $\mathop{\rm tr}A\ne0$.
Up to permutation of~$(x,u)$ and~$(y,v)$, we can represent the matrix~$A$ as
\[
A=\begin{pmatrix}\alpha&\beta\\ \gamma&\beta\gamma/\alpha\end{pmatrix},
\]
\looseness=-1
where the entries $\alpha$, $\beta$ and~$\gamma$ are smooth functions of~$t$ with $\alpha\ne0$.
Then the matrix equation $A_t=-A^2$ expands to the system
$\alpha_t=-\lambda\alpha$, $\beta_t=-\lambda\beta$, $\gamma_t=-\lambda\gamma$
with $\lambda=\alpha+\beta\gamma/\alpha$.
The equation for the $(2,2)$-entry is identically satisfied in view of this system.
It also implies that $(\beta/\alpha)_t=0$ and $(\gamma/\alpha)_t=0$, i.e.,
$\beta=\alpha c_1$ and $\gamma=\alpha c_2$ for some constants~$c_1$ and~$c_2$.
Hence the equation for~$\alpha$ reduces to $\alpha_t=-(1+c_1c_2)\alpha^2$,
where $1+c_1c_2\ne0$ since $\mathop{\rm tr}A\ne0$,
and nonvanishing solutions of the last equation are given by
$\alpha=((1+c_1c_2)t+\kappa)^{-1}$, where $\kappa$ is an arbitrary constant.
Solving the second equation of~\eqref{eq:BurgersSystemAffineAnsatzSystemForParameters},
we get ${\bf b}=\mu\alpha(1,c_2)^{\mathsf T}+\nu(-c_1,1)^{\mathsf T}$,
where $\mu$ and~$\nu$ are arbitrary constants.
We substitute the expressions obtained for~$A$ and~$b$ into the ansatz~\eqref{eq:BurgersSystemAffineAnsatz}
and symmetrize the resulting expression for~${\bf u}$,
which gives
\[
{\bf u}=\frac1{t\mathop{\rm tr}C+\kappa}C(x+{\bf m})+{\bf n},
\]
where $\kappa$ is an arbitrary constant,
$C$ is an arbitrary constant $2\times2$ matrix with $\det C=0$ and $\mathop{\rm tr}C\ne0$,
${\bf m}$ and~${\bf n}$ are arbitrary constant $2\times1$ matrices with $C{\bf n}=0$.
Up to $G$-equivalence, we can assume that
$\kappa=0$, ${\bf m}={\bf n}={\bf 0}$, $\mathop{\rm tr}C=1$
and the second row of~$C$ is zero, i.e.,
$u=t^{-1}(x+c_1y)$, $v=0$, where $c_1$ is an arbitrary constant.

\medskip\par\noindent
3.\ $\det A=\mathop{\rm tr}A=0$, $A\ne0$.
Hence $A^2=0$, and thus $A_t=0$, i.e., $A$ is a constant nilpotent matrix.
Then the second equation of~\eqref{eq:BurgersSystemAffineAnsatzSystemForParameters}
integrates to ${\bf b}=\mu(A{\bf b}^1t+{\bf b}^1)+\nu A{\bf b}^1$,
where $\mu$ and~$\nu$ are arbitrary constants
and ${\bf b}^1$ is a fixed constant $2\times1$ matrix with $A{\bf b}^1\ne{\bf 0}$.
The corresponding solution of the system~\eqref{eq:BurgersSystem} is
\[
{\bf u}=A({\bf x}+\mu{\bf b}^1t+\nu{\bf b}^1)-\mu{\bf b}^1,
\]
and up to  $G$-equivalence $A=\begin{pmatrix}0&1\\0&0\end{pmatrix}$ and $\mu=\nu=0$,
i.e., $u=y$ and $v=0$.

\medskip\par\noindent
4.\ $A=0$. In this case we have only trivial constant solutions.
Moreover, any constant solution is $G$-equivalent to the zero solution
with respect to Galilean boosts.

\medskip\par
It can be checked that each solution affine in the space variables~$(x,y)$
is invariant with respect to a three-dimensional subalgebra
of the maximal Lie invariance algebra~$\mathfrak g$ of the Burgers system~\eqref{eq:BurgersSystem}.

\section{Common solutions of `viscid' and  `inviscid' Burgers systems}
\label{sec:CommonSolutionsOfViscidAndInviscidBurgersSystems}

The classical Burgers equation $u_t+uu_x=\nu u_{xx}$ and its `inviscid' counterpart $u_t+uu_x=0$
admit only trivial common solutions that are at most affine in~$x$.
The set of common solutions is exhausted by two solution families~\cite{poch2013a},
the one-parameter family of constant solutions
and the two-parameter family of solutions of the form $u=(x+C_1)/(t+C_2)$,
where $C_1$ and $C_2$ are arbitrary constants.
For the `viscid' Burgers system~\eqref{eq:BurgersSystem} and the `inviscid' Burgers system
\begin{gather}\label{eq:InvicidBurgersSystem}
\begin{split}
&u_t+uu_x+vu_y=0,\\
&v_t+uv_x+vv_y=0,
\end{split}
\end{gather}
the structure of the set of their common solutions is not so obvious.
Additionally to~\eqref{eq:InvicidBurgersSystem}, both the components~$u$ an~$v$ of these solutions
satisfy the two-dimensional Laplace equation, $u_{xx}+u_{yy}=0$ and $v_{xx}+v_{yy}=0$,
and hence admit the representation
$u=\mathop{\rm Re}f=\frac12(f+f^*)$, $v=\mathop{\rm Im}g=\frac12(g+g^*)$,
where $f=f(t,z)$ and $g=g(t,z)$ are complex-valued functions 
that are smooth with respect to~$t$ and holomorphic (i.e., complex analytic) with respect to~$z=x+iy$.
Here again $i$ is the imaginary unit, and asterisks denote the complex conjugation.
We substitute the above representation into the system~\eqref{eq:InvicidBurgersSystem}, getting
\begin{gather}\label{eq:InvicidBurgersSystemComplexification}
\begin{split}
&2(f_t+f^*_t)+(f+f^*)(f_z+f^*_{z^*})+i(g+g^*)(f_z-f^*_{z^*})=0,\\
&2(g_t+g^*_t)+(f+f^*)(g_z+g^*_{z^*})+i(g+g^*)(g_z-g^*_{z^*})=0.
\end{split}
\end{gather}
Consider the differential consequences derived by
differentiating both the equations~\eqref{eq:InvicidBurgersSystemComplexification}
with respect to~$z$ and~$z^*$,
\begin{gather*}
(f^*_{z^*}+ig^*_{z^*})f_{zz}+(f_z-ig_z)f^*_{z^*z^*}=0,\quad
(f^*_{z^*}+ig^*_{z^*})g_{zz}+(f_z-ig_z)g^*_{z^*z^*}=0.
\end{gather*}

The condition $f_z-ig_z=0$ and its conjugate $f^*_{z^*}+ig^*_{z^*}=0$ mean
that the functions~$u$ and~$v$ satisfy the constraint $u_x=v_y$.
Such solutions have exhaustively been studied in Section~\ref{sec:SolutionsRelatedToComplexHamiltonJacobiEq}.

Further suppose that $f_z-ig_z\ne0$.
Then we can separate the variables in the differential consequences,
\begin{gather*}
\frac{f_{zz}}{f_z-ig_z}=-\frac{f^*_{z^*z^*}}{f^*_{z^*}+ig^*_{z^*}}=i\lambda^1,\quad
\frac{g_{zz}}{f_z-ig_z}=-\frac{g^*_{z^*z^*}}{f^*_{z^*}+ig^*_{z^*}}=i\lambda^2,
\end{gather*}
where $\lambda^1$ and~$\lambda^2$ are real-valued smooth functions of~$t$.
Therefore,
\[
f_{zz}=i\lambda^1(f_z-ig_z),\quad
g_{zz}=i\lambda^2(f_z-ig_z),
\]
and these equations are recombined to
$(f-ig)_{zz}=i\lambda^*(f-ig)_z$,
$(f+ig)_{zz}=i\lambda(f-ig)_z$
with $\lambda:=\lambda^1+i\lambda^2$.
If $\lambda\ne0$, solutions of these equations do not satisfy
the system~\eqref{eq:InvicidBurgersSystemComplexification}.
Hence $\lambda=0$, i.e., $f_{zz}=g_{zz}=0$.
In other words, the tuple $(u,v)$ is affine in $(x,y)$ with coefficients depending on~$t$.
The solutions of the Burgers system~\eqref{eq:BurgersSystem} with this property
have been constructed in Section~\ref{sec:SolutionsAffineInSpaceVars},
and all of them also satisfy  the `inviscid' Burgers system~\eqref{eq:InvicidBurgersSystem}.

Summing up, we prove the following assertion.

\begin{proposition}
The set of common solutions of the `viscid' Burgers system~\eqref{eq:BurgersSystem}
and the `inviscid' Burgers system~\eqref{eq:InvicidBurgersSystem}
is the union of two solution subsets.
One of them consists of solutions satisfying,
additionally to the Laplace equations $u_{xx}+u_{yy}=0$ and $v_{xx}+v_{yy}=0$,
the differential constraint $u_x=v_y$.
The other subset is constituted by solutions affine in the space variables~$(x,y)$.
\end{proposition}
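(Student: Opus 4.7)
The plan is to first reduce the problem to a question about holomorphic data. Any common solution satisfies the componentwise difference of~\eqref{eq:BurgersSystem} and~\eqref{eq:InvicidBurgersSystem}, which immediately yields $u_{xx}+u_{yy}=0$ and $v_{xx}+v_{yy}=0$. Hence both~$u$ and~$v$ are harmonic in $(x,y)$ and admit complex representations $u=\mathop{\rm Re}f(t,z)$, $v=\mathop{\rm Re}g(t,z)$ with $z=x+iy$ and $f,g$ holomorphic in~$z$ and smooth in~$t$. This is the crucial opening step: it converts a nonlinear PDE system on a real two-plane into algebraic-analytic constraints on a $t$-family of holomorphic functions, where any remaining dependence on~$z^*$ can only come through the complex conjugates~$f^*,g^*$.

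I would then substitute this representation into~\eqref{eq:InvicidBurgersSystem} and take the mixed derivatives $\partial_z\partial_{z^*}$ of the resulting identities. Cross-terms $f_{zz^*}$, $g_{zz^*}$, $f^*_{zz^*}$ and $g^*_{zz^*}$ vanish by holomorphicity, so the derived identities should factor, and I expect the combination $f_z-ig_z$ (whose vanishing is equivalent to $u_x=v_y$) to appear explicitly as one of the factors. This naturally produces the dichotomy stated in the proposition.

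Case A, $f_z-ig_z\equiv 0$: this is exactly the constraint $u_x=v_y$, so together with the already established harmonicity the solution belongs to the family analysed via the complex Hamilton--Jacobi equation in Section~\ref{sec:SolutionsRelatedToComplexHamiltonJacobiEq}, and nothing more is needed. Case B, $f_z-ig_z\not\equiv 0$: dividing the differential consequences by this factor separates the $z$- and $z^*$-dependence, forcing each of the quotients $f_{zz}/(f_z-ig_z)$ and $g_{zz}/(f_z-ig_z)$ to equal an imaginary-valued function of~$t$ alone; write $f_{zz}=i\lambda^1(t)(f_z-ig_z)$ and $g_{zz}=i\lambda^2(t)(f_z-ig_z)$ with real $\lambda^1,\lambda^2$, and set $\lambda:=\lambda^1+i\lambda^2$.

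The main obstacle is to show that $\lambda\equiv 0$. My plan is to recombine the relations of Case~B into linear ODEs in~$z$ for $f\pm ig$ and substitute back into the complexified inviscid system; a nonzero $\lambda$ would couple the $z$- and $z^*$-parts in a way that cannot be reconciled with the separated structure already imposed, producing a contradiction. Once $\lambda\equiv 0$ one has $f_{zz}=g_{zz}=0$, i.e.\ $(u,v)$ is affine in $(x,y)$. At this point the exhaustive classification of such solutions established in Section~\ref{sec:SolutionsAffineInSpaceVars} takes over; a direct check confirms that every solution on that list automatically satisfies~\eqref{eq:InvicidBurgersSystem} as well, which closes the argument.
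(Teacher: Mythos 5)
Your proposal follows essentially the same route as the paper's own argument: harmonicity of $u$ and $v$ from the componentwise difference of the two systems, the holomorphic representation in $z=x+iy$, the mixed $\partial_z\partial_{z^*}$ differential consequences that factor through $f_z-ig_z$, the dichotomy between $f_z-ig_z\equiv0$ (the constraint $u_x=v_y$) and the separated relations $f_{zz}=i\lambda^1(t)(f_z-ig_z)$, $g_{zz}=i\lambda^2(t)(f_z-ig_z)$, and finally $\lambda=0$ forcing $(u,v)$ to be affine in $(x,y)$. The one step you flag as the main obstacle --- ruling out $\lambda\ne0$ by substituting back into the complexified inviscid system --- is precisely the step the paper itself only asserts without detail, so your plan matches the published proof in both structure and level of completeness.
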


Recall that these solution subsets are not disjoint.
Their intersection includes, in particular, the zero solution
and any solution $G$-equivalent to a solution of the form~\eqref{eq:BurgersSystemParticularSolutionAffineInXY}.

\section{More solutions with method of differential constraints}\label{sec:BurgersSystemMoreSolutionsWithMethodOfDiffConstraints}

If a linear combination of~$u$ and~$v$ is a constant, then up to $G$-equivalence we can
assume that $v=0$ and hence the component~$u$ satisfies the (1+2)-dimensional Burgers equation~\eqref{(1+2)DBurgersEq}.
Some exact solutions of this equation
that are not related to the linearizable solution subsets of the system~\eqref{eq:BurgersSystem}
were constructed in~\cite{raja2008a}.

We generalize the corresponding solutions of the system~\eqref{eq:BurgersSystem}.
Suppose that the component~$v$ does not depend on~$x$, $v_x=0$.
Then the second equation of the system~\eqref{eq:BurgersSystem} reduces to the classical Burgers equation,
and thus we obtain the system
\begin{gather}\label{eq:BurgersSystemWithVx=0}
\begin{split}
&u_t+uu_x+vu_y-u_{xx}-u_{yy}=0,\\
&v_t+vv_y-v_{yy}=0,\quad v_x=0.
\end{split}
\end{gather}
Let us additionally set the differential constraint $u_{xx}=0$,
i.e., we assume that $u$ is affine in~$x$, $u=u^1(t,y)x+u^0(t,y)$,
which leads to the partially coupled system
\begin{gather}\label{eq:BurgersSystemWithVx=0AndUxx=0}
\begin{split}
&v_t+vv_y-v_{yy}=0,\\
&u^1_t+u^1u^1+vu^1_y-u^1_{yy}=0,\\
&u^0_t+u^0u^1+vu^0_y-u^0_{yy}=0
\end{split}
\end{gather}
for the functions $u^1$, $u^0$ and $v$ depending on~$(t,y)$.
The reduction of the system~\eqref{eq:BurgersSystem} to the system~\eqref{eq:BurgersSystemWithVx=0AndUxx=0}
by the ansatz $u=u^1(t,y)x+u^0(t,y)$, $v=v(t,y)$ means
that the system~\eqref{eq:BurgersSystem} is conditionally invariant with respect to
the generalized symmetry $u_{xx}\p_u+v_x\p_v$.
The maximal Lie invariance algebra of the system~\eqref{eq:BurgersSystemWithVx=0AndUxx=0}
is spanned by the vector fields
\begin{gather*}
\bar P^t=\p_t, \quad
\bar D=2t\p_t+y\p_y-v\p_v-2u^1\p_{u^1},
\\
\bar\Pi=t^2\p_t+ty\p_y+(y-tv)\p_v-tu^0\p_{u^0}+(1-2tu^1)\p_{u^1},\quad
\bar I=u^0\p_{u^0},
\\
\bar P^x=u^1\p_{u^0}, \quad
\bar P^y=\p_y, \quad
\bar G^x=(1-tu^1)\p_{u^0},\quad
\bar G^y=t\p_y+\p_v.
\end{gather*}
All of the above Lie symmetry vector fields, except~$\bar I$, are respectively induced
by the basis elements of the maximal Lie invariance algebra~$\mathfrak g$
of the system~\eqref{eq:BurgersSystem} with the same notation without bar.
In other words, the Burgers system~\eqref{eq:BurgersSystem} admits nontrivial hidden symmetries
related to the ansatz $u=u^1(t,y)x+u^0(t,y)$, $v=v(t,y)$,
which necessarily involve, as a summand, a vector field proportional to~$\bar I$.
More important is that the entire Lie invariance algebra of the first equation
of the system~\eqref{eq:BurgersSystemWithVx=0AndUxx=0} is prolonged to~$(u^0,u^1)$,
and the prolonged algebra coincides with
$\langle\bar P^t,\bar D,\bar\Pi,\bar P^y,\bar G^y\rangle$.
The only independent discrete symmetry of this equation
that simultaneously alternates the signs of~$y$ and~$v$ is trivially prolonged
to a discrete symmetry of the entire system~\eqref{eq:BurgersSystemWithVx=0AndUxx=0}.
This is why for finding $G$-inequivalent solutions
of the system~\eqref{eq:BurgersSystemWithVx=0AndUxx=0}
by sequentially solving its equations in view of partially coupling of it,
one should start with  solutions of the Burgers equation for~$v$
that are inequivalent with respect to the point symmetry group of this equation.

Under the additional constraint $u^1=v_y$, the second equation of~\eqref{eq:BurgersSystemWithVx=0AndUxx=0}
is a differential consequence of the first one, taking the form $(v_t+vv_y-v_{yy})_y=0$,
and the third equation reduces to the equation $u^0_t+(vu^0-u^0_y)_y=0$,
which is in conserved form.
Using this conserved form, we introduce the potential~$w$,
which is defined by the system $w_y=u^0$, $w_t=-vu^0-u^0_y$
and thus satisfies the equation $w_t+vw_y-w_{yy}=0$.
The system of (1+1)-dimensional PDEs for~$v=v(t,y)$ and~$w=w(t,y)$
coincides, up to notation of independent and dependent variables, with reduced system~1.8.
Therefore, the Hopf--Cole-type transformation $v=-2\theta^1_y/\theta^1$, $w=\theta^0/\theta^1$,
cf.~\eqref{eq:Hopf--Cole-likeTransForReducedSystem1.8},
reduces this system to the system of two copies of the (1+1)-dimensional linear heat equation,
$\theta^1_t=\theta^1_{yy}$, $\theta^0_t=\theta^0_{yy}$.
As a result, we construct the solution family of the Burgers system~\eqref{eq:BurgersSystem}
\begin{gather}\label{eq:BurgersSystem2ndSolutionSetIn2SolutionsOfLHEqModified}
v=\left(-2\frac{\theta^1_y}{\theta^1}x+\frac{\theta^0}{\theta^1}\right)_y, \quad
v=-2\frac{\theta^1_y}{\theta^1}
\quad\mbox{with}\quad \theta^i=\theta^i(t,y)\colon\ \theta^i_t=\theta^i_{yy},\quad i=0,1.
\end{gather}
which differs from the family~\eqref{eq:BurgersSystem2ndSolutionSetIn2SolutionsOfLHEq}
by the permutation of $(x,u)$ and~$(y,v)$.

We can carry out the Hopf--Cole-type transformation
$u=\tilde u/\tilde v$, $v=-2\tilde v_y/\tilde v$
in~\eqref{eq:BurgersSystemWithVx=0}, to get the system
\begin{gather}
\begin{split}
&\tilde u_t+\frac1{\tilde v}\tilde u\tilde u_x-\tilde u_{xx}-\tilde u_{yy}=0,\\
&\tilde v_t-\tilde v_{yy}=0,\quad \tilde v_x=0.
\end{split}
\end{gather}
Suppose additionally that $\tilde u_{xx}=0$, i.e., $\tilde u$ is affine in~$x$, $\tilde u=w^1(t,y)x+w^0(t,y)$.
As a result, we obtain the system
\begin{subequations}\label{eq:BurgersSystemWithVx=0Uxx=0AndHopf-ColeTrans}
\begin{gather}
\tilde v_t-\tilde v_{yy}=0,
\label{eq:BurgersSystemWithVx=0Uxx=0AndHopf-ColeTrans1}\\
w^1_t-w^1_{yy}+\frac{(w^1)^2}{\tilde v}=0,
\label{eq:BurgersSystemWithVx=0Uxx=0AndHopf-ColeTrans2}\\
w^0_t-w^0_{yy}+\frac{w^1}{\tilde v}w^0=0
\label{eq:BurgersSystemWithVx=0Uxx=0AndHopf-ColeTrans3}
\end{gather}
\end{subequations}
with the two independent variables~$t$ and~$y$.
Lie reduction~1.8 is $G$-equivalent to the case $w^1=0$ giving $\tilde u_x=0$.
An alternative way for deriving the system~\eqref{eq:BurgersSystemWithVx=0Uxx=0AndHopf-ColeTrans}
is to make the Hopf--Cole-type transformation
$u^0=w^0/\tilde v$, $u^1=w^1/\tilde v$, $v=-2\tilde v_y/\tilde v$
in the system~\eqref{eq:BurgersSystemWithVx=0AndUxx=0}.

Similarly to the system~\eqref{eq:BurgersSystemWithVx=0AndUxx=0},
the system~\eqref{eq:BurgersSystemWithVx=0Uxx=0AndHopf-ColeTrans} is partially coupled.
Therefore, finding its exact solutions can be split into three steps:
choosing a solution of the linear heat equation~\eqref{eq:BurgersSystemWithVx=0Uxx=0AndHopf-ColeTrans1},
substituting this value of~$\tilde v$ into the equation~\eqref{eq:BurgersSystemWithVx=0Uxx=0AndHopf-ColeTrans2}
and constructing an exact solution~$w^1$ of the obtained nonlinear (variable-coefficient) heat equation with quadratic nonlinearity,
and, finally, solving the linear heat equation~\eqref{eq:BurgersSystemWithVx=0Uxx=0AndHopf-ColeTrans3}
with the potential $w^1/\tilde v$.
Thus, for any solution $(w^1,\tilde v)$
of the subsystem~\eqref{eq:BurgersSystemWithVx=0Uxx=0AndHopf-ColeTrans1}--\eqref{eq:BurgersSystemWithVx=0Uxx=0AndHopf-ColeTrans2}
we construct the family of solutions
\[
u=\frac{w^1}{\tilde v}x+\frac{w^0}{\tilde v},\quad
v=-2\frac{\tilde v_y}{\tilde v}
\]
of the Burgers system~\eqref{eq:BurgersSystem}
parameterized  by solutions of the linear heat equation~\eqref{eq:BurgersSystemWithVx=0Uxx=0AndHopf-ColeTrans3}
with the potential $w^1/\tilde v$.
We need to select solutions with $w^1\ne-2\tilde v_{yy}+2(\tilde v_y)^2/\tilde v$
since all other solutions belong to the family~\eqref{eq:BurgersSystem2ndSolutionSetIn2SolutionsOfLHEqModified}.
The maximal Lie invariance algebra~$\tilde{\mathfrak g}$ of the system~\eqref{eq:BurgersSystemWithVx=0Uxx=0AndHopf-ColeTrans}
is spanned by the vector fields
\begin{gather*}
\tilde P^t=\p_t, \quad
\tilde D=2t\p_t+y\p_y+2\tilde v\p_{\tilde v},
\\
\tilde\Pi=4t^2\p_t+4ty\p_y-(y^2+2t)\tilde v\p_{\tilde v}-(y^2+6t)w^0\p_{w^0}-(y^2w^1+10tw^1-4v)\p_{w^1},\quad
\\
\tilde P^x=w^1\p_{w^0}, \quad
\tilde P^y=\p_y, \quad
\tilde G^x=(\tilde v-tw^1)\p_{w^0},\quad
\tilde G^y=2t\p_y-y\tilde v\p_{\tilde v}-yw^0\p_{w^0}-yw^1\p_{w^1},
\\
\tilde I^1=\tilde v\p_{\tilde v}+w^1\p_{w^1},\quad
\tilde I^0=w^0\p_{w^0}.
\end{gather*}
Each basis element of~$\tilde{\mathfrak g}$ except $\tilde I^1$ induces, via the first prolongation,
the respective basis element
of the maximal Lie invariance algebra of the system~\eqref{eq:BurgersSystemWithVx=0AndUxx=0};
cf.\ Section~\ref{sec:SymmetryAnalysisOfReducedSystemsOfPDEs}.
The basis element $\tilde I^1$ is mapped to the zero vector field.

The essential invariance algebra (see the definition in \cite{bihl2017a,kuru2016a})
of the single equation \eqref{eq:BurgersSystemWithVx=0Uxx=0AndHopf-ColeTrans1},
which is just the classical linear heat equation, is entirely prolonged to $(w^0,w^1)$,
and the prolonged algebra is equal to
$\langle\tilde P^t,\tilde D,\tilde\Pi,\tilde P^y,\tilde G^y,\tilde I^1\rangle$.
A similar assertion on discrete symmetries is also true.
This is why in the course of the construction of solutions
of the system~\eqref{eq:BurgersSystemWithVx=0Uxx=0AndHopf-ColeTrans}
it suffices to use only solutions of the equation~\eqref{eq:BurgersSystemWithVx=0Uxx=0AndHopf-ColeTrans1}
that are inequivalent with respect to its point symmetry group.

We return to solutions with $v=0$, which corresponds to constant values of~$\tilde v$.
Without loss of generality we can set $\tilde v=1$.
As mentioned in the beginning of this section,
then the component~$u$ satisfies the (1+2)-dimensional Burgers equation~\eqref{(1+2)DBurgersEq}.
We arrange and extend solution families of~\eqref{(1+2)DBurgersEq}
that were constructed in~\cite{raja2008a}.

Looking for stationary solutions of the equation~\eqref{eq:BurgersSystemWithVx=0Uxx=0AndHopf-ColeTrans2}
with $\tilde v=1$, which is the nonlinear heat equation with purely quadratic nonlinearity,
we integrate the system $w^1_t=0$, $w^1_{yy}=(w^1)^2$
whose general solution is
$w^1=\wp\left(y/\sqrt6+C_2;0,C_1\right)$.
Here $C_1$ and~$C_2$ are arbitrary constants
and  $\wp=\wp(z;\textsl{g}_2,\textsl{g}_3)$ is the Weierstrass elliptic function,
satisfying the differential equation
\[(\wp_z)^2 = 4\wp^3-\textsl{g}_2\wp-\textsl{g}_3,\]
where the numbers $\textsl{g}_2$ and $\textsl{g}_3$ called invariants
are respectively equal to $0$ and~$C_1$ for~$w^1$.
For such general values of~$w^1$, the essential invariance algebra of
the equation~\eqref{eq:BurgersSystemWithVx=0Uxx=0AndHopf-ColeTrans3} is $\langle\p_t,w^0\p_{w^0}\rangle$.
Using the subalgebra $\langle\p_t+C_3w^0\p_{w^0}\rangle$ of this algebra with an arbitrary constant~$C_3$,
we construct the ansatz $w^0=e^{C_3t}\varphi(z)$ with $z=y/\sqrt6$,
which reduces~\eqref{eq:BurgersSystemWithVx=0Uxx=0AndHopf-ColeTrans3} to the Lam\'e equation
\begin{gather}\label{eq:BurgersSystemLameEq}
\varphi_{zz}=6\big(C_3+\wp(z;0,C_1)\big)\varphi.
\end{gather}
As a result, we get the following solution family of the (1+2)-dimensional Burgers equation~\eqref{(1+2)DBurgersEq}
(resp.\ of the Burgers system~\eqref{eq:BurgersSystem} if additionally $v=0$):
\[
u=\wp\left(y/\sqrt6+C_2;0,C_1\right)x+e^{C_3t}\varphi(y/\sqrt6),
\]
where $C_1$, $C_2$ and~$C_3$ are arbitrary constants
and $\varphi$ is the general solution of Lam\'e equation~\eqref{eq:BurgersSystemLameEq}.

In the simplest case $C_1=0$, the above value of~$w^1$ is equivalent, up to translations in~$y$, to the function $w^1=6y^{-2}$.
Then the essential invariance algebra of the equation~\eqref{eq:BurgersSystemWithVx=0Uxx=0AndHopf-ColeTrans3}
becomes wider~\cite[Section~9.9]{ovsiannikov1982}
and, which is more important, all solutions of this equation
can be expressed in terms of solutions of the linear heat equation
using the Darboux transformation,
\[
w^0=\mathrm{DT}[y,y^3+6t]\theta=\theta_{yy}-\frac3y\theta_y+\frac3{y^2}\theta.
\]
Here $\theta=\theta(t,y)$ is an arbitrary solution of the linear heat equation $\theta_t=\theta_{yy}$,
and, given solutions~$f^1$, \dots, $f^k$ of a (1+1)-dimensional linear evolution equation to be transformed with $\mathrm W(f^1,\dots,f^k)\ne0$,
the action of the corresponding Darboux operator on a solution~$f$ is defined by
\[
  \mathrm{DT}[f^1,\dots,f^k]f=\frac{\mathrm W(f^1,\dots,f^k,f)}{\mathrm W(f^1,\dots,f^k)},
\]
where $\mathrm W$'s denote the Wronskians of the indicated tuples of functions with respect to the `space' variable~\cite{popo2008a}.
We construct the solution family of the (1+2)-dimensional Burgers equation~\eqref{(1+2)DBurgersEq}
(again, resp.\ of the Burgers system~\eqref{eq:BurgersSystem} if additionally $v=0$)
\begin{gather}\label{eq:BurgersSystemSolutionSetIn1SolutionOfLHEq}
u=6\frac x{y^2}+\theta_{yy}-\frac3y\theta_y+\frac3{y^2}\theta,
\end{gather}
where $\theta=\theta(t,y)$ is an arbitrary solution of the linear heat equation $\theta_t=\theta_{yy}$.

Finally, setting $w^1$ to be the similarity solution of the nonlinear heat equation with purely quadratic nonlinearity
that is constructed in~\cite{bara2002d} (see also \cite[Section~5.1.1.1.1]{Polyanin&Zaitsev2012})
and finding similarity solutions of the corresponding equation~\eqref{eq:BurgersSystemWithVx=0Uxx=0AndHopf-ColeTrans3},
we obtain the solution
\begin{gather*}
u=12(4\pm\sqrt6)\frac{y^2+(18\pm8\sqrt6)t}{(y^2+10\lambda_\pm t)^2}x
+|t|^{\nu+3/2}\exp\left(-\frac{y^2}{4t}\right)\frac{C_1y+C_2|t|^{1/2}}{(y^2+10\lambda_\pm t)^2}
\\ \phantom{u=}\times
\mathop{\rm HeunC}\left(\frac52\lambda_\pm,-\frac12,-5,\frac58\lambda_\pm(4\nu+1),-\frac52\lambda_\pm\nu-\frac{59}8\mp\frac{29}8\sqrt6,\frac{-y^2}{10\lambda_\pm t}\right),
\end{gather*}
where $\lambda_\pm=3\pm\sqrt6$,
$\mathop{\rm HeunC}(\alpha,\beta,\gamma,\delta,\eta,z)$
is the confluent Heun function,
which is the solution of the following Cauchy problem for the confluent Heun equation with respect to $Y=Y(z)$:
\begin{gather*}
z(z-1)Y_{zz}+\big(\alpha z(z-1)+(\beta+1)(z-1)+(\gamma+1)z\big)Y_z\\
\qquad{}+\frac12\big(\alpha(\beta+1)(z-1)+\alpha(\gamma+1)z+2\delta z+(\beta+1)(\gamma+1)+2\eta-1\big)Y=0,\\
Y(0)=1,\quad Y_z(0)=\frac12\left(\frac{2\eta-1}{\beta+1}+\gamma+1-\alpha\right).
\end{gather*}

\section{Conclusion}\label{sec:Conclusion}

Carrying out symmetry analysis of the two-dimensional Burgers system~\eqref{eq:BurgersSystem} in the present paper,
we have paid a special attention to optimizing all computation.
This was an important ingredient for essentially enhancing and generalizing results on the system~\eqref{eq:BurgersSystem}
that exist in the literature.

Thus, in the course of looking for discrete symmetries of the system~\eqref{eq:BurgersSystem}
and of various related Lie and non-Lie reduced systems with the automorphism-based version of the algebraic method,
we have used the Levi--Malcev theorem and the well-known description of automorphisms of the Lie algebra ${\rm sl}(2,\mathbb R)$
and factored out, from the very beginning, inner automorphisms
of the corresponding maximal Lie invariance algebras that are related to their Levi factors.
We plan to develop this technique and apply it to study discrete symmetries
of other systems of differential equations with finite-dimensional Lie invariance algebras
possessing nonzero Levi factors.

The optimization of the first step of the Lie reduction procedure
is standard for the Burgers system~\eqref{eq:BurgersSystem}
and is to construct optimal lists of one- and two-dimensional subalgebras
of the maximal Lie invariance algebra~$\mathfrak g$ of the system~\eqref{eq:BurgersSystem},
which is the so-called reduced (i.e., centerless) special Galilei algebra with space dimension two.
Although subalgebras of this and other Galilei algebras had been classified, e.g.,
in \cite{bara1989a,bara1995a,FushchychBarannykBarannyk},
this step was not properly implemented in the previous papers
on symmetry analysis of the system~\eqref{eq:BurgersSystem}.
We re-classified one- and two-dimensional subalgebras of~$\mathfrak g$, 
additionally taking into account the external automorphisms of~$\mathfrak g$ 
induced by discrete symmetries of~\eqref{eq:BurgersSystem}. 
This allowed us to find out weaknesses of subalgebra lists for~$\mathfrak g$ that exist in the literature. 
We also re-arranged the constructed subalgebra list for it to be more convenient for the further use.

Due to the application of the original technique of selecting ansatzes for Lie reduction~\cite{popo1995b},
reduced systems are of simple form.
In particular, we have constructed eight reduced systems of (1+1)-dimensional PDEs,
which are at most parameterized by constants
and whose set can be partitioned into two subsets constituted by five and three systems, respectively.
Within each of the subsets, Lie symmetry properties of reduced systems can be studied in a uniform way.
We have shown that all Lie symmetries of reduced systems~1.1--1.7
are induced by Lie symmetries of the original system~\eqref{eq:BurgersSystem},
and reduced system~1.8 is linearized by a Hopf--Cole-type transformation.
This have justified that further Lie reductions of reduced systems~1.1--1.8 are not needed.
Moreover, due to the linearizability of reduced system~1.8,
to study Lie reductions of~\eqref{eq:BurgersSystem} to systems of ODEs,
it suffices to use merely two-dimensional subalgebras of~$\mathfrak g$
that contain no vector fields being $G$-equivalent to the shift vector field~$P^y$.
Only six elements of the obtained optimal list of two-dimensional subalgebras of~$\mathfrak g$
satisfy this property.

Although the Burgers system~\eqref{eq:BurgersSystem} is not linearizable as a whole,
it possesses wide linearizable subsets of solutions.
Besides the well-known solution family
that is singled out by the $G$-invariant differential constraint~\eqref{eq:BurgersSystemLinearizationConstraint}
and parameterized, via the substitution~\eqref{eq:2DHopfColeTrans},
by an arbitrary (nonvanishing) solution of the (1+2)-dimensional linear heat equation~\eqref{eq:(1+2)DLinHeatEq},
we have constructed three families of solutions expressed
in terms of one or two arbitrary (nonvanishing) solutions of the (1+1)-dimensional linear heat equation.
These are the families~\eqref{eq:BurgersSystem2ndSolutionSetIn2SolutionsOfLHEq}
and~\eqref{eq:BurgersSystemSolutionSetIn1SolutionOfLHEq} and
the family defined by ansatz~1.8 jointly with the substitution~\eqref{eq:Hopf--Cole-likeTransForReducedSystem1.8},
and they can additionally be extended by point symmetry transformations of the system~\eqref{eq:BurgersSystem}.
Two more families of exact solutions have been constructed
in Section~\ref{sec:BurgersSystemMoreSolutionsWithMethodOfDiffConstraints}
via the reduction of the system~\eqref{eq:BurgersSystem} to the linear heat equations with potentials.
In the course of looking for exact solutions of the system~\eqref{eq:BurgersSystem},
we have permanently been controlling
whether these solutions do not belong to the above linearizable solution subsets of~\eqref{eq:BurgersSystem}.
In contrast to previous papers on the system~\eqref{eq:BurgersSystem},
we were able to construct new families of exact solutions of~\eqref{eq:BurgersSystem} beyond these subsets.

The most efficient tool for finding exact solutions of the Burgers system~\eqref{eq:BurgersSystem}
is the preliminary reduction of this system to single (1+2)-dimensional PDEs via
posing differential constraints and, in some cases, subsequently introducing potentials.
This is the way of deriving the equation~\eqref{eq:BurgersSystemEqLinearizableTo(1+2)DLinHeatEq},
which is associated with the differential constraint~\eqref{eq:BurgersSystemLinearizationConstraint}
and is linearized by a point transformation
to the (1+2)-dimensional linear heat equation~\eqref{eq:(1+2)DLinHeatEq}.
The reduction to the equation~\eqref{eq:BurgersSystemSingle(1+2)DPDE}
under the differential constraint~\eqref{eq:BurgersSystemDiffConstraintForSingle(1+2)DPDE}
was not known before.
Although the equation~\eqref{eq:BurgersSystemSingle(1+2)DPDE} looks artificial,
its complete point symmetry group is unexpectedly wide.
Moreover, the reduction to this equation allows us to construct
the linearizable solution subset~\eqref{eq:BurgersSystem2ndSolutionSetIn2SolutionsOfLHEq}
as well as the family of solutions of~\eqref{eq:BurgersSystem}
expressed in terms of solutions of the simple complex Hamilton--Jacobi equation $f_t+\frac i2(f_z)^2=0$
and thus parameterized by an arbitrary holomorphic function
(see Section~\ref{sec:SolutionsRelatedToComplexHamiltonJacobiEq}).
The reduction of the system~\eqref{eq:BurgersSystem}
to the (1+2)-dimensional Burgers equation~\eqref{(1+2)DBurgersEq} is obvious.
Up to $G$-equivalence, it is related to the differential constraint $v_x=0$,
which can be interpreted as a constraint for partially invariant solutions.

It is an interesting problem to find new reductions of the Burgers system~\eqref{eq:BurgersSystem}
to single (1+2)-dimensional PDEs,
including such reductions obtainable within the framework of partial invariance~\cite{ovsiannikov1982}.

\section*{Acknowledgments}

ROP and CS would like to express their gratitude for the reciprocal hospitality
shown by their two institutions.
The research of ROP was supported by the Austrian Science Fund (FWF),
project P25064.

\end{document}